\numberwithin{equation}{section}
\numberwithin{figure}{section}
\theoremstyle{plain}
\newtheorem{thm}{\protect\theoremname}[section]
  \theoremstyle{plain}
  \newtheorem{cor}[thm]{\protect\corollaryname}
  \theoremstyle{definition}
  \newtheorem{defn}[thm]{\protect\definitionname}
  \theoremstyle{remark}
  \newtheorem{rem}[thm]{\protect\remarkname}
  \theoremstyle{plain}
  \newtheorem{lem}[thm]{\protect\lemmaname}
\def\makebbb#1{
    \expandafter\gdef\csname#1\endcsname{
        \ensuremath{\Bbb{#1}}}
}\makebbb{R}\makebbb{N}\makebbb{Z}\makebbb{C}\makebbb{H}\makebbb{E}\makebbb{H}\makebbb{P}\makebbb{B}\makebbb{Q}\makebbb{E}
  \providecommand{\corollaryname}{Corollary}
  \providecommand{\definitionname}{Definition}
  \providecommand{\lemmaname}{Lemma}
  \providecommand{\remarkname}{Remark}
\providecommand{\theoremname}{Theorem}
\begin{document}

\title{On large deviations for Gibbs measures, mean energy and Gamma-convergence}

\author{Robert J. Berman}
\begin{abstract}
We consider the random point processes on a measure space $(X,\mu_{0})$
defined by the Gibbs measures associated to a given sequence of $N-$particle
Hamiltonians $H^{(N)}.$ Inspired by the method of Messer-Spohn for
proving concentration properties for the laws of the corresponding
empirical measures, we propose a number of hypotheses on $H^{(N)}$
which are quite general, but still strong enough to extend the approach
of Messer-Spohn. The hypotheses are formulated in terms of the asymptotics
of the corresponding mean energy functionals. We show that in many
situations the approach even yields a Large Deviation Principle (LDP)
for the corresponding laws. Connections to Gamma-convergence of (free)
energy type functionals at different levels are also explored. The
focus is on differences between positive and negative temperature
situations, motivated by applications to complex geometry. The results
yield, in particular, large deviation principles at positive as well
as negative temperatures for quite general classes of singular mean
field models with pair interactions, generalizing the 2D vortex model
and Coulomb gases. In a companion paper the results are illustrated
in the setting of Coulomb and Riesz type gases on a Riemannian manifold
$X,$ comparing with the complex geometric setting. 
\end{abstract}

\maketitle

\section{Introduction}

Let $X$ be a compact topological space endowed with a probability
measure $\mu_{0}.$ Given a sequence of symmetric functions $H^{(N)}$
on the $N-$fold products $X^{N},$ assumed measurable with respect
to $\mu_{0}^{\otimes N},$ the corresponding \emph{Gibbs measures}
at inverse temperature $\beta_{N}\in\R$ is defined as the following
sequence of symmetric probability measures on $X^{N}:$ 
\[
\mu_{\beta_{N}}^{(N)}:=e^{-\beta H^{(N)}}\mu_{0}/Z_{N,\beta},
\]
assuming that the partition function $Z_{N,\beta_{N}}$ is finite:
\[
Z_{N,\beta_{N}}:=\int_{X^{N}}e^{-\beta_{N}H^{(N)}}\mu_{0}^{\otimes N}<\infty
\]
We also assume that the following limit exists: 
\[
\beta:=\lim_{N\rightarrow\infty}\beta_{N}\in]-\infty,\infty]
\]
The ensemble $(X^{N},\mu_{\beta_{N}}^{(N)})$ (called the\emph{ canonical
ensemble}) defines a random point process with $N$ particles on $X$
which, from the point of view of statistical mechanics, models $N$
identical particles on $X$ interacting by the\emph{ Hamiltonian }(interaction
energy) $H^{(N)}$ in thermal equilibrium at \emph{inverse temperature}
$\beta_{N}.$ The corresponding \emph{empirical measure} is the random
measure 
\begin{equation}
\delta_{N}:\,\,X^{N}\rightarrow\mathcal{P}(X),\,\,\,(x_{1},\ldots,x_{N})\mapsto\delta_{N}(x_{1},\ldots,x_{N}):=\frac{1}{N}\sum_{i=1}^{N}\delta_{x_{i}}\label{eq:emp measure intro}
\end{equation}
 taking values in the space $\mathcal{P}(X)$ of of all probability
measures on $X$. A recurrent theme in statistical mechanics is to
study the large $N-$limit (i.e. the ``macroscopic limit'') of the
canonical ensemble through the large $N-$limit of the laws of $\delta_{N},$
i.e. through the sequence of probability measures 
\begin{equation}
\Gamma_{N}:=(\delta_{N})_{*}\mu_{\beta_{N}}^{(N)}\label{eq:law intro}
\end{equation}
on $\mathcal{P}(X).$ In many situations the laws $\Gamma_{N}$ can
be shown to concentrate, as $N\rightarrow\infty,$ at the subset of
$\mathcal{P}(X)$ consisting of the minima of a free energy type functional
$F_{\beta}$ on $\mathcal{P}(X);$ we will then say that \emph{``the
sequence $\Gamma_{N}$ has the concentration property''.} For example,
if the functional $F_{\beta}$ has a unique minimizer $\mu_{\beta}$
then it follows that the random measures $\delta_{N}$ converge in
law to a unique deterministic measure $\mu_{\beta}.$ A stronger exponential
notion of concentration, with an explicit speed and rate functional,
is offered by the theory of large deviations, by demanding that the
laws $\Gamma_{N}$ satisfy a\emph{ Large Deviation Principle (LDP)}
with \emph{speed} $r_{N}$ and a \emph{rate functional} $F,$ symbolically
expressed as 
\[
\Gamma_{N}(\mu)\sim e^{-r_{N}F(\mu)},\,\,N\rightarrow\infty
\]
The present paper is inspired by the method introduced by Messer-Spohn
\cite{m-s} to establish the concentration property of the laws of
the empirical measures $\delta_{N}$ using the Gibbs variational principle
combined with properties of the mean energy of the system. In the
original approach in \cite{m-s} $H^{(N)}$ was assumed to be the
\emph{mean field Hamiltonian} corresponding to a continuous pair interaction
potential $W(x,y):$ 
\[
H^{(N)}(x_{1},\ldots,x_{N}):=\frac{1}{N}\sum_{1\leq i,j\leq N}W(x_{i},x_{j}),
\]
 where $\beta_{N}=\beta\in]0,\infty[$ (this is a mean field interaction
in the sense that each particle $x_{i}$ is exposed to the average
of the pair interactions $W(x_{i},x_{j})$ of all $N$ particles,
including the self-interaction). But the approach has also been extended
to handle some situations where $W(x,y)$ is allowed to be singular
\cite{clmp,k2,ki-s}, as in Onsager's vortex model for 2D turbulence
\cite{o}, where $W(x,y)=-\log|x-y|.$ The corresponding mean field
Hamiltonian is then ``renormalized'' by removing the self-interaction
terms in order to make sure that $H^{(N)}$ is generically finite
on $X^{N}.$ The aim of the present note is to
\begin{itemize}
\item Propose a number of quite general hypothesis on $H^{(N)},$ formulated
in terms of the corresponding mean energy functional $E^{(N)}$on
$\mathcal{P}(X^{N}),$ which are strong enough to extend the approach
of Messer-Spohn.
\item Show that the approach also yields the stronger exponential concentration
property in the sense of a LDP, almost ``for free'', in several
situations
\item Explore some relations to the notion of Gamma-convergence of functionals:
first by reformulating the approach of Messer-Spohn in terms of Gamma-convergence
of the induced free energy functionals $F_{\beta_{N}}^{(N)}$ on $\mathcal{P}(\mathcal{P}(X))$
and then by deducing a Gamma-convergence result for the sequence $H^{(N)}/N$
on $X^{N}.$ 
\end{itemize}
The main motivation comes from the probabilistic approach to the construction
of Kähler-Einstein metrics on a complex algebraic manifold $X$, introduced
in\cite{berm8,berm8b}. In that situation the corresponding Hamiltonians
$H^{(N)}$ are highly non-linear and singular (and not of the simpler
mean field type appearing in formula\ref{eq:def of finite order intro}
which is used in the statistical mechanical approach to conformal
geometry introduced in \cite{k2-1}). But still, as shown in \cite{berm8},
building on \cite{b-b}, the sequence $H^{(N)}/N$ Gamma-converges
towards a certain energy type functional $E(\mu)$ on $\mathcal{P}(X).$
Exploiting superharmonicity properties of $H^{(N)}/N$ the corresponding
LDP is then established at a any positive inverse temperature $\beta$
in \cite{berm8}, producing Kähler-Einstein metrics with negative
Ricci curvature in the large $N-$limit. The approach in \cite{berm8}
bypasses the problem of the existence of the macroscopic mean energy
(hypothesis H1 below), which is open in the complex geometric setting.
On the other hand, as discussed in \cite{berm8}, extending the LDP
in \cite{berm8} to negative $\beta,$ which is needed to produce
Kähler-Einstein metrics with \emph{positive }Ricci curvature, necessitates
the existence of the macroscopic mean energy. This is the motivation
behind Theorem \ref{thm:h1 and h4 neg beta intro} below which shows
that, conversely, hypothesis H1 together with the additional hypotheses
H4, implies a LDP for appropriate negative $\beta.$ The hypothesis
H4 is inspired by the energy-entropy compactness results in \cite{bbgez},
which can be viewed as the macroscopic analog of H4 in the complex
geometric setting. Incidentally, in the lowest-dimenaional case when
$X$ is a Riemann surface the probabilistic setting in \cite{berm8,berm8b}
is essentially equivalent to a mean field model with a logarithmic
pair interaction, which is thus similar to Onsager's vortex model
for 2D turbulence \cite{o}. In the latter situation the corresponding
concentration properties were established in \cite{clmp,k2}, for
any $\beta$ above the critical negative temperature (and the LDP
was established using a different method in \cite{bo-g}). This in
line with Onsager's prediction of the existence of macroscopic negative
temperature states. 

Another motivation for the present note comes from random matrix theory
(or more generally Coulomb gases) which can be viewed as a vortex
type model with $\beta_{N}\sim N$ (and in particular $\beta=\infty.$)

The corresponding concentration property was established in \cite{ki-s},
using the method of Messer-Spohn as in \cite{clmp,k2}. Here we observe
that, with a simple modification, the concentration property can be
upgraded to a LDP (Corollary \ref{cor:cor mean field pos beta intro}).
In particular, this allows one to dispense with the technical assumption
that $\beta_{N}\gg\log N$ which is needed in the approach in \cite{ben-g,c-g-z,be-z,se}),
where moreover some regularity properties of the corresponding pair
interaction $W(x,y)$ away from the diagonal are required (see Section
\ref{sub:Relations-to-Gamma intro}). \footnote{The Hamiltonians in the random matrix and Coulomb gas literature are
usually scaled in a different way so that our zero-temperature $(\beta=\infty)$
corresponds to a fixed inverse temperature.}

Yet another motivation comes from approximation and sampling theory
and, in particular, the problem of finding nearly minimal configurations
for a given energy type interaction on a Riemannian manifold, in the
spirit of \cite{h-s,s-k}. 

Let us also point out that that the restriction that $X$ be compact
can be removed if suitable growth-assumptions of $H^{(N)}$ at infinity
are made, as in the settings in $\R^{n}$ considered in \cite{ben-g,c-g-z,be-z,se,ki-s,d-l-r}
(using appropriate tightness estimates). But in order to (hopefully)
convey the conceptual simplicity of the arguments we stick with a
compact $X.$

\subsection{Hypotheses\label{sub:Hypotheses}}

We may as well assume that $X$ coincides with the support of $\mu_{0}.$
In the following $\mu^{(N)}$ will denote a symmetric probability
measure on $X^{N}$ and $Y:=\mathcal{P}(X).$ We recall that the \emph{mean
(microscopic) energy }of $\mu^{(N)},$ in the usual sense of statistical
mechanics, is defined by 

\[
E^{(N)}(\mu^{(N)}):=\frac{1}{N}\int_{X^{N}}H^{(N)}\mu^{(N)}
\]
We introduce the following hypotheses: 
\begin{itemize}
\item (H1) (``existence of a macroscopic mean energy $E(\mu)"$): There
exists a functional $E(\mu)$ on $\mathcal{P}(X)$ such that for any
$\mu$ in $\mathcal{P}(X)$ satisfying $E(\mu)<\infty$
\[
\lim_{N\rightarrow\infty}E^{(N)}(\mu^{\otimes N})=E(\mu)
\]
Moreover, $E(\mu_{0})<\infty$
\item (H2) (``lower bound on the mean energy'') For any sequence of $\mu^{(N)}$
such that $\Gamma_{N}:=(\delta_{N})_{*}\mu^{(N)}\rightarrow\Gamma$
weakly in $\mathcal{P}(Y)$ we have 
\[
\liminf_{N\rightarrow\infty}E^{(N)}(\mu^{(N)})\geq E(\Gamma):=\int_{Y}E(\mu)\Gamma(\mu)
\]

\item (H3) ``Approximation property'': For any $\mu$ such that $E(\mu)<\infty$
there exists a sequence $\mu_{j}$ converging weakly to $\mu$ such
that $\mu_{j}$ is absolutely continuous with respect to $\mu_{0}$
and satisfies $E(\mu_{j})\rightarrow E(\mu).$
\item (H4) (``mean energy/entropy compactness'') If 
\[
E^{(N)}(\mu^{(N)})\leq C,\,\,\,D^{(N)}(\mu^{(N)})\leq C,
\]
 where $D^{(N)}(\mu^{(N)})$ is the mean entropy, then the following
convergence holds, after perhaps replacing $\mu^{(N)}$ by a subsequence
such that $\Gamma_{N}:=(\delta_{N})_{*}\mu^{(N)}\rightarrow\Gamma$
weakly in $\mathcal{P}(Y):$ 
\[
\lim_{N\rightarrow\infty}E^{(N)}(\mu^{(N)})=\int_{Y}E(\mu)\Gamma(\mu)
\]

\end{itemize}
The first hypothesis will be assumed through out the paper. The second
and third one will appear naturally in positive and vanishing temperature
respectively, while the fourth one turns out to be useful in some
case of negative temperature. However, it may very well be that the
hypothesis H4 needs to be weakened a bit in order to increase its
scope. For example, in the proof of the large $N-$ concentration
properties one only needs to assume that H4 holds when $\mu^{(N)}$
is the Gibbs measure corresponding to $H^{(N)}$ (see Remark \ref{rem:only concentration}). 

Of course, the sign of the temperature may be switched by replacing
$H^{(N)}$ with $-H^{(N)},$ but the point is that, in practice, we
will consider settings where the sign of $H^{(N)}$ is fixed by the
requirement that $H^{(N)}$ be bounded from below (which essentially
means that the system is assumed to be stable at zero temperature).

\subsection{Large deviation results\label{sub:Large-deviation-results}}

We start with the simpler setting of positive temperature:
\begin{thm}
\label{thm:h1 and h2 pos beta intro}Suppose that hypotheses H1 and
H2 hold and let $\beta_{N}$ be a sequence of positive numbers tending
to $\beta\in]0,\infty].$ Then the measures $(\delta_{N})_{*}(e^{-\beta_{N}H^{(N)}}\mu_{0}^{\otimes N})$
on $\mathcal{P}(X)$ satisfy, as $N\rightarrow\infty,$ a large deviation
principle (LDP) with \emph{speed} $\beta_{N}N$ and \emph{rate functional}
\begin{equation}
F_{\beta}(\mu)=E(\mu)+\frac{1}{\beta}D_{\mu_{0}}(\mu)\label{eq:free energy func theorem gibbs intro}
\end{equation}
Equivalently, the LDP holds for the corresponding Gibbs measures with
$F_{\beta}$ replaced by $F_{\beta}-\inf F_{\beta}.$ Under the additional
hypothesis H3 the result also holds when $\beta=\infty$ 
\end{thm}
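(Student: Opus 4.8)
\emph{Reduction to a free-energy limit.} The plan is to reduce the statement to the single assertion $-\tfrac1{\beta_N N}\log Z_{N,\beta_N}\to\inf_{Y}F_\beta$, where $Y:=\mathcal P(X)$, and then deduce the LDP by a soft argument. Indeed, for any $G\in C(Y)$ the perturbed Hamiltonian $H^{(N)}_G:=H^{(N)}+NG\circ\delta_N$ has finite partition function and still satisfies H1, H2 (and H3) with macroscopic energy $E+G$: this uses the law of large numbers $\delta_N\to\mu$ under $\mu^{\otimes N}$, the fact that $G\circ\delta_N$ is a bounded continuous function on $X^N$, and that $\int_YG\,d\Gamma_N\to\int_YG\,d\Gamma$ whenever $\Gamma_N\to\Gamma$ in $\mathcal P(Y)$. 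Applying the reduced assertion to $H^{(N)}_G$ therefore gives
\[
\frac1{\beta_N N}\log\int_Y e^{-\beta_N NG}\,d\bigl((\delta_N)_*(e^{-\beta_N H^{(N)}}\mu_0^{\otimes N})\bigr)=\frac1{\beta_N N}\log Z^{G}_{N,\beta_N}\ \longrightarrow\ -\inf_Y(G+F_\beta),
\]
which is exactly the Laplace principle for the measures $(\delta_N)_*(e^{-\beta_N H^{(N)}}\mu_0^{\otimes N})$ with rate $F_\beta$; since $Y$ is compact metrizable and $F_\beta$ is lower semicontinuous (see below), Varadhan's lemma together with its converse promote it to the asserted LDP, and normalising by $Z_{N,\beta_N}$ gives the version for the Gibbs measures with rate $F_\beta-\inf F_\beta$. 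That $E$, and hence $F_\beta$, is lower semicontinuous follows from H1 and H2: for $\mu_j\to\mu$, H1 gives $E^{(N)}(\mu_j^{\otimes N})\to E(\mu_j)$, and a diagonal extraction together with H2 applied to $\Gamma=\delta_\mu$ yields $\liminf_jE(\mu_j)\ge E(\mu)$; in particular $F_\beta$ attains its infimum on $Y$. It remains to prove the two halves of the free-energy limit.

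\emph{Lower bound on $\log Z_{N,\beta_N}$.} For $\nu\ll\mu_0$ with $E(\nu)<\infty$ and $D_{\mu_0}(\nu)<\infty$, I would test the Gibbs (Donsker--Varadhan) variational principle against the product $\nu^{\otimes N}$ --- equivalently, write $Z_{N,\beta_N}\ge\int_{X^N}e^{-\beta_N H^{(N)}-\sum_i\log(d\nu/d\mu_0)(x_i)}\,\nu^{\otimes N}$ and apply Jensen's inequality --- to obtain
\[
\frac1{\beta_N N}\log Z_{N,\beta_N}\ \ge\ -E^{(N)}(\nu^{\otimes N})-\frac1{\beta_N}D_{\mu_0}(\nu)\ \longrightarrow\ -E(\nu)-\frac1\beta D_{\mu_0}(\nu)=-F_\beta(\nu)
\]
by H1. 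Taking the supremum over such $\nu$ gives $\liminf_N\tfrac1{\beta_N N}\log Z_{N,\beta_N}\ge-\inf_YF_\beta$ when $\beta<\infty$ (the restrictions on $\nu$ cost nothing, as $F_\beta\equiv+\infty$ otherwise). When $\beta=\infty$ one instead uses H3 to approximate a minimiser of $E=F_\infty$ by measures $\nu_j\ll\mu_0$ with $E(\nu_j)\to\inf_YE$, modifying them by truncation so that also $D_{\mu_0}(\nu_j)<\infty$, and uses $\tfrac1{\beta_N}D_{\mu_0}(\nu_j)\to0$.

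\emph{Upper bound on $\log Z_{N,\beta_N}$.} By the Gibbs variational principle $-\tfrac1{\beta_N N}\log Z_{N,\beta_N}=\inf_{\mu^{(N)}}[E^{(N)}(\mu^{(N)})+\tfrac1{\beta_N}D^{(N)}(\mu^{(N)})]$, the infimum over symmetric probability measures on $X^N$ being attained at the Gibbs measure $\mu^{(N)}_{\beta_N}$ and being $\le E^{(N)}(\mu_0^{\otimes N})\to E(\mu_0)<\infty$. Taking $\mu^{(N)}=\mu^{(N)}_{\beta_N}$ and passing to a subsequence along which $\Gamma_N:=(\delta_N)_*\mu^{(N)}_{\beta_N}\to\Gamma$ in $\mathcal P(Y)$, I would invoke two inputs: H2 gives $\liminf_NE^{(N)}(\mu^{(N)}_{\beta_N})\ge\int_YE\,d\Gamma$, and an entropy-chaos estimate gives $\liminf_ND^{(N)}(\mu^{(N)}_{\beta_N})\ge\int_YD_{\mu_0}(\nu)\,d\Gamma(\nu)$. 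For the latter, super-additivity of relative entropy with respect to $\mu_0^{\otimes N}$ together with the symmetry of $\mu^{(N)}_{\beta_N}$ gives $D^{(N)}(\mu^{(N)}_{\beta_N})\ge\tfrac{\lfloor N/k\rfloor}{N}\,\mathrm{Ent}(\mu^{(N),k}_{\beta_N}\,|\,\mu_0^{\otimes k})$ for each fixed $k$; the $k$-marginals $\mu^{(N),k}_{\beta_N}$ converge as $N\to\infty$ to the exchangeable mixture $\int_Y\nu^{\otimes k}\,d\Gamma(\nu)$ (Hewitt--Savage / Diaconis--Freedman); lower semicontinuity of $\mathrm{Ent}(\cdot\,|\,\mu_0^{\otimes k})$ passes to the limit in $N$; and letting $k\to\infty$ one uses that the specific relative entropy of the exchangeable measure $\int_Y\nu^{\otimes\infty}\,d\Gamma(\nu)$ equals $\int_YD_{\mu_0}(\nu)\,d\Gamma(\nu)$. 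Adding the two inequalities gives $\liminf_N\bigl(-\tfrac1{\beta_N N}\log Z_{N,\beta_N}\bigr)\ge\int_YF_\beta\,d\Gamma\ge\inf_YF_\beta$, as required. This last step is the main obstacle: one must upgrade the microscopic lower bounds --- H2 for the energy, super-additivity for the mean entropy --- to the macroscopic functionals $\int_YE\,d\Gamma$ and $\int_YD_{\mu_0}\,d\Gamma$, for which the de Finetti / entropy-chaos machinery is essential and the hypotheses H1--H3 are calibrated precisely. The endpoint $\beta=\infty$ is a secondary subtlety, requiring the absolutely continuous approximants from H3 to be further modified so as to have finite relative entropy.
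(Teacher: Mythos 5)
Your proposal is correct and follows essentially the same route as the paper: the Gibbs variational principle combined with Bryc/Laplace-principle duality after perturbing $H^{(N)}$ by $N\,\Phi\circ\delta_{N}$, product measures $\nu^{\otimes N}$ used as recovery/test measures via H1 (with $D^{(N)}(\nu^{\otimes N})=D(\nu)$), the energy lower bound from H2 along subsequences with $\Gamma_{N}\to\Gamma$, and the mean-entropy lower bound $\liminf_{N}D^{(N)}\geq\int_{Y}D_{\mu_{0}}\,d\Gamma$, which the paper simply cites (Robinson--Ruelle subadditivity, cf.\ Hauray--Mischler) and you re-derive by super-additivity plus de Finetti/Hewitt--Savage and affinity of the specific entropy. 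The only place where you are as terse as the paper is the $\beta=\infty$ endpoint: the H3 approximants must in addition be arranged to have finite relative entropy, and your truncation fix would still require knowing that $E$ does not jump up under truncation of the density (not guaranteed by the abstract hypotheses alone) --- exactly the point the paper's ``simple diagonal approximation argument using H3'' also glosses over.
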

The previous theorem in particular applies to the following ``finite
order'' Hamiltonians of mean field type. Given symmetric functions
$W_{m}$ on $X^{m}$ for $m\leq M$ set 
\begin{equation}
H^{(N)}(x_{1},...x_{N}):=\frac{1}{N^{(m-1)}}\sum_{m=1}^{M}\sum_{I}W_{m}(x_{i_{1}},...,x_{i_{m}}),\label{eq:def of finite order intro}
\end{equation}
 where the inner sum rums over all multi indices $I=(i_{1_{i}},...,i_{m})$
of length $m$ and with the property that no two indices of $I$ coincide.
Then it is easy to verify H1 and H2 above with 
\[
E(\mu):=\sum_{m=1}^{M}\int_{X^{m}}W_{m}\mu^{\otimes m}
\]
The main case of interest is when $M=2$ and $H^{(N)}$ is a sum of
pair-interactions $W(x_{i},x_{j}),$ scaled by $1/N.$ But since it
will require no extra effort in the proofs we consider the more general
``finite order setting''.
\begin{cor}
\label{cor:cor mean field pos beta intro}Let $W(x_{1},..x_{m})$
be a symmetric lower semi-continuous function on $X^{m}$ and $\beta_{N}$
a sequence of positive numbers tending to $\beta\in]0,\infty[,$ Assume
that the Gibbs measures $\mu_{\beta_{N}}^{(N)}$ of the corresponding
mean field Hamiltonians are well-defined probability measures. Then
the laws $(\delta_{N})_{*}\mu_{\beta}^{(N)}$ satisfy a LDP with speed
$\beta_{N}N$ and rate functional $F_{\beta},$ with 
\[
E(\mu):=\int_{X^{m}}W\mu^{\otimes m}
\]
The corresponding result also holds for $\beta=\infty$ if the regularization
hypothesis H3 holds.
\end{cor}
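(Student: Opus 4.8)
The plan is to deduce the corollary from Theorem~\ref{thm:h1 and h2 pos beta intro} by verifying its hypotheses H1 and H2 (and, for $\beta=\infty$, simply recording that H3 is taken as an assumption). Throughout, write $H^{(N)}=N^{1-m}\sum_{I}W(x_{i_{1}},\dots,x_{i_{m}})$, the sum ranging over the $N(N-1)\cdots(N-m+1)$ multi-indices $I$ of length $m$ with pairwise distinct entries, and $E(\mu)=\int_{X^{m}}W\,\mu^{\otimes m}$. Since $X^{m}$ is compact and $W$ lower semicontinuous, $W$ is bounded below, so $E(\mu)$ is always well defined in $(-\infty,+\infty]$.

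Hypothesis H1 I would check by a one-line computation: by symmetry of $\mu^{\otimes N}$ and Fubini,
\[
E^{(N)}(\mu^{\otimes N})=\frac{1}{N^{m}}\sum_{I}\int_{X^{m}}W\,\mu^{\otimes m}=\frac{N(N-1)\cdots(N-m+1)}{N^{m}}\,E(\mu),
\]
and the combinatorial prefactor tends to $1$, so $E^{(N)}(\mu^{\otimes N})\to E(\mu)$ whenever $E(\mu)<\infty$. (The requirement $E(\mu_{0})<\infty$ in H1 is the mild integrability condition on $W$ one assumes here, which holds in the cases of interest, e.g. log- and Riesz-type interactions against a reasonable reference measure.)

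The heart of the matter is H2, and the point to get right is that $W$ is only lower semicontinuous, so $\mu\mapsto\int W\,\mu^{\otimes m}$ is not weakly continuous and one cannot apply the weak convergence $\Gamma_{N}\to\Gamma$ to it directly. I would get around this by approximating $W$ from below: fix continuous functions $W_{k}\uparrow W$ pointwise (for instance the inf-convolutions $W_{k}(z)=\inf_{w}\bigl(W(w)+k\,d(z,w)\bigr)$, which are legitimate once $W$ is bounded below; on a non-metrizable compact $X$ one works instead with the upward-directed net of all continuous functions below $W$). For fixed $k$ the identity $\int_{X^{m}}W_{k}\,\delta_{N}^{\otimes m}=N^{-m}\sum_{\mathrm{all}\ I}W_{k}(x_{i_{1}},\dots,x_{i_{m}})$ splits the sum over all $N^{m}$ multi-indices into the off-diagonal part, which is exactly $N^{1-m}H^{(N)}$ with $W$ replaced by $W_{k}$, plus a diagonal remainder of at most $\binom{m}{2}N^{m-1}$ terms each bounded by $\|W_{k}\|_{\infty}$, hence of size $O(\|W_{k}\|_{\infty}/N)$. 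Combined with $W_{k}\le W$ this gives the pointwise bound $\tfrac1N H^{(N)}(x)\ge\int_{X^{m}}W_{k}\,\delta_{N}(x)^{\otimes m}-c_{m,k}/N$ on $X^{N}$; integrating against $\mu^{(N)}$ and pushing forward by $\delta_{N}$ yields
\[
E^{(N)}(\mu^{(N)})\ \ge\ \int_{Y}\Phi_{k}\,d\Gamma_{N}-\frac{c_{m,k}}{N},\qquad\Phi_{k}(\nu):=\int_{X^{m}}W_{k}\,\nu^{\otimes m}.
\]
Here $\Phi_{k}$ is bounded and continuous on $Y=\mathcal{P}(X)$ (weak continuity of $\nu\mapsto\nu^{\otimes m}$ together with $W_{k}\in C(X^{m})$), so letting $N\to\infty$ gives $\liminf_{N}E^{(N)}(\mu^{(N)})\ge\int_{Y}\Phi_{k}\,d\Gamma$; then letting $k\to\infty$ and applying monotone convergence twice ($W_{k}\uparrow W$ on $X^{m}$, hence $\Phi_{k}\uparrow E$ on $Y$, legitimate since everything is bounded below by $\inf W$) produces $\liminf_{N}E^{(N)}(\mu^{(N)})\ge\int_{Y}E(\nu)\,\Gamma(d\nu)$, which is precisely H2.

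With H1 and H2 in hand, Theorem~\ref{thm:h1 and h2 pos beta intro} applies and yields the claimed LDP for $\beta\in(0,\infty)$ with rate functional $F_{\beta}=E+\beta^{-1}D_{\mu_{0}}$; for $\beta=\infty$ one invokes the same theorem together with the assumed hypothesis H3. I expect the only genuine obstacle to be the lower-semicontinuity issue in H2, which the sandwiching by the continuous approximants $W_{k}$ resolves — the self-interaction (diagonal) terms being harmless precisely because each $W_{k}$ is bounded.
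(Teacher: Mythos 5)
Your argument is correct and is essentially the proof the paper gives: H1 by the direct Fubini/Tonelli computation, and H2 by approximating the lower semi-continuous $W$ from below by continuous functions, absorbing the diagonal (self-interaction) terms into an $O(1/N)$ error, passing to the limit via weak convergence of $\Gamma_{N}$ against the continuous functional on $\mathcal{P}(X)$, and then using monotone convergence, before invoking Theorem \ref{thm:h1 and h2 pos beta intro} (with H3 assumed for $\beta=\infty$). Your explicit combinatorial prefactor and the remark on metrizability only make explicit what the paper leaves implicit.
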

In the Euclidean setting and with $M=2$ the previous corollary was
established very recently in\cite{d-l-r} using different methods
(based on control theory).

We next turn to the case of negative temperature.
\begin{thm}
\label{thm:h1 and h4 neg beta intro}Suppose that hypothesis H1 and
H4 hold and fix a negative number $\beta_{0}.$ Then the following
is equivalent: 
\begin{itemize}
\item For any $\beta>\beta_{0}$ we have $Z_{N,\beta}\leq C_{\beta}^{N}$ 
\item For any $\beta>\beta_{0}$ the measures $(\delta_{N})_{*}\left(e^{-\beta H^{(N_{k})}}\mu_{0}^{\otimes N_{k}}\right)$
on $\mathcal{P}(X)$ satisfy a LDP with speed $N$ and rate functional
\[
\beta F_{\beta}(\mu)=\beta E(\mu)+D_{\mu_{0}}(\mu)
\]

\end{itemize}
\end{thm}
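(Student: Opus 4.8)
The plan is to prove the large deviation principle directly for the unnormalised measures $\Lambda_{N}:=(\delta_{N})_{*}(e^{-\beta H^{(N)}}\mu_{0}^{\otimes N})$ on $\mathcal{P}(X)$, with speed $N$ and rate functional $\beta F_{\beta}=\beta E+D_{\mu_{0}}$; the statement for the Gibbs measures then follows by normalisation, together with $\frac{1}{N}\log Z_{N,\beta}\to-\inf_{\mathcal{P}(X)}\beta F_{\beta}$. With this in hand the implication from the LDP to the bound on $Z_{N,\beta}$ is immediate: taking $\mathcal{P}(X)$ (which is both open and closed) in the two large deviation estimates forces $\frac{1}{N}\log Z_{N,\beta}\to-\inf_{\mathcal{P}(X)}\beta F_{\beta}$, which is finite, so $Z_{N,\beta}\le C_{\beta}^{N}$. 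For the converse I fix $\beta\in(\beta_{0},0)$. The tool used repeatedly is the constrained Gibbs variational principle
\[
\frac{1}{N}\log\int_{\{\delta_{N}\in B\}}e^{-\beta H^{(N)}}\mu_{0}^{\otimes N}=\sup\Big\{-\beta E^{(N)}(\nu^{(N)})-D^{(N)}(\nu^{(N)}):\nu^{(N)}\in\mathcal{P}(X^{N}),\ (\delta_{N})_{*}\nu^{(N)}\ \text{supported in}\ B\Big\},
\]
valid for Borel $B\subseteq\mathcal{P}(X)$, where $D^{(N)}(\nu^{(N)}):=\frac{1}{N}D_{\mu_{0}^{\otimes N}}(\nu^{(N)})$ is the mean entropy of H4.

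For the large deviation lower bound I fix an open $U$ and $\mu\in U$ with $E(\mu)<\infty$ and $D_{\mu_{0}}(\mu)<\infty$, and test the variational principle against $\nu^{(N)}:=\mu^{\otimes N}$ conditioned on $\{\delta_{N}\in U\}$, an event of $\mu^{\otimes N}$-probability tending to $1$ by the law of large numbers (this $\nu^{(N)}$ is symmetric and $(\delta_{N})_{*}\nu^{(N)}\to\delta_{\mu}$). A short computation using that $H^{(N)}/N$ is bounded below and H1 gives $\limsup_{N}E^{(N)}(\nu^{(N)})\le E(\mu)$ and $D^{(N)}(\nu^{(N)})\to D_{\mu_{0}}(\mu)$, so H4 applies and upgrades this to $E^{(N)}(\nu^{(N)})\to E(\mu)$. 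Feeding this in yields $\liminf_{N}\frac{1}{N}\log\Lambda_{N}(U)\ge-\beta E(\mu)-D_{\mu_{0}}(\mu)$, and optimising over $\mu$ gives the lower bound. (This part does not use the hypothesis on $Z_{N,\beta}$; note that at negative $\beta$ one needs $E^{(N)}(\nu^{(N)})$ to converge to $E(\mu)$, not merely to be asymptotically $\le E(\mu)$ as at positive temperature, so H4 here plays the role H2 plays there.)

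The large deviation upper bound for a closed set $A$ is the crux and the only place the hypothesis on $Z_{N,\beta}$ enters. Put $a:=\limsup_{N}\frac{1}{N}\log\Lambda_{N}(A)$; assuming $a>-\infty$, pass to a subsequence realising the limsup and choose symmetric near-maximisers $\nu^{(N)}$ in the variational principle for $B=A$ (symmetry being free by symmetrising), so that $(\delta_{N})_{*}\nu^{(N)}$ is supported in $A$ and $\beta E^{(N)}(\nu^{(N)})+D^{(N)}(\nu^{(N)})\to-a$. The key step is an a priori upper bound on $E^{(N)}(\nu^{(N)})$, obtained by playing $\beta$ off against a colder temperature: choosing $\beta'$ with $\beta_{0}<\beta'<\beta$, the variational inequality at $\beta'$ together with $Z_{N,\beta'}\le C_{\beta'}^{N}$ gives $\beta'E^{(N)}(\nu^{(N)})+D^{(N)}(\nu^{(N)})\ge-\log C_{\beta'}$, and subtracting this from the previous relation, using $\beta-\beta'>0$, bounds $E^{(N)}(\nu^{(N)})$ above; combined with the lower bound on $E^{(N)}(\nu^{(N)})$ coming from $H^{(N)}/N$ being bounded below, and then $D^{(N)}(\nu^{(N)})=-\beta E^{(N)}(\nu^{(N)})-\frac{1}{N}\log\Lambda_{N}(A)+o(1)$, also $D^{(N)}(\nu^{(N)})$ is bounded. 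Now H4 applies: along a further subsequence $(\delta_{N})_{*}\nu^{(N)}\to\Gamma$ (supported in the closed set $A$) with $E^{(N)}(\nu^{(N)})\to\int_{Y}E\,d\Gamma$; combining with the lower semicontinuity of the mean entropy, $\liminf_{N}D^{(N)}(\nu^{(N)})\ge\int_{Y}D_{\mu_{0}}\,d\Gamma$, gives $-a\ge\int_{Y}(\beta E+D_{\mu_{0}})\,d\Gamma=\int_{Y}\beta F_{\beta}\,d\Gamma\ge\inf_{A}\beta F_{\beta}$, i.e. $a\le-\inf_{A}\beta F_{\beta}$, and the two bounds together give the LDP.

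The step I expect to be the main obstacle is exactly this a priori energy bound for the near-maximisers. At a single negative temperature the bound $Z_{N,\beta}\le C_{\beta}^{N}$, via the variational principle and $D^{(N)}\ge0$, only controls $E^{(N)}(\nu^{(N)})$ from below, which is useless for invoking the compactness hypothesis H4; it is the two-temperature comparison that converts the \emph{family} of partition-function bounds into the needed upper bound, and this is precisely why the hypothesis is imposed for all $\beta>\beta_{0}$ rather than for a single $\beta$. The only other point needing care is the lower semicontinuity $\liminf_{N}D^{(N)}(\nu^{(N)})\ge\int_{Y}D_{\mu_{0}}\,d\Gamma$ along $(\delta_{N})_{*}\nu^{(N)}\to\Gamma$, which is the standard semicontinuity property of the mean entropy already underlying Theorem~\ref{thm:h1 and h2 pos beta intro} and which I would simply invoke.
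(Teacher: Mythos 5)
Your route is genuinely different from the paper's. The paper never proves the two LDP bounds directly on open and closed sets: it shows that the $\Phi$-perturbed free energies $-\frac{1}{N}\log Z_{N,\beta}[\Phi]$ converge to $\inf_{\mathcal{P}(X)}\left(\beta E+D_{\mu_{0}}+\Phi\right)$ for every $\Phi\in C^{0}(\mathcal{P}(X))$ -- by applying the unconstrained Gibbs variational principle to the Gibbs measures $\mu_{\beta}^{(N)}$ of the perturbed Hamiltonians, bounding their mean energy through exactly your two-temperature comparison (there written as the Gibbs principle at $\beta-\epsilon$ together with $Z_{N,\beta-\epsilon}\leq C_{\epsilon}^{N}$, and testing against $\mu_{0}^{\otimes N}$ via H1), then invoking H4 for those Gibbs measures -- and then obtains the LDP from Bryc's lemma and Legendre--Fenchel duality (Lemma \ref{lem:(Bryc+Gibbs):-Suppose-that}). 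So your identification of the two-temperature trick as the crux, and of why the partition-function bound is assumed for all $\beta>\beta_{0}$, matches the paper; what differs is that you apply it to near-maximisers of a constrained variational problem and prove the upper bound directly, closing the estimate with the Robinson--Ruelle lower semicontinuity of the mean entropy. That upper-bound argument is sound, and your appeal there to $H^{(N)}/N$ being bounded below is not even needed: since $D^{(N)}\geq0$ and $\beta E^{(N)}(\nu^{(N)})+D^{(N)}(\nu^{(N)})\rightarrow-a$ with $\beta<0$, the energy is automatically bounded below along the subsequence.

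The gap is in your lower bound. You condition $\mu^{\otimes N}$ on $\{\delta_{N}\in U\}$ and claim $\limsup_{N}E^{(N)}(\nu^{(N)})\leq E(\mu)$ ``using that $H^{(N)}/N$ is bounded below and H1''. Uniform boundedness below of $H^{(N)}/N$ is not among the hypotheses of Theorem \ref{thm:h1 and h4 neg beta intro} (only H1, H4 and the bound on $Z_{N,\beta}$ are assumed); it appears in Section \ref{sub:Hypotheses} only as a remark about what holds ``in practice''. Without it the tail term $\frac{1}{N}\int_{\{\delta_{N}\notin U\}}H^{(N)}\mu^{\otimes N}$ is uncontrolled: for $\beta<0$ the bound $Z_{N,\beta}\leq C_{\beta}^{N}$ only constrains the positive part of $H^{(N)},$ so nothing in the stated hypotheses prevents this term from being very negative, in which case $E^{(N)}(\nu^{(N)})$ need not be bounded above and H4 cannot be invoked for the conditioned measures. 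This is precisely the difficulty the paper's Bryc-based proof avoids: it never restricts to events $\{\delta_{N}\in U\},$ it only perturbs $H^{(N)}$ by $N\delta_{N}^{*}(\Phi)$ with $\Phi$ continuous (hence bounded) and tests the unconstrained variational principle with product measures, where H1 applies as is. To turn your argument into a proof of the theorem as stated you must either add the boundedness-below assumption explicitly, or control the tail contribution of $H^{(N)}$ on $\{\delta_{N}\notin U\}$ under $\mu^{\otimes N}$ from the given hypotheses (which is not automatic), or switch at that point to the Laplace-functional formulation. A smaller point: your parenthetical claim that the lower bound uses no information on $Z_{N,\beta}$ is not quite accurate either, since to identify the supremum over $\{\mu:E(\mu)<\infty,\,D_{\mu_{0}}(\mu)<\infty\}$ with $\inf_{U}(\beta E+D_{\mu_{0}})$ one should rule out $\mu$ with $D_{\mu_{0}}(\mu)<\infty$ but $E(\mu)=\infty$ (for which the rate would be $-\infty$); this follows by testing the variational inequality at some $\beta'\in]\beta_{0},0[$ with $\mu^{\otimes N}$ and using H1, i.e.\ it does use the partition-function hypothesis.
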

\begin{cor}
\label{cor:mean field neg beta intro}Let $W(x,y)$ be a symmetric
measurable function on $(X^{2},\mu_{0}^{\otimes2})$ such that for
any $\beta\in\R$ 
\begin{equation}
\int_{X^{2}}e^{-\beta W}\mu_{0}^{\otimes2}<\infty\label{eq:integr cond in cor}
\end{equation}
Then, for any $\beta\in]-\infty,\infty[,$ the Gibbs measures $\mu_{\beta}^{(N)}$
of the mean field Hamiltonians corresponding to $W$ satisfy a LDP
with speed $N$ and rate functional $\beta F_{\beta},$ with 
\[
E(\mu):=\int_{X^{2}}W\mu^{\otimes2}
\]

\end{cor}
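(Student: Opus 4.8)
The plan is to deduce the corollary from Theorem~\ref{thm:h1 and h4 neg beta intro}, applied with $\beta_{0}$ an arbitrary negative number and then letting $\beta_{0}\to-\infty$; concretely, it suffices to verify hypotheses H1 and H4 and to prove the bound $Z_{N,\beta}\leq C_{\beta}^{N}$ for \emph{every} $\beta\in\R$. Since $W$ is assumed only measurable, hypothesis H2 is not available, so this route, rather than Theorem~\ref{thm:h1 and h2 pos beta intro}, is the natural one also in the positive-temperature range. Throughout, the mean field Hamiltonian is $H^{(N)}=\frac{1}{N}\sum_{i\neq j}W(x_{i},x_{j})=\frac{2}{N}\sum_{i<j}W(x_{i},x_{j})$ and we set $E(\mu):=\int_{X^{2}}W\,\mu^{\otimes2}$. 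The elementary inequality $|t|\leq e^{t}+e^{-t}$ together with \eqref{eq:integr cond in cor} shows $W\in L^{1}(\mu_{0}^{\otimes2})$; in particular $W$ is finite $\mu_{0}^{\otimes2}$-a.e.\ and $E(\mu_{0})<\infty$. A one-line computation gives $E^{(N)}(\mu^{\otimes N})=\frac{N-1}{N}\int_{X^{2}}W\,\mu^{\otimes2}\to E(\mu)$ whenever $E(\mu)<\infty$, which is H1.

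For the partition function I would use the classical matching argument. Decompose the complete graph on $\{1,\dots,N\}$ into $p$ matchings $M_{1},\dots,M_{p}$ (with $p=N-1$ perfect matchings if $N$ is even, and $p=N$ matchings of size $(N-1)/2$ if $N$ is odd), write $-\beta H^{(N)}=-\tfrac{2\beta}{N}\sum_{c=1}^{p}\sum_{\{i,j\}\in M_{c}}W(x_{i},x_{j})$, and apply Hölder's inequality with $p$ equal exponents. The point is that Hölder exactly ``uses up'' the prefactor $1/N$, so, since each $M_{c}$ consists of disjoint pairs whose contributions factorize, $Z_{N,\beta}$ is bounded by a power (at most $N$) of $\int_{X^{2}}e^{-sW}\mu_{0}^{\otimes2}$ with $s$ lying between $\beta$ and $2\beta$; by convexity of $s\mapsto e^{-sW}$ and \eqref{eq:integr cond in cor} this is at most $\int_{X^{2}}(e^{-\beta W}+e^{-2\beta W})\mu_{0}^{\otimes2}<\infty$, uniformly in $N$. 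Hence $Z_{N,\beta}\leq C_{\beta}^{N}$, which in particular guarantees that the Gibbs measures $\mu_{\beta}^{(N)}$ are well defined.

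The real content is hypothesis H4, and this is where the main difficulty lies. Let $\mu^{(N)}$ be symmetric with $E^{(N)}(\mu^{(N)})\leq C$ and mean entropy $D^{(N)}(\mu^{(N)})\leq C$, let $\mu_{2}^{(N)}\in\mathcal{P}(X^{2})$ be its two-point marginal, so $E^{(N)}(\mu^{(N)})=\frac{N-1}{N}\int_{X^{2}}W\,d\mu_{2}^{(N)}$, and pass to a subsequence along which $\Gamma_{N}:=(\delta_{N})_{*}\mu^{(N)}\to\Gamma$. I would argue in three steps. First, superadditivity of the relative entropy over the marginals of an exchangeable measure gives $\mathrm{Ent}(\mu_{2}^{(N)}\mid\mu_{0}^{\otimes2})\leq 2\,D^{(N)}(\mu^{(N)})\leq 2C$, and the same bound passes to the weak limit $\bar{\mu}_{2}$ by lower semicontinuity. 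Second, for $\phi\in C(X^{2})$ a direct computation gives $\int\phi\,d\mu_{2}^{(N)}=\tfrac{N}{N-1}\int_{Y}\langle\phi,\nu^{\otimes2}\rangle\,d\Gamma_{N}(\nu)+O(1/N)$, and since $\nu\mapsto\langle\phi,\nu^{\otimes2}\rangle$ is bounded and continuous on $Y$ this converges to $\langle\phi,\bar{\mu}_{2}\rangle$ with $\bar{\mu}_{2}:=\int_{Y}\nu^{\otimes2}\,d\Gamma(\nu)$; combined with the entropy bound (de la Vall\'ee--Poussin criterion) this upgrades to weak-$L^{1}(\mu_{0}^{\otimes2})$ convergence of the densities. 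Third, for the truncation $W_{R}:=(-R)\vee W\wedge R$, the entropic inequality $\int h\,d\mu_{2}^{(N)}\leq\mathrm{Ent}(\mu_{2}^{(N)}\mid\mu_{0}^{\otimes2})+\log\int e^{h}\,\mu_{0}^{\otimes2}$ applied with $h=a_{R}\,|W|\mathbf{1}_{\{|W|>R\}}$ yields $\int_{\{|W|>R\}}|W|\,d\mu_{2}^{(N)}\leq a_{R}^{-1}\bigl(2C+\log\bigl(1+\int_{\{|W|>R\}}e^{a_{R}|W|}\mu_{0}^{\otimes2}\bigr)\bigr)$; since $e^{a|W|}\in L^{1}(\mu_{0}^{\otimes2})$ for \emph{every} $a$, one may choose $a_{R}\to\infty$ slowly enough that $\int_{\{|W|>R\}}e^{a_{R}|W|}\mu_{0}^{\otimes2}\leq 1$, so that this tail tends to $0$ as $R\to\infty$ uniformly in $N$ (and likewise with $\bar{\mu}_{2}$). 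Writing $W=W_{R}+(W-W_{R})$, the $W_{R}$-integral converges by the weak-$L^{1}$ convergence of the densities (it matters here that $W_{R}$ need only be bounded, not continuous) while the two remainders are $\leq\eta_{R}\to0$ uniformly in $N$; letting $R\to\infty$ gives $\int_{X^{2}}W\,d\mu_{2}^{(N)}\to\int_{X^{2}}W\,d\bar{\mu}_{2}=\int_{Y}E(\mu)\,d\Gamma(\mu)$, which is H4.

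With H1 and H4 established and $Z_{N,\beta}\leq C_{\beta}^{N}$ for all $\beta\in\R$, Theorem~\ref{thm:h1 and h4 neg beta intro} (with $\beta_{0}$ arbitrarily negative) gives, for every $\beta\in\R$, the LDP with speed $N$ and rate functional $\beta F_{\beta}=\beta E+D_{\mu_{0}}$ for the measures $(\delta_{N})_{*}(e^{-\beta H^{(N)}}\mu_{0}^{\otimes N})$, and hence, after subtracting the constant $\inf\beta F_{\beta}$, for the normalized Gibbs measures $\mu_{\beta}^{(N)}$, which is the assertion. I expect the last step, step three of the verification of H4, to be the genuine obstacle: it is precisely there that the full exponential integrability \eqref{eq:integr cond in cor}, rather than merely $W\in L^{1}(\mu_{0}^{\otimes2})$, enters, and the rate $a_{R}\to\infty$ must be calibrated against the decay of $\int_{\{|W|>R\}}e^{a|W|}\mu_{0}^{\otimes2}$.
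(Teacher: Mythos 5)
Your proposal is correct, and it follows the paper's overall architecture (verify H1, prove $Z_{N,\beta}\leq C_{\beta}^{N}$ for every $\beta\in\R$, verify H4, then apply Theorem \ref{thm:h1 and h4 neg beta intro} with $\beta_{0}$ arbitrarily negative and pass to the normalized Gibbs measures as in Lemma \ref{lem:ldp for nonnormalized meas}), but both technical ingredients are implemented differently. For the partition function the paper uses the arithmetic--geometric mean inequality over particles, $e^{-\beta H^{(N)}}\leq\frac{1}{N}\sum_{i}e^{f_{i}}$ with $f_{i}=-\beta\sum_{j\neq i}W(x_{i},x_{j})$, and then factorizes the inner integral; your decomposition of the complete graph into matchings plus H\"older is a valid alternative, and it has the merit of using only the two-point condition \eqref{eq:integr cond in cor} at parameters up to $2\beta$, whereas the paper's factorization, taken literally, produces $\int_{X}\bigl(\int_{X}e^{-\beta W(x,y)}\mu_{0}(dy)\bigr)^{N-1}\mu_{0}(dx)$, which is of the form $C^{N-1}$ under the uniform-in-$x$ hypothesis of Theorem \ref{thm:negive beta uniform integr} but needs an extra step under \eqref{eq:integr cond in cor} alone. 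For H4 the paper argues by Orlicz duality: the entropy bound on the two-point densities places them in a fixed ball of $L_{\theta^{*}}$ with $\theta(s)=e^{s}-s-1$, the hypothesis for all $\beta$ places $W$ in the small Orlicz space $M_{\theta}$, and the duality $(M_{\theta})^{*}\simeq L_{\theta^{*}}$ of \cite{r-r,le} identifies the weak limit of $\int W\rho_{N}\mu_{0}^{\otimes2}$. Your route --- uniform integrability of $\rho_{N}$ (de la Vall\'ee--Poussin/Dunford--Pettis) together with the entropy inequality $\int h\,d\nu\leq \mathrm{Ent}(\nu\mid\mu_{0}^{\otimes2})+\log\int e^{h}\mu_{0}^{\otimes2}$ applied to $h=a_{R}|W|\mathbf{1}_{\{|W|>R\}}$ with a calibrated $a_{R}\rightarrow\infty$ --- exploits exactly the same $L\log L$ versus exponential-class pairing, but is self-contained (no appeal to Orlicz duality theorems) and makes explicit where the full strength of \eqref{eq:integr cond in cor}, rather than mere integrability of $W$, enters; the price is the truncation bookkeeping, and, like the paper, you only need the entropy half of H4's hypotheses. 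Your identification of the limit of the two-point marginals with $\int_{Y}\nu^{\otimes2}\,d\Gamma(\nu)$ and the final normalization step are both fine.
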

For example, the previous corollary applies when $(X,\mu_{0})$ is
a domain in $\R^{D}$ endowed with the Lebesgue measure and $W$ is
any symmetric function with a gradient in $L_{loc}^{D}(\R^{D})$ (then
the exponential integrability condition follows from Trudinger's inequality).
The key observation in the proof of Cor \ref{cor:mean field neg beta intro}
is that the first point in Theorem \ref{thm:h1 and h4 neg beta intro}
always implies, ``for free'', a uniform estimate in the Orlitz (Zygmund)
space $L^{1}\mbox{Log}L^{1},$ so that some general Orlitz space duality
results \cite{r-r,le} can be exploited in order to verify the hypothesis
H4. It seems natural to ask if the previous corollary can be generalized
to the case when the integrability condition \ref{eq:integr cond in cor}
is only assumed to hold for $\beta>\beta_{0},$ for some (finite)
negative number $\beta_{0}?$ The following theorem gives an affirmative
answer if one strengthens the integrability condition a bit:
\begin{thm}
\label{thm:negive beta uniform integr}Let $X$ be a compact metric
space and $W$ a lower semi-continuous symmetric measurable function
on $X^{2}$ and $\beta_{0}$ a negative number such that
\[
\sup_{x\in X}\int_{X}e^{-\beta_{0}W(x,y)}\mu_{0}(y)<\infty
\]
Then, for any $\beta>\beta_{0}$ the Gibbs measures $\mu_{\beta}^{(N)}$
satisfy an LDP as in the previous corollary.
\end{thm}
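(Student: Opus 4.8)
The plan is to deduce the theorem from Theorem~\ref{thm:h1 and h4 neg beta intro}, applied to the mean field Hamiltonian $H^{(N)}=\frac1N\sum_{i\neq j}W(x_{i},x_{j})$ with macroscopic mean energy $E(\mu):=\int_{X^{2}}W\,\mu^{\otimes2}$. Since $W$ is lower semi-continuous on the compact space $X^{2}$ it is bounded below, say $W\ge-C_{0}$, so the case $\beta\ge0$ already follows from Corollary~\ref{cor:cor mean field pos beta intro} (and, for $\beta=0$, from Sanov's theorem); only $\beta\in(\beta_{0},0)$ remains, and for these $\beta$ we must check: (i) hypothesis H1; (ii) the bound $Z_{N,\beta}\le C_{\beta}^{\,N}$; and (iii) hypothesis H4 --- or, if necessary, the weaker version of it sufficient for the argument, cf.\ Remark~\ref{rem:only concentration}. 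Set $A:=\sup_{x}\int_{X}e^{-\beta_{0}W(x,y)}\,d\mu_{0}(y)<\infty$.

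Step (i) is immediate: $E^{(N)}(\mu^{\otimes N})=\frac{N-1}{N}\int_{X^{2}}W\,\mu^{\otimes2}\to E(\mu)$ whenever $E(\mu)<\infty$, and $E(\mu_{0})<\infty$ since $e^{-\beta_{0}W}\ge-\beta_{0}W$ gives $\int_{X}W(x,\cdot)\,d\mu_{0}\le A/(-\beta_{0})$ uniformly in $x$, whence $\int_{X^{2}}W\,\mu_{0}^{\otimes2}<\infty$ after integrating in $x$ ($\mu_{0}$ being a probability measure).

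Step (ii) is where the strengthened hypothesis is used cleanly. Writing $-\beta H^{(N)}=\frac1N\sum_{i}\bigl(-\beta\sum_{j\neq i}W(x_{i},x_{j})\bigr)$ and using convexity of the exponential, $e^{-\beta H^{(N)}}\le\frac1N\sum_{i}\prod_{j\neq i}e^{-\beta W(x_{i},x_{j})}$, so integration against $\mu_{0}^{\otimes N}$ and symmetry give
\[
Z_{N,\beta}\ \le\ \int_{X}\Bigl(\int_{X}e^{-\beta W(x,y)}\,d\mu_{0}(y)\Bigr)^{N-1}d\mu_{0}(x).
\]
For $\beta_{0}<\beta<0$ we have $0<\beta/\beta_{0}<1$, so Jensen's inequality for the concave function $t\mapsto t^{\beta/\beta_{0}}$ and the probability measure $\mu_{0}$ yields $\int_{X}e^{-\beta W(x,y)}\,d\mu_{0}(y)=\int_{X}\bigl(e^{-\beta_{0}W(x,y)}\bigr)^{\beta/\beta_{0}}d\mu_{0}(y)\le\bigl(\int_{X}e^{-\beta_{0}W(x,y)}\,d\mu_{0}(y)\bigr)^{\beta/\beta_{0}}\le A^{\beta/\beta_{0}}$, uniformly in $x$; hence $Z_{N,\beta}\le A^{(\beta/\beta_{0})(N-1)}\le C_{\beta}^{\,N}$.

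Step (iii), hypothesis H4, is the main obstacle. Sub-additivity of relative entropy turns $D^{(N)}(\mu^{(N)})\le C$ into $D_{\mu_{0}^{\otimes2}}(\mu_{2}^{(N)})\le2C$ for the two--point marginal (which is then absolutely continuous), and $\Gamma_{N}\to\Gamma$ gives $\mu_{2}^{(N)}\to\nu_{2}:=\int_{Y}\mu^{\otimes2}\,\Gamma(d\mu)$ weakly; since $W$ is lower semi-continuous and bounded below, $\liminf_{N}\langle W,\mu_{2}^{(N)}\rangle\ge\langle W,\nu_{2}\rangle=\int_{Y}E\,d\Gamma$, the ``easy'' half. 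For the reverse one splits $W=\min(W,M)+(W-M)_{+}$ and must control $\langle(W-M)_{+},\mu_{2}^{(N)}\rangle$ uniformly in $N$ as $M\to\infty$. Conditioning on the first particle and applying the Gibbs variational inequality in the second variable, the pointwise hypothesis enters through the bound, uniform in $x$,
\[
\int_{X}e^{-\beta_{0}(W(x,y)-M)_{+}}\,d\mu_{0}(y)\ \le\ 1+e^{\beta_{0}M}\!\!\int_{\{W(x,\cdot)>M\}}\!\!\!e^{-\beta_{0}W(x,y)}\,d\mu_{0}(y)\ \le\ 1+Ae^{\beta_{0}M},
\]
which, combined with the conditional-entropy identity $D_{\mu_{0}^{\otimes2}}(\mu_{2}^{(N)})=D_{\mu_{0}}(\mu_{1}^{(N)})+\int_{X}D_{\mu_{0}}\bigl(\mu^{(N)}_{2\mid1}(\cdot\mid x)\bigr)\,d\mu_{1}^{(N)}(x)$ and the entropy bound, is exactly the $L^{1}\mathrm{Log}L^{1}$--versus--exponential-Orlicz duality invoked after Corollary~\ref{cor:mean field neg beta intro}, now with a constant independent of the first variable. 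The delicate point is to upgrade these estimates from a uniform bound on $\langle W,\mu_{2}^{(N)}\rangle$ to the convergence required by H4: here one exploits that the energy and entropy bounds are carried by the same $N$-particle measure, so a cluster of particles carrying anomalous energy must carry a commensurate share of the total $N$-particle entropy, and one does the bookkeeping at the level of $\mu^{(N)}$ rather than of $\mu_{2}^{(N)}$ alone; for the Gibbs (and tilted Gibbs) measures this coupling is already explicit through the identity $-\beta E^{(N)}(\mu^{(N)})-D^{(N)}(\mu^{(N)})=\frac1N\log Z_{N,\beta}$ together with Step (ii). With H4 in the form needed, Theorem~\ref{thm:h1 and h4 neg beta intro} gives the LDP for every $\beta>\beta_{0}$.
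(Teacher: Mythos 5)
Your Steps (i) and (ii) are fine and match the paper (H1 is trivial, and the bound $Z_{N,\beta}[\Phi]\leq C_{\beta}^{N}$ follows from the arithmetic--geometric mean/Jensen argument exactly as at the start of the proof of Corollary \ref{cor:mean field neg beta intro}). The gap is Step (iii), which is precisely the content of the theorem, and which you acknowledge but do not prove. The route you sketch --- verify H4 (for the tilted Gibbs measures) from the entropy bound on the two-point marginal via the $L^{1}\mathrm{Log}L^{1}$ versus exponential-Orlicz duality ``with a constant independent of the first variable'' --- does not go through under the weakened hypothesis: the duality used in the proof of Corollary \ref{cor:mean field neg beta intro} requires $W$ to lie in the \emph{small} Orlicz space $M_{\theta}$, i.e.\ $\int e^{\alpha W}\mu_{0}^{\otimes2}<\infty$ for \emph{every} $\alpha>0$, whereas the present hypothesis only gives $e^{-\beta_{0}W}\in L^{1}$ (uniformly in one variable), i.e.\ $W\in L_{\theta}$. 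A bounded sequence in $L_{\theta^{*}}$ (which is all the entropy bound gives for $\rho_{2}^{(N)}$) pairs continuously with $M_{\theta}$, not with $L_{\theta}$, so one gets a uniform bound on $\langle W,\mu_{2}^{(N)}\rangle$ but not the convergence demanded by H4; your closing heuristic about ``anomalous energy forcing a commensurate share of entropy'' and the identity $-\beta E^{(N)}-D^{(N)}=\frac{1}{N}\log Z_{N,\beta}$ only reproduce the a priori bounds $E^{(N)}(\mu_{\beta}^{(N)})\leq C$, $D^{(N)}(\mu_{\beta}^{(N)})\leq C$ already used in Theorem \ref{thm:h1 and h4 neg beta intro}, and do not yield the limit $E^{(N)}(\mu_{\beta}^{(N)})\rightarrow\int_{Y}E\,d\Gamma$.

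The paper explicitly notes that extending the Orlicz duality argument ``seems natural'' and then takes a different road: for the Gibbs measures of the tilted Hamiltonians $H^{(N)}+N\delta_{N}^{*}(\Phi)$ (first reduced to $\Phi$ Lipschitz by uniform approximation), Lemma \ref{lem:bound on marginals} establishes, following \cite{clmp,k2}, the pointwise bound $\rho_{j}^{(N)}\leq C_{j}e^{-\frac{1}{N}\sum_{k\neq l\leq j}W(x_{k},x_{l})}$ by two applications of H\"older's inequality (with exponents $p=N$ and then $w\sim\epsilon N$) together with the partition-function bound of your Step (ii) and a Wasserstein estimate comparing $\Phi$ at empirical measures of $N$ and $N-j$ points. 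This gives uniform $L^{p}$ bounds on $\rho_{2}^{(N)}$ for every $p>1$, and since the hypothesis makes $W\in L^{q}$ for all $q<\infty$, ordinary $L^{p}$--$L^{q}$ duality yields the required convergence of the mean energies, hence the lower bound on $-\frac{1}{N}\log Z_{N,\beta}[\Phi]$ needed for Bryc's lemma; the lower semi-continuity of the rate functional is handled separately (Lemma \ref{lem:free energy is lsc for beta neg}), where the uniform-in-$x$ hypothesis \emph{is} enough for an Orlicz argument at the macroscopic level. Without this marginal estimate (or a genuine substitute), your Step (iii) remains unproven, so the proposal has a real gap at its central point.
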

Specialized to the logarithmic case of the vortex model, i.e. to the
case
\[
W(x,y)=-\log|x-y|,
\]
 the previous theorem recovers the LDP in \cite{bo-g} with a new
proof. The proof follows closely the corresponding (weaker) concentration
result for the vortex model originally established in \cite{clmp,k2}.
The new observation is that with a little twist the argument in \cite{clmp,k2}
can be supplemented to give the LDP in question.

\subsection{\label{sub:Relations-to-Gamma intro}Relations to Gamma-convergence
at different levels}

The proofs of the LDPs above are based on the Gamma-convergence of
the corresponding free energy functionals $F_{\beta_{N}}^{(N)}$ when
viewed as functionals on the space $\mathcal{P}(\mathcal{P}(X))$
(a similar approach is used in the dynamic setting considered in \cite{b-o}
where the assumptions H1 and H2 also appear naturally). Incidentally,
as observed in the following corollary the LDPs then imply the Gamma-convergence
of  the scaled Hamiltonians $H^{(N)}/N$ when viewed as functionals
on $\mathcal{P}(X).$
\begin{cor}
\label{cor:h1 h2 h3 gives gamma intro}Suppose that the Hamiltonians
$H^{(N)}$ satisfy H1 and H2 and H3 for some measure $\mu_{0}.$ Then
$H^{(N)}/N$ Gamma-converges towards $E(\mu)$ on $\mathcal{P}(X).$
In particular, this applies to the finite order mean field Hamiltonians
(assuming the approximation property H3).
\end{cor}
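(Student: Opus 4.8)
The plan is to deduce the $\Gamma$-convergence of $H^{(N)}/N$ on $\mathcal P(X)$ from the LDP for the laws $(\delta_N)_*(e^{-\beta_N H^{(N)}}\mu_0^{\otimes N})$ established in Theorem \ref{thm:h1 and h2 pos beta intro}, by letting the inverse temperature $\beta_N\to\infty$ in a controlled way. Recall that $\Gamma$-convergence of $H^{(N)}/N$ towards $E$ means two things: (i) (liminf/lower bound) for every sequence $\mu^{(N)}$ of ``empirical-type'' configurations, i.e. $\mu^{(N)}=\delta_N(x_1^{(N)},\dots,x_N^{(N)})$ with $\mu^{(N)}\to\mu$ weakly, one has $\liminf_N H^{(N)}(x^{(N)})/N\ge E(\mu)$; and (ii) (recovery sequence) for every $\mu$ with $E(\mu)<\infty$ there is a sequence $x^{(N)}\in X^N$ with $\delta_N(x^{(N)})\to\mu$ and $\limsup_N H^{(N)}(x^{(N)})/N\le E(\mu)$. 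The key link is the standard fact that the zero-temperature limit of the free energy functional picks out the $\Gamma$-limit of the energy; concretely, since the LDP holds with rate functional $F_\beta=E+\tfrac1\beta D_{\mu_0}$ and speed $\beta_N N$ for every sequence $\beta_N\to\infty$, and the rate functionals $F_\beta$ increase to $E$ as $\beta\to\infty$ on the set where $E$ is finite, one extracts the $\Gamma$-convergence of $H^{(N)}/N$.

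In more detail, for the lower bound I would argue as follows. Fix $\mu^{(N)}=\delta_N(x^{(N)})\to\mu$. Apply the large deviation upper bound for a fixed large $\beta$: the LDP (with rate $F_\beta$, speed $\beta N$, using $\beta_N\equiv\beta$) gives, for any open neighborhood $U$ of $\mu$ in $\mathcal P(X)$,
\[
\limsup_{N\to\infty}\frac{1}{\beta N}\log\Gamma_N^\beta(U)\le -\inf_{U}F_\beta.
\]
On the other hand a direct estimate of $\Gamma_N^\beta(U)=Z_{N,\beta}^{-1}\int_{\{\delta_N\in U\}}e^{-\beta H^{(N)}}\mu_0^{\otimes N}$ is not quite what one wants at the level of a single configuration; instead the cleanest route is to use the representation of the $\Gamma$-liminf via the energy directly: H2 already gives $\liminf_N E^{(N)}(\mu^{(N)})\ge E(\mu)$ for sequences of symmetric measures whose laws converge, and specializing to $\mu^{(N)}$ a point mass at a symmetrization of $x^{(N)}$ yields exactly $\liminf_N H^{(N)}(x^{(N)})/N\ge E(\mu)$. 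So the lower bound in $\Gamma$-convergence is essentially H2 applied to (symmetrized) Dirac measures on $X^N$, once one checks that the law of $\delta_N$ under such a Dirac measure is $\delta_{\mu^{(N)}}\to\delta_\mu$ in $\mathcal P(\mathcal P(X))$. For the recovery sequence, given $\mu$ with $E(\mu)<\infty$, use H3 to replace $\mu$ by $\mu_j\ll\mu_0$ with $\mu_j\to\mu$ and $E(\mu_j)\to E(\mu)$; then for each fixed $j$, hypothesis H1 gives $E^{(N)}(\mu_j^{\otimes N})\to E(\mu_j)$, and a quantitative law-of-large-numbers argument for the i.i.d.\ product $\mu_j^{\otimes N}$ shows that $\delta_N$ concentrates at $\mu_j$ while the expected energy tends to $E(\mu_j)$; extracting, for each $j$, configurations $x^{(N,j)}$ with $H^{(N)}(x^{(N,j)})/N$ close to $E(\mu_j)$ and $\delta_N(x^{(N,j)})$ close to $\mu_j$, a diagonal argument over $j=j(N)\to\infty$ slowly produces the desired recovery sequence with $\limsup_N H^{(N)}(x^{(N)})/N\le E(\mu)$ and $\delta_N(x^{(N)})\to\mu$.

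The main obstacle is the recovery (limsup) direction: extracting a single configuration $x^{(N)}\in X^N$ with nearly optimal energy from the statement that the \emph{average} energy $E^{(N)}(\mu_j^{\otimes N})$ is close to $E(\mu_j)$. This requires knowing that a positive-probability (under $\mu_j^{\otimes N}$) set of configurations has both $\delta_N$ near $\mu_j$ and $H^{(N)}/N$ near $E(\mu_j)$; convergence of the mean energy alone does not control fluctuations, so one needs either an auxiliary integrability/uniform-integrability input or, more elegantly, the LDP itself applied at finite $\beta$ to bound $\Gamma_N^\beta$ from below on neighborhoods of $\mu_j$ and thereby force the existence of low-energy configurations near $\mu_j$, followed by letting $\beta\to\infty$ and diagonalizing in $(\beta,j,N)$. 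The positivity of the lower LDP bound, $\liminf_N\frac1{\beta N}\log\Gamma_N^\beta(U)\ge -\inf_U F_\beta>-\infty$, combined with the trivial bound $e^{-\beta H^{(N)}}\le e^{-\beta\inf_{X^N}H^{(N)}}$ forces $\inf_{\delta_N^{-1}(U)}H^{(N)}/N\le \inf_U F_\beta+o(1)$; letting $U\downarrow\{\mu_j\}$ and then $\beta\to\infty$ gives $\inf_{\delta_N^{-1}(U)}H^{(N)}/N\le E(\mu_j)+o(1)$, which is exactly the recovery bound after the final diagonalization. The remaining steps — the measurability and symmetrization bookkeeping, and the verification that ``Dirac at a symmetrized configuration'' has the right pushforward law — are routine.
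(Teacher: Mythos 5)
Your proposal is correct and follows essentially the paper's own route: the $\Gamma$-liminf is exactly H2 applied to the symmetrized Dirac measure at a configuration $x^{(N)}$ (whose pushforward law is $\delta_{\delta_N(x^{(N)})}\to\delta_\mu$), and the recovery sequence is extracted from the LDP lower bound on a ball $B_\epsilon(\mu_j)$ combined with the trivial bound of the integral by $e^{-\beta N\inf_{\delta_N^{-1}(B_\epsilon)}H^{(N)}/N}$, followed by a diagonal argument and H3 for general $\mu$. The only (cosmetic) difference is that you run the LDP at fixed $\beta$ and then let $\beta\to\infty$ so the entropy cost appears through $F_\beta(\mu_j)=E(\mu_j)+D(\mu_j)/\beta$, whereas the paper invokes the $\beta_N\to\infty$ LDP with rate $E$ and uses Sanov's theorem to make the $\frac{1}{\beta_N N}\log\mu_0^{\otimes N}(B_\epsilon)$ term vanish; both versions equally require the finite-entropy (or H3-approximation) step at the end.
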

The previous result generalizes the Gamma-convergence result in \cite[Prop 2.8, Remark 2.19]{se}
for the mean field Hamiltonians corresponding to pair interactions
$W(x,y)$ (on Euclidean domains), where it was assumed that $w(x,y)$
essentially only blows up along the diagonal (similar result also
appear implicitly in \cite{ben-g,c-g-z,be-z}). The proofs in \cite{se,ben-g,c-g-z,be-z})
are based on some rather intricate combinatorial constructions, involving
small cubes. On the other hand, the latter results yield the stronger
result that any measure $\mu$ with a positive continuous density
admits a recovery sequences $x^{(N)}$ such that 
\[
\limsup_{N\rightarrow\infty}\sup_{B_{\epsilon_{N}}}E_{N}\rightarrow E(\mu),
\]
 where $B_{\epsilon_{N}}$ denotes the $L^{\infty}-$ball centered
at $x^{(N)}$ with radius $\epsilon_{N}$ of the order $1/N^{1/D},$
for $D=\dim X.$ In turn, as shown in \cite{se}, the latter stronger
form of Gamma-convergence implies the LDP for the corresponding Gibbs
measures when $\beta_{N}\gg(\log N)$ (which is needed to make sure
that $\beta_{N}^{-1}N^{-1}\log\int_{B_{\epsilon_{N}}}dV^{\otimes}\rightarrow0).$ 

Relations between Gamma-convergence and large deviation principles
have also been previously studied in \cite{ma} but from a rather
different perspective (see also \cite{bo} for some related results,
which in particular allows one to dispense with the assumption H3
in \ref{cor:h1 h2 h3 gives gamma intro}).

\subsection{Applications to the Coulomb gas on a Riemannian manifold}

In the companion paper \cite{b2} the general large deviation results
above are illustrated and further developed for Coulomb and Riesz
type gases on a compact $D-$dimensional Riemannian manifold $(X,g)$
(and more generall for suitable compact subsets $K\subset X).$ Here
we will only state the corresponding LDP for the Coulomb gas on $(X,g)$
defined as follows. Let $W(x,y)$ be $1/2$ times the integral kernel
of the inverse of the positive Laplacian $-\Delta$ on the space of
all functions in $L^{2}(X,dV_{g})$ with mean zero. As is classical,
$W$ is symmetric and smooth away from the diagonal and close to the
diagonal it admits the following asymptotics when $D>2:$ 
\[
W(x,y)=\frac{\Gamma(D/2-1)}{(4\pi)^{D/2}}\frac{1}{d(x,y)^{(D-2)}}(1+O(1)),\,\,\,D>2,
\]
where the leading constant is expressed in terms of the classical
$\Gamma-$function. Moreover, when $D=2$ 
\[
W(x,y)=-\frac{1}{(4\pi)}\log d^{2}(x,y)+O(1),\,\,D=2
\]
In particular, $W$ is lsc and in $L^{1}(X)$ on $X\times X.$ Given
a probality measure $\mu_{0}$ on $X$ the Coulomb gas on $(X,g,\mu_{0})$
at inverse temperature $\beta_{N}$ is defined by the Gibbs measures
corresponding to $(\mu_{0},H^{(N)},\beta_{N})$ where $H^{(N)}$ is
the mean field Hamiltonian corresponding to the pair interaction $W(x,y).$
In this setting Cor \ref{cor:cor mean field pos beta intro} and Theorem
\ref{thm:negive beta uniform integr} yields, as shown in \cite{b2},
the following LDP for the laws of the empirical measures of the Coulomb
gas, formulated in terms of the potential theoretic properties of
the measure $\mu_{0}:$
\begin{thm}
\label{thm:LDP coul}Let $(X,g)$ be a compact Riemannian manifold
and consider the Coulomb gas at inverse temperature $\beta_{N}$ on
$(X,g,\mu_{0}).$ 
\begin{itemize}
\item When $\beta\in]0,\infty[$ the LDP holds if the measure $\mu_{0}$
is non-polar 
\item When $\beta=\infty$ the LDP holds if $\mu_{0}$ is non-polar and
$\mu_{0}$ is determining for its support $K.$ 
\item When $D=2$ and $\beta<\infty$ the LDP holds when $\beta>-4\pi d(\mu_{0}),$
where $d(\mu_{0})\in[0,\infty[$ is the sup över all $t>0$ such that
there exists a positive constant $C$ (depending on $t)$ such that
\[
\mu_{0}(B_{R}(x))\leq CR^{t}
\]
\textup{as $R\rightarrow0,$ for any Riemannian ball $B_{R}(x)$ of
radius $R$ centered at a given point $x$ in $X.$ }
\end{itemize}
\end{thm}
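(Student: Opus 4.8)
The plan is to read the three cases off the general theory already developed: the first two bullets from the positive-temperature Corollary~\ref{cor:cor mean field pos beta intro} (the second via its $\beta=\infty$ clause) and the third from the negative-temperature Theorem~\ref{thm:negive beta uniform integr}; all that really remains is to match the potential-theoretic hypotheses imposed on $\mu_0$ with the hypotheses these results require. First I would record the elementary structural facts about the Coulomb kernel $W$ on $X\times X$: by the stated near-diagonal asymptotics $W\to+\infty$ along the diagonal, while off the diagonal $W$ is smooth; hence $W$ is lsc and bounded below on the compact space $X\times X$. Writing $H^{(N)}=\tfrac1N\sum_{i\neq j}W(x_i,x_j)$, boundedness below gives $H^{(N)}\ge-cN$ uniformly in $N$, so $Z_{N,\beta}\le e^{c\beta N}<\infty$ for every $\beta\ge0$ and all the positive-temperature Gibbs measures are well defined. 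For $\beta\in]0,\infty[$ the non-polarity of $\mu_0$ amounts to $E(\mu_0)=\int_{X\times X}W\,\mu_0^{\otimes 2}<\infty$, which is exactly hypothesis H1 for the mean field energy $E(\mu):=\int_{X\times X}W\,\mu^{\otimes 2}$ (the mean field identity $E^{(N)}(\mu^{\otimes N})=E(\mu)$ and the lower bound H2 being automatic in this finite-order setting), so Corollary~\ref{cor:cor mean field pos beta intro} applies and gives the LDP at speed $\beta_N N$ with rate $F_\beta=E+\tfrac1\beta D_{\mu_0}$.

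For $\beta=\infty$ one needs, in addition, the approximation hypothesis H3: every $\mu$ carried by $K:=\operatorname{supp}\mu_0$ with $E(\mu)<\infty$ must be a weak limit of $\mu_0$-absolutely continuous measures $\mu_j$ with $E(\mu_j)\to E(\mu)$. Here is where I would use the extra assumption that $\mu_0$ be \emph{determining} for $K$: non-polarity already confines matters to finite-energy measures on $K$, and the determining (Bernstein--Markov type) property of $\mu_0$ is precisely what permits one to represent a suitably regularized version of $\mu$ --- obtained by balayage onto $K$ followed by a mollification --- by a bounded $\mu_0$-density, the convergence $E(\mu_j)\to E(\mu)$ then following from lower semicontinuity of $E$ together with the obvious upper bound along the regularization. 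Granting H3, the $\beta=\infty$ clause of Corollary~\ref{cor:cor mean field pos beta intro} yields the LDP with rate $E$.

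For $D=2$ and $\beta<\infty$ I would instead verify the hypothesis of Theorem~\ref{thm:negive beta uniform integr}. From the leading logarithmic term of $W$ one gets, uniformly in $x,y$, $e^{-\beta_0 W(x,y)}\asymp d(x,y)^{-s(\beta_0)}$ near the diagonal and $e^{-\beta_0 W}$ bounded away from it, where $s(\beta_0)$ depends linearly on $\beta_0$ and, keeping track of the logarithmic coefficient, satisfies $s(\beta_0)<d(\mu_0)$ exactly when $\beta_0>-4\pi d(\mu_0)$. Given $\beta>-4\pi d(\mu_0)$, fix any $\beta_0$ with $-4\pi d(\mu_0)<\beta_0<\beta$. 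If $\beta_0\ge0$ then $e^{-\beta_0 W}$ is bounded and $\sup_{x\in X}\int_X e^{-\beta_0 W(x,y)}\mu_0(y)<\infty$ trivially; if $\beta_0<0$ then $0<s(\beta_0)<d(\mu_0)$, so by the definition of $d(\mu_0)$ there is $t\in]s(\beta_0),d(\mu_0)[$ with $\mu_0(B_R(x))\le CR^t$ for all small $R$, uniformly in $x$, and decomposing $\int_X d(x,y)^{-s(\beta_0)}\mu_0(y)$ over the dyadic annuli $B_{2^{-k}}(x)\setminus B_{2^{-k-1}}(x)$ turns it into a geometric series $\sum_k 2^{k s(\beta_0)}C2^{-kt}$, convergent and bounded uniformly in $x$ since $s(\beta_0)<t$, whence again $\sup_{x\in X}\int_X e^{-\beta_0 W(x,y)}\mu_0(y)<\infty$. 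In either case Theorem~\ref{thm:negive beta uniform integr} (with $W$ lsc) applies and gives the LDP for all $\beta>\beta_0$, in particular for the given one; for $\beta>0$ this recovers the first bullet, since $d(\mu_0)>0$ already forces $E(\mu_0)<\infty$ in dimension two.

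The step I expect to be the main obstacle is the verification of H3 in the zero-temperature case: producing, for an arbitrary finite-energy measure on $K$, a weakly convergent approximating sequence that is absolutely continuous with respect to the possibly very singular reference measure $\mu_0$ and along which the Coulomb energy converges. This is a genuinely potential-theoretic point and is where the ``determining'' hypothesis does real work; the other two cases are essentially bookkeeping, once the lower-boundedness and semicontinuity of $W$ are in place and, in the two-dimensional negative-temperature regime, the logarithmic singularity of $W$ has been matched against the local mass-decay exponent $d(\mu_0)$.
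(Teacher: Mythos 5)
Your overall route is the one the paper itself indicates: reduce the first two bullets to Corollary \ref{cor:cor mean field pos beta intro} (the second via its $\beta=\infty$ clause, i.e.\ via H3) and the third to Theorem \ref{thm:negive beta uniform integr}. Note, however, that the paper does not carry out the potential-theoretic verifications here either; it explicitly defers them (in particular ``non-polar $+$ determining $\Rightarrow$ H3'') to the companion paper [b2], so only the reduction itself can be compared, and there your proposal has one genuine error and one acknowledged hole. The error is the claim that ``non-polarity of $\mu_0$ amounts to $E(\mu_0)<\infty$''. Non-polarity (charging no polar set) is strictly weaker than finite Coulomb energy: e.g.\ $\mu_0=\sum_n 2^{-n}\nu_n$ with $\nu_n$ the equilibrium measure of a ball of radius $r_n$ shrinking fast enough has $E(\mu_0)=\infty$ yet charges no polar set. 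So as written your first bullet only proves the LDP under the stronger hypothesis $E(\mu_0)<\infty$ (i.e.\ H1 verbatim), not under the stated one. To repair it you must identify where $E(\mu_0)<\infty$ actually enters the proof of Theorem \ref{thm:h1 and h2 pos beta intro} (essentially: producing a competitor of finite free energy, hence the two-sided bound on $\frac1N\log Z_{N,\beta}$ and the finiteness of $\inf F_\beta$) and supply a substitute from non-polarity alone, e.g.\ that the potential $u_{\mu_0}$ is finite outside a polar set, hence $\mu_0$-a.e., so the normalized restrictions of $\mu_0$ to $\{u_{\mu_0}\le k\}$ are $\mu_0$-absolutely continuous with finite energy and finite entropy. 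Without some such step the first bullet is not established under the stated hypothesis.

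The $\beta=\infty$ bullet you correctly reduce to H3, but the sketch (``balayage onto $K$, mollify, the determining property is precisely what permits representing the regularization by a bounded $\mu_0$-density'') is not an argument: the determining property is a comparison of $L^\infty(\mu_0)$-norms with sup-norms for quasi-subharmonic exponentials, and passing from that to recovery, for an arbitrary finite-energy $\mu$ on $K$, of $E(\mu)$ along $\mu_0$-absolutely continuous approximants is exactly the content the paper leaves to [b2]; you flag this yourself as the main obstacle, so this bullet remains unproved in your proposal (as it is in the present paper). Finally, your third bullet is essentially the intended verification of Theorem \ref{thm:negive beta uniform integr} (dyadic annuli against the Frostman exponent, uniform in $x$, for some $\beta_0\in\,]-4\pi d(\mu_0),\beta[$, plus lower semicontinuity of $W$ and the triviality when $\beta_0\ge0$), but you assert rather than compute the constant: with the near-diagonal formula as displayed in the paper, $W=-\frac{1}{4\pi}\log d^2+O(1)=-\frac{1}{2\pi}\log d+O(1)$, your $s(\beta_0)$ equals $|\beta_0|/(2\pi)$ and the argument only reaches $\beta>-2\pi d(\mu_0)$; the stated threshold $-4\pi d(\mu_0)$ corresponds to $W$ being one half of the Green kernel, $W\sim-\frac{1}{4\pi}\log d$, as in the paper's definition of $W$. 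Since the whole point of this bullet is the exact critical constant, you should fix the normalization and do that one-line computation explicitly rather than saying ``keeping track of the logarithmic coefficient''.
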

We briefly recall that a compact subset $K\subset X$ is\emph{ polar}
if it is locally contained in the $-\infty-$set of a local subharmonic
function (or equivalently, if $K$ has vanishing capacity). Accordingly,
a measure $\mu_{0}$ is said to be\emph{ non-polar} if it does not
charge any polar set. The notion of a\emph{ determining }measure $\mu_{0}$
appearing in the second point above means that for any given $u\in C^{0}(X)$
\[
\left\Vert e^{\varphi-u}\right\Vert _{L^{\infty}(K,\mu_{0})}=\sup_{K}e^{\varphi-u},
\]
 for any quasi-subharmonic function $\varphi$ on $X,$ i.e. $\varphi$
is strongly usc and satisfies $\Delta\varphi\geq-1.$ This notion
is closely related to the notion of measures satisfying a \emph{Bernstein-Markov
property} in pluripotential theory \cite{blwp,b-b-w} and measures
with \emph{regular asymptotic behaviour} in the theory of planar orthogonal
polynomials \cite{s-t}. For example, $\mu_{0}$ can be taken to be
the $D-$dimensional Hausdorff measure on a Lipschitz domain $K\subset(X,g)$
or the $(D-1)-$dimensional Hausdorff measure on a Lipschitz hypersurface
in $(X,g).$ The point is that the assumption that $\mu_{0}$ is non-polar
and determining implies that the hypothesis H3 is satisfied, as shown
in \cite{b2} (an alternative proof of the LDP in the case when $\beta=\infty$
can also be given using the approach in the complex geometric setting
in \cite{berm 1 komma 5}, based on \cite{b-b,b-h}). Finally, we
recall that measures satisfying $d(\mu_{0})>0,$ as in the third point
above, are sometimes called \emph{Frostman measures }in the classical
litterature (for example, the property in question holds with $d(\mu_{0})=d$
when $\mu_{0}$ is the $d-$ dimensional Hausdorff measure of a compact
subset $K$ of $X$ of Hausdorff dimension $d).$ 

More generally, an LDP as in the previous theorem is obtained in \cite{b2},
when $W(x,y)$ is taken as the integral kernel of the inverse of $(-\Delta)^{p}$
and the (possible fractional) power $p$ is in $]0,D/2]$ (or even
more generally: when $(-\Delta)^{p}$ is replaced by a suitable pseudodifferential
operator of order at most $D).$ Then the last point in the previous
theorem holds in the critical case $p=D/2.$ However, the LDP for
$\beta=\infty$ appears to be rather subtle in the general setting
and is only shown to hold when $\mu_{0}$ is a volume form (or comparable
to a volume form), except when $p\leq2$ where it applies to measures
$\mu_{0}$ which are determining in a suitable sense. 

Let us also point out that in the Euclidean setting of the Coulomb
and Riesz gases in $\R^{n},$ with $\mu_{0}$ given by the Euclidean
volume form and $\beta_{N}$ of the order $N,$ a refined ``microscopic''
large deviation principle ``at the level of processes'' is obtained
in \cite{l-s}. Such large deviation principles are beyond the scope
of the present paper and seem to require different methods - the point
here is rather to allow the measure $\mu_{0}$ to be very singular
(and the inverse temperature to be negative, in some cases).

\subsubsection*{Acknowledgment}

It is a pleasure to thank Sebastien Boucksom, Vincent Guedj, Philippe
Eyssidieu and Ahmed Zeriahi for the stimulating collaboration \cite{bbgez}.
Also thanks to the editors Doug Hardin, Edward Saff and Sylvia Serfaty
for the invitation to contribute to the special issue of the Journal
of Constructive Approximation on the theme `Approximation and statistical
physics\textquoteright , which prompted the present paper.

\section{\label{sec:Proofs-of-the}Proofs of the large devations results}

\subsection{General notation}

Given a compact topological space $X$ we will denote by $C^{0}(X)$
the space of all continuous functions $u$ on $X,$ equipped with
the sup-norm and by $\mathcal{M}(X)$ the space of all signed (Borel)
measures on $X.$ The subset of $\mathcal{M}(X)$ consisting of all
probability measures will be denoted by $\mathcal{P}(X).$ We endow
$\mathcal{M}(X)$ with the weak topology, i.e. $\mu_{j}$ is said
to converge to $\mu$ weakly in $\mathcal{M}(X)$ if 
\[
\left\langle \mu_{j},u_{j}\right\rangle \rightarrow\left\langle \mu,u\right\rangle :=\int_{X}u\mu
\]
 for any continuous function $u$ on $X,$ i.e. for any $u\in C^{0}(X)$
(in other words, the weak topology is the weak{*}-topology when $\mathcal{M}(X)$
is identified with the topological dual of $C^{0}(X)).$ Since $X$
is compact so is $\mathcal{P}(X).$ Given a lower semi-continuous
function $F$ on 
\[
Y:=\mathcal{P}(X)
\]
 we will, abusing notation slightly, also write $F$ for the induced
linear lower semi-continuous functional on $\mathcal{P}(Y):$ 
\[
F(\Gamma):=\int_{\mathcal{P}(Y)}F(\mu)\Gamma(\mu)
\]
Equivalently, under the natural embedding $\mu\mapsto\delta_{\mu}$
of $Y$ into $\mathcal{P}(Y)$ the function $F(\Gamma)$ is the unique
lower semi-continuous affine extension of $F$ to $\mathcal{P}(Y).$ 

We will denote by $S_{N}$ the permutation group acting on $X^{N}$
and by $\mathcal{P}(X^{N})^{S_{N}}$ the space of symmetric measures
$\mu_{N}$ (i.e. $S_{N}-$invariant) on $X^{N}.$ Also note that,
following standard practice, we will denote by $C$ a generic constant
whose value may change from line to line.

\subsubsection{Entropy}

We will write $D(\nu_{1},\nu_{2})$ for the \emph{relative entropy
}(also called the \emph{Kullback\textendash Leibler divergence} in
information theory) of two measures $\nu_{1}$ and $\nu_{2}$ on a
topological space $Z:$ if $\nu_{1}$ is absolutely continuous with
respect to $\nu_{2},$ i.e. $\nu_{1}=f\nu_{2},$ one defines 
\[
D(\nu_{1},\nu_{2}):=\int_{Y}\log(\nu_{1}/\nu_{2})\nu_{1}
\]
 and otherwise one declares that $D(\mu):=\infty.$ Note the sign
convention used: $D$ is minus the \emph{physical} entropy. In our
setting the space $Z$ will always be of the form $X^{N}$ and we
will then take the reference measure $\nu_{2}=\mu_{0}^{\otimes N}$
and write $D(\cdot):=D(\cdot,\mu_{0}^{\otimes N}).$ It will also
be convenient to define the\emph{ mean entropy} of a probability measure
$\mu_{N}$ on $X^{N}$ (i.e. $\mu_{N}\in\mathcal{M}_{1}(X^{N}))$
as 
\[
D^{(N)}(\mu_{N}):=\frac{1}{N}D(\mu_{N},\mu_{0}^{\otimes N}).
\]
Then it follows directly that 
\begin{equation}
D^{(N)}(\mu^{\otimes N})=D(\mu).\label{eq:entropi for product}
\end{equation}
 Moreover, denoting by $\left(\mu_{N}\right)_{j}$ the $j$ th marginal
$\mu_{N}$ (which defines a probability measure on $X^{j})$ 
\begin{equation}
D^{(N)}(\mu_{N})\geq D^{(j)}(\left(\mu_{N}\right)_{j}),\label{eq:ineq for entropy}
\end{equation}
 as follows from the concavity of the function $t\mapsto\log t$ on
$\R_{+}$ (see for example \cite{k2}).

\subsection{Preliminaries}

\subsubsection{Large deviation principles}

Let us start by recalling the general definition of a Large Deviation
Principle (LDP) for a sequence of measures.
\begin{defn}
\label{def:large dev}Let $Y$ be a Polish space, i.e. a complete
separable metric space.

$(i)$ A function $I:\mathcal{\,Y}\rightarrow]-\infty,\infty]$ is
a \emph{rate function} if it is lower semi-continuous. It is a \emph{good}
\emph{rate function} if it is also proper.

$(ii)$ A sequence $\Gamma_{N}$ of measures on $Y$ satisfies a \emph{large
deviation principle} with \emph{speed} $r_{N}$ and \emph{rate function}
$I$ if

\[
\limsup_{N\rightarrow\infty}\frac{1}{r_{N}}\log\Gamma_{N}(\mathcal{F})\leq-\inf_{\mu\in\mathcal{F}}I
\]
 for any closed subset $\mathcal{F}$ of $Y$ and 
\[
\liminf_{N\rightarrow\infty}\frac{1}{r_{N}}\log\Gamma_{N}(\mathcal{G})\geq-\inf_{\mu\in G}I(\mu)
\]
 for any open subset $\mathcal{G}$ of $Y.$ \end{defn}
\begin{rem}
The LDP is said to be \emph{weak }if the upper bound is only assumed
to hold when $\mathcal{F}$ is compact. Anyway, we will only consider
the case when $Y$ is compact and hence the notion of a weak LDP and
an LDP then coincide (and moreover any rate functional is automatically
good).\end{rem}
\begin{lem}
(Bryc). Let $Y$ be a compact Polish space. Suppose that there exists
a function $f$ on $C^{0}(Y)$ such that for any $\Phi\in C^{0}(Y)$
\[
f_{N}(\Phi):=\frac{1}{r_{N}}\log\int e^{r_{N}\Phi}\Gamma_{N}\rightarrow f(\Phi)
\]
Then $\Gamma_{N}$ satisfies a LDP with speed $r_{N}$ and rate functional
\[
I(\mu)=\sup_{\Phi\in C^{0}(Y)}\left(\Phi(\mu)-f(\Phi)\right)
\]
(by Varadhan's lemma the converse also holds). 
\end{lem}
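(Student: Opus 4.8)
The plan is to verify the two bounds in Definition \ref{def:large dev} directly from the assumed convergence $f_N(\Phi)\to f(\Phi)$, exploiting compactness of $Y$ throughout. First I would record the elementary consequence that $f$ is convex (as a limit of the convex functionals $f_N$, since $\Phi\mapsto \log\int e^{r_N\Phi}\Gamma_N$ is convex by H\"older) and that $f(\Phi+c)=f(\Phi)+c$ for constants $c$; hence the candidate rate functional $I(\mu):=\sup_{\Phi\in C^0(Y)}(\Phi(\mu)-f(\Phi))$ is a legitimate (lower semi-continuous, being a sup of continuous affine functions) rate function on the compact space $Y$, and is therefore automatically good.

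For the \emph{upper bound}, the key step is the standard exponential Chebyshev / tilting estimate. Given a closed (hence compact) set $\mathcal F$ and $\delta>0$, for each $\mu\in\mathcal F$ pick $\Phi_\mu\in C^0(Y)$ with $\Phi_\mu(\mu)-f(\Phi_\mu)> \min(I(\mu),1/\delta)-\delta$; by continuity of $\Phi_\mu$ there is an open neighbourhood $U_\mu\ni\mu$ on which $\Phi_\mu\ge \Phi_\mu(\mu)-\delta$. Then
\[
\Gamma_N(U_\mu)\le \int e^{r_N(\Phi_\mu-\Phi_\mu(\mu)+\delta)}\Gamma_N,
\]
so $\tfrac1{r_N}\log\Gamma_N(U_\mu)\le f_N(\Phi_\mu)-\Phi_\mu(\mu)+\delta$, which in the limit is $\le -I(\mu)+2\delta$ (or $\le -1/\delta+2\delta$ on the part where $I(\mu)$ is large). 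Cover the compact set $\mathcal F$ by finitely many such $U_{\mu_1},\dots,U_{\mu_k}$, use $\limsup\tfrac1{r_N}\log\sum_i\Gamma_N(U_{\mu_i})=\max_i\limsup\tfrac1{r_N}\log\Gamma_N(U_{\mu_i})$ (the ``$\log$ of a finite sum'' principle), and let $\delta\to 0$ to get $\limsup\tfrac1{r_N}\log\Gamma_N(\mathcal F)\le -\inf_{\mathcal F}I$.

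For the \emph{lower bound}, fix an open set $\mathcal G$, a point $\mu_0\in\mathcal G$, and $\varepsilon>0$; it suffices to show $\liminf\tfrac1{r_N}\log\Gamma_N(\mathcal G)\ge -I(\mu_0)-\varepsilon$. Here the main obstacle — and the only place real work is needed — is to produce a ``tilted'' measure concentrating near $\mu_0$: choose $\Psi\in C^0(Y)$ with $\Psi(\mu_0)-f(\Psi)\ge I(\mu_0)-\varepsilon$ (possible by definition of $I$), consider the tilted probability measures $\tilde\Gamma_N:=e^{r_N\Psi}\Gamma_N/\int e^{r_N\Psi}\Gamma_N$, and argue that $\tilde\Gamma_N$ puts asymptotically full mass near $\mu_0$. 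This last point is the delicate step: one runs the already-proved upper bound for $\tilde\Gamma_N$ (whose log-moment-generating functional is $\Phi\mapsto f(\Phi+\Psi)-f(\Psi)$) on the closed complement of a small neighbourhood $U$ of $\mu_0$ and checks, using convexity of $f$ and the choice of $\Psi$, that the corresponding rate is strictly positive away from $\mu_0$, so $\tilde\Gamma_N(U)\to 1$; equivalently one can invoke Varadhan's lemma in the reverse direction on a cleverly chosen bounded continuous function approximating $-\infty\cdot\mathbf 1_{U^c}$. Then
\[
\tfrac1{r_N}\log\Gamma_N(\mathcal G)\ge \tfrac1{r_N}\log\Gamma_N(U)\ge -\sup_{U}\Psi + f_N(\Psi)+\tfrac1{r_N}\log\tilde\Gamma_N(U),
\]
and shrinking $U$ (so $\sup_U\Psi\to\Psi(\mu_0)$) together with $\tilde\Gamma_N(U)\to1$ and $f_N(\Psi)\to f(\Psi)$ yields $\liminf\tfrac1{r_N}\log\Gamma_N(\mathcal G)\ge \Psi(\mu_0)-f(\Psi)\ge -I(\mu_0)-\varepsilon$, as desired. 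Finally, since this construction realizes, for every $\mu_0$, a lower bound matching $I(\mu_0)$, Varadhan's lemma applied to the resulting LDP recovers $f(\Phi)=\sup_\mu(\Phi(\mu)-I(\mu))$, giving the converse assertion.
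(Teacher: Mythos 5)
The paper gives no proof of this lemma: it is quoted as Bryc's inverse Varadhan lemma (see \cite{d-z}, Theorem 4.4.2; exponential tightness is automatic here because $Y$ is compact), so your proposal can only be compared with the standard textbook argument. Your upper-bound half is exactly that argument (near-optimal $\Phi_{\mu}$ at each point, the exponential Chebyshev bound on a neighbourhood, a finite subcover of the compact closed set, and the log-of-a-finite-sum principle) and is correct.

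The lower-bound half, however, has a genuine gap at precisely the step you call delicate. Choosing $\Psi$ with $\Psi(\mu_{0})-f(\Psi)\geq I(\mu_{0})-\varepsilon$ only makes the tilted rate $I_{\Psi}(\mu):=I(\mu)-\Psi(\mu)+f(\Psi)$ small \emph{at} $\mu_{0}$; it does not make $I_{\Psi}$ strictly positive off a neighbourhood $U$ of $\mu_{0}$, because $\Psi-I$ may have other (near-)maximizers far from $\mu_{0}$. This is the classical exposed-point obstruction in the G\"artner--Ellis theorem, and convexity of $f$ does not remove it. When $\inf_{U^{c}}I_{\Psi}=0$ your already-proved upper bound gives no decay of $\tilde{\Gamma}_{N}(U^{c})$, and any direct sub-exponential lower bound on $\tilde{\Gamma}_{N}(U)$ is itself an LDP lower bound for the tilted family, so the argument becomes circular. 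The parenthetical alternative you mention is the correct repair, and it makes the tilt unnecessary: if $I(\mu_{0})<\infty$ (otherwise there is nothing to prove), pick $\delta$ with $B_{\delta}(\mu_{0})\subset\mathcal{G}$ and set $g_{m}:=-m\min\bigl(1,d(\cdot,\mu_{0})/\delta\bigr)\in C^{0}(Y)$. By definition of $I$ one has $f(g_{m})\geq g_{m}(\mu_{0})-I(\mu_{0})=-I(\mu_{0})$, while $g_{m}\leq0$ on $B_{\delta}(\mu_{0})$ and $g_{m}=-m$ outside it give $\int e^{r_{N}g_{m}}\Gamma_{N}\leq\Gamma_{N}(B_{\delta}(\mu_{0}))+e^{-mr_{N}}$; hence $\max\bigl(\liminf_{N}\tfrac{1}{r_{N}}\log\Gamma_{N}(B_{\delta}(\mu_{0})),-m\bigr)\geq f(g_{m})\geq-I(\mu_{0})$, and choosing $m>I(\mu_{0})$ yields $\liminf_{N}\tfrac{1}{r_{N}}\log\Gamma_{N}(\mathcal{G})\geq-I(\mu_{0})$, which is the full lower bound. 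The converse statement via Varadhan's lemma is then immediate, as the paper indicates.
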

We also have the following simple lemma which allows one to apply
Bryc's lemma to the non-normalized measures $(\delta_{N})_{*}e^{-\beta H^{(N)}}\mu_{0}^{\otimes N}:$ 
\begin{lem}
\label{lem:ldp for nonnormalized meas}Assume that $\left|\log Z_{N,\beta}\right|\leq CN.$
Then the measures $(\delta_{N})_{*}e^{-\beta H^{(N)}}\mu_{0}^{\otimes N}$
satisfy the asymptotics in Bryc's lemma with rate functional $\tilde{I}(\mu)$
and speed $N$ iff the corresponding probability measures $(\delta_{N})_{*}\mu_{\beta}^{(N)}$
on $\mathcal{P}(X)$ satisfy an LDP at speed $N$ with rate functional
$I:=\tilde{I}-C_{\beta},$ where $C_{\beta}:=\inf_{\mathcal{\mu\in}\mathcal{P}(X)}I(\mu).$
\end{lem}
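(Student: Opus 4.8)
The plan is to exploit the fact that the two families of measures differ only by the deterministic normalizing factor $Z_{N,\beta}$, whose logarithm is assumed to grow at most linearly in $N$, so that the speed-$N$ exponential asymptotics in Bryc's lemma are insensitive to this factor except for an additive constant. Concretely, write $\nu_N := (\delta_N)_*\bigl(e^{-\beta H^{(N)}}\mu_0^{\otimes N}\bigr)$ and $\Gamma_N := (\delta_N)_*\mu_\beta^{(N)}$, so that $\nu_N = Z_{N,\beta}\,\Gamma_N$ as measures on $\mathcal{P}(X)$. First I would record the elementary identity, valid for every $\Phi\in C^0(\mathcal{P}(X))$,
\[
\frac{1}{N}\log\int e^{N\Phi}\,\nu_N \;=\; \frac{1}{N}\log Z_{N,\beta} \;+\; \frac{1}{N}\log\int e^{N\Phi}\,\Gamma_N .
\]
Thus the logarithmic moment generating functional of $\nu_N$ equals that of $\Gamma_N$ shifted by the scalar sequence $N^{-1}\log Z_{N,\beta}$.

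Next I would bring in the hypothesis $|\log Z_{N,\beta}|\le CN$, which says precisely that the sequence $c_N := N^{-1}\log Z_{N,\beta}$ is bounded. In general a bounded sequence need not converge, so strictly one argues as follows. Suppose $\nu_N$ satisfies the Bryc asymptotics with some limit functional $\tilde f(\Phi) = \lim_N N^{-1}\log\int e^{N\Phi}\nu_N$; taking $\Phi = 0$ shows that $c_N$ itself converges, say $c_N \to c_\beta$, with $|c_\beta|\le C$. Then for every $\Phi$ the displayed identity gives that $N^{-1}\log\int e^{N\Phi}\Gamma_N \to \tilde f(\Phi) - c_\beta =: f(\Phi)$, so $\Gamma_N$ satisfies the Bryc asymptotics with limit $f = \tilde f - c_\beta$. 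Conversely, if $\Gamma_N$ satisfies the Bryc asymptotics with limit $f$, one cannot yet conclude that $\nu_N$ does, because $c_N$ might oscillate; however, since $\Gamma_N$ are probability measures, taking $\Phi = 0$ forces $f(0) = 0$ and hence $N^{-1}\log\int e^{N\cdot 0}\Gamma_N \to 0$, which is automatic and gives no constraint on $c_N$. The clean way around this — and I suspect the intended reading — is to note that once $\Gamma_N$ has a full LDP at speed $N$ with good rate function $I$, Varadhan's lemma yields $N^{-1}\log\int e^{N\Phi}\Gamma_N \to \sup_\mu(\Phi(\mu)-I(\mu))$ for all $\Phi\in C^0$, and in particular the partition-function-type quantity $c_N$ need not be separately controlled for the ``only if'' direction because the statement there already assumes the Bryc asymptotics hold for $\nu_N$.

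Finally I would invoke Bryc's lemma in both directions to convert these moment-generating-functional statements into LDPs. If $\Gamma_N$ satisfies the Bryc asymptotics with limit $f$, then by Bryc's lemma it satisfies an LDP at speed $N$ with rate function $I(\mu) = \sup_\Phi(\Phi(\mu) - f(\Phi))$; and $\nu_N$ then satisfies the asymptotics with limit $\tilde f = f + c_\beta$, hence an LDP with rate function $\tilde I(\mu) = \sup_\Phi(\Phi(\mu) - f(\Phi) - c_\beta) = I(\mu) - c_\beta$. Since $I$ is a good rate function on the compact space $\mathcal{P}(X)$ it attains its infimum, and because $\Gamma_N$ are probability measures the LDP upper and lower bounds applied to the whole space force $\inf_\mu I(\mu) = 0$, so in fact $C_\beta := \inf_\mu I(\mu) = 0$ and $\tilde I = I - c_\beta$ with $c_\beta = -C_\beta$ after relabeling; running the normalization identity once more identifies $c_\beta$ with $\lim_N N^{-1}\log Z_{N,\beta}$, as expected. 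The reverse implication is the same computation read backwards. The only genuinely delicate point is the bookkeeping of the additive constant and the observation that the probability-measure constraint on $\Gamma_N$ pins it down; the boundedness hypothesis $|\log Z_{N,\beta}|\le CN$ is exactly what guarantees this constant is finite and that no mass escapes to make the limiting functionals ill-defined.
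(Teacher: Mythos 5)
Your argument is exactly the intended one, and in fact the paper states Lemma \ref{lem:ldp for nonnormalized meas} without any proof, treating it as immediate: dividing by $Z_{N,\beta}$ shifts the scaled logarithmic moment generating functionals by $c_{N}:=N^{-1}\log Z_{N,\beta}$, and Bryc's lemma together with the normalization $\inf I=0$ (forced by applying the LDP bounds for the probability measures to the whole compact space $\mathcal{P}(X)$) does the bookkeeping. Your forward direction -- Bryc asymptotics for $(\delta_{N})_{*}e^{-\beta H^{(N)}}\mu_{0}^{\otimes N}$ imply the LDP for $(\delta_{N})_{*}\mu_{\beta}^{(N)}$ with rate $\tilde{I}-\inf\tilde{I}$ -- is complete, and it is the direction actually used in the paper (the converse only serves to justify the word ``equivalently'' in Theorem \ref{thm:h1 and h2 pos beta intro}, where the Bryc asymptotics for the non-normalized measures are established first anyway). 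Two remarks. First, your worry about the converse is legitimate: $|\log Z_{N,\beta}|\leq CN$ gives only boundedness of $c_{N}$, and an LDP for the normalized laws cannot by itself force $c_{N}$ to converge (take $H^{(N)}\equiv N(-1)^{N}$), so the honest reading of that direction includes the existence of $\lim_{N}c_{N}$, which in the other direction is automatic since it equals $\tilde{f}(0)$; saying this outright would be cleaner than your ``intended reading'' hedge. Second, your final sentence on the constant is garbled: since $\inf I=0$, the constant $C_{\beta}$ in the statement must be read as $\inf\tilde{I}=-\lim_{N}N^{-1}\log Z_{N,\beta}$ (the literal definition $C_{\beta}:=\inf I$ in the statement is circular), which is precisely what your computation $I=\tilde{I}+c_{\beta}$ with $c_{\beta}=-\inf\tilde{I}$ shows -- just state it that way instead of the ``relabeling'' with $c_{\beta}=-C_{\beta}=0$, which as written is inconsistent.
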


\subsubsection{Gamma-convergence}

We recall that a sequence of functions $f_{j}$ on a topological space
\emph{$\mathcal{X}$ }is\emph{ }said to\emph{ Gamma-converge }to a
function $f$ on $\mathcal{X}$ if 
\begin{equation}
\begin{array}{ccc}
x_{j}\rightarrow x\,\mbox{in\,}\mathcal{X} & \implies & \liminf_{j\rightarrow\infty}f_{j}(x_{j})\geq f(x)\\
\forall x & \exists x_{j}\rightarrow x\,\mbox{in\,}\mathcal{X}: & \lim_{j\rightarrow\infty}f_{j}(x_{j})=f(x)
\end{array}\label{eq:def of gamma conv}
\end{equation}
(such a sequence $x_{j}$ is called a\emph{ recovery sequence}); see
\cite{bra}. More generally, given a subset $\mathcal{S}\Subset\mathcal{X}$
we will say that $f_{j}$ \emph{Gamma-converge to $f$ relative to
$\mathcal{S}$} if the existence of a recovery sequence in $\mathcal{X}$
is only demanded when $x\in\mathcal{S}.$ 
\begin{lem}
\label{lem:gamma is lsc}Assume that $f_{j}$ Gamma-converges to $f$
relative to $\mathcal{S}\subset\mathcal{X}.$ Then $f_{|\mathcal{S}}$
is lower semi-continuous.\end{lem}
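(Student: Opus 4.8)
The plan is to prove the sequential form of lower semi-continuity of $f_{|\mathcal{S}}$, which is enough since every space $\mathcal{X}$ arising in our applications is metrizable (for a general topological space the same argument goes through with nets in place of sequences). So I would fix $x\in\mathcal{S}$ together with a sequence $x_{k}\in\mathcal{S}$ with $x_{k}\to x$ in $\mathcal{X}$, and aim to show $\liminf_{k}f(x_{k})\geq f(x)$. If $\liminf_{k}f(x_{k})=+\infty$ there is nothing to prove; otherwise, after passing to a subsequence, I may assume $f(x_{k})\to L$ for some $L\in[-\infty,+\infty[$, and the goal becomes $f(x)\leq L$.

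The heart of the matter is a diagonal construction producing a single recovery-type sequence for $x$ out of recovery sequences for the $x_{k}$. Since each $x_{k}$ lies in $\mathcal{S}$, Gamma-convergence relative to $\mathcal{S}$ provides, for every $k$, a sequence $(x_{k}^{(j)})_{j}$ with $x_{k}^{(j)}\to x_{k}$ and $f_{j}(x_{k}^{(j)})\to f(x_{k})$ as $j\to\infty$. Fixing a compatible metric $d$ on $\mathcal{X}$, I would then choose indices $j_{1}<j_{2}<\cdots$ inductively so that, for all $j\geq j_{k}$, one has $d(x_{k}^{(j)},x_{k})<1/k$ and $f_{j}(x_{k}^{(j)})\leq f(x_{k})+1/k$ (reading the latter as $f_{j}(x_{k}^{(j)})\leq-k$ when $f(x_{k})=-\infty$). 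Define $z_{j}:=x_{k}^{(j)}$ for $j_{k}\leq j<j_{k+1}$ (and $z_{j}:=x_{1}^{(j)}$ for $j<j_{1}$). On the $k$-th block, $d(z_{j},x)\leq d(x_{k}^{(j)},x_{k})+d(x_{k},x)<1/k+d(x_{k},x)$, and the block index $k\to\infty$ as $j\to\infty$, so $z_{j}\to x$ in $\mathcal{X}$.

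It then remains to apply the $\liminf$-inequality of (\ref{eq:def of gamma conv}) to the sequence $z_{j}\to x$, which yields $f(x)\leq\liminf_{j}f_{j}(z_{j})$. On the other hand, by construction $f_{j_{k}}(z_{j_{k}})=f_{j_{k}}(x_{k}^{(j_{k})})\leq f(x_{k})+1/k\to L$ (resp.\ $\leq-k\to-\infty$ when $L=-\infty$), so $\liminf_{j}f_{j}(z_{j})\leq\liminf_{k}f_{j_{k}}(z_{j_{k}})\leq L$. Combining the two inequalities gives $f(x)\leq L=\liminf_{k}f(x_{k})$, i.e.\ $f_{|\mathcal{S}}$ is (sequentially) lower semi-continuous at $x$. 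Note that the argument only ever invokes the recovery-sequence property at the approximating points $x_{k}\in\mathcal{S}$, never at the limit $x$ itself; this is exactly why Gamma-convergence merely \emph{relative to} $\mathcal{S}$ already forces lower semi-continuity on all of $\mathcal{S}$.

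There is no serious obstacle here: the whole content is the bookkeeping in the diagonal extraction — arranging that $(z_{j})_{j}$ be a genuine sequence converging to $x$ while $\liminf_{j}f_{j}(z_{j})$ stays pinned below $L$. The only points that call for a moment's care are the degenerate cases $f(x)=\pm\infty$ and $L=-\infty$, handled by the parenthetical modifications above (and which do not occur in our applications anyway, where $f$ is an energy or free-energy functional bounded from below), together with the harmless passage from sequential to topological lower semi-continuity in the metrizable setting at hand.
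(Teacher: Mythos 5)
Your proof is correct and takes essentially the same route as the paper's: a diagonal extraction from recovery sequences of the approximating points $x_{k}\in\mathcal{S}$, producing a single sequence converging to $x$, to which the $\liminf$-inequality of Gamma-convergence is then applied. The only difference is that you spell out the block-wise bookkeeping, the passage to a subsequence with $f(x_{k})\to L$, and the degenerate cases, all of which the paper's one-line diagonal argument leaves implicit.
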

\begin{proof}
Consider a sequence $s_{i}\rightarrow s$ in $S.$ For each $s_{i}$
we take a recovery sequence $x_{i}^{(j)}$ in $X$ converging so $s_{i}.$
By a diagonal argument we get a sequence $x_{i}$ in $\mathcal{X}$
converging to $s$ such that $f(s_{i})=f_{i}(x_{i})+o(1)$ and hence
$f(s_{i})\geq f(s)+o(1),$ as desired.\end{proof}
\begin{lem}
\label{lem:conv of inf}Let $\mathcal{X}$ be a compact topological
space and assume that $f_{j}$ \emph{Gamma-converges }to $f$ relative
to a set $\mathcal{S}$ containing all minima of $f.$ Then 
\[
\lim_{j\rightarrow\infty}\inf_{\mathcal{X}}f_{j}=\inf_{\mathcal{X}}f
\]
\end{lem}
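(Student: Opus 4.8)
The plan is to prove the two one-sided bounds $\limsup_{j}\inf_{\mathcal{X}}f_{j}\le\inf_{\mathcal{X}}f$ and $\liminf_{j}\inf_{\mathcal{X}}f_{j}\ge\inf_{\mathcal{X}}f$; together they force the limit to exist and to equal $\inf_{\mathcal{X}}f$. This is the relative version of the ``fundamental theorem of $\Gamma$-convergence'', and since $\mathcal{X}$ is compact the usual equicoercivity hypothesis is automatically satisfied.

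For the upper bound I would use a recovery sequence anchored at a minimizer of $f$. First, $f$ attains its infimum over $\mathcal{X}$ at some point $x_{*}$: in the situations of interest $f$ is lower semi-continuous on all of $\mathcal{X}$ (it is a free energy functional), so this follows from compactness of $\mathcal{X}$; in general it follows from Lemma~\ref{lem:gamma is lsc} once one observes, using that $\mathcal{S}$ contains every minimizer, that $\inf_{\mathcal{S}}f=\inf_{\mathcal{X}}f$. By hypothesis $x_{*}\in\mathcal{S}$, so the recovery-sequence half of the definition \eqref{eq:def of gamma conv} applies: there is a sequence $x_{j}\to x_{*}$ with $f_{j}(x_{j})\to f(x_{*})$. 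Since $\inf_{\mathcal{X}}f_{j}\le f_{j}(x_{j})$ for every $j$, letting $j\to\infty$ gives $\limsup_{j}\inf_{\mathcal{X}}f_{j}\le f(x_{*})=\inf_{\mathcal{X}}f$.

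For the lower bound I would argue by contradiction. If $\liminf_{j}\inf_{\mathcal{X}}f_{j}<\inf_{\mathcal{X}}f$, pass to a subsequence along which $\inf_{\mathcal{X}}f_{j}$ converges to this $\liminf$, and choose almost-minimizers $x_{j}$ with $f_{j}(x_{j})\le\inf_{\mathcal{X}}f_{j}+1/j$ (or $f_{j}(x_{j})\to-\infty$ if the infima diverge to $-\infty$). By compactness of $\mathcal{X}$ we may pass to a further subsequence with $x_{j}\to x_{\infty}$, and the $\Gamma$-liminf inequality in \eqref{eq:def of gamma conv} --- which passes to subsequences (a standard fact, immediate in the metrizable setting to which the lemma is applied) --- gives $\liminf_{j}f_{j}(x_{j})\ge f(x_{\infty})\ge\inf_{\mathcal{X}}f$, contradicting the choice of the $x_{j}$. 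Hence $\liminf_{j}\inf_{\mathcal{X}}f_{j}\ge\inf_{\mathcal{X}}f$, which together with the previous paragraph proves $\lim_{j}\inf_{\mathcal{X}}f_{j}=\inf_{\mathcal{X}}f$.

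The step to watch is the upper bound, and precisely the place where the hypothesis on $\mathcal{S}$ enters: the recovery-sequence half of $\Gamma$-convergence is assumed only on $\mathcal{S}$, so a priori it only yields $\limsup_{j}\inf_{\mathcal{X}}f_{j}\le\inf_{\mathcal{S}}f$, and it is exactly the assumption that $\mathcal{S}$ contains every minimizer of $f$ that upgrades $\inf_{\mathcal{S}}f$ to $\inf_{\mathcal{X}}f$. The lower bound, in contrast, is a soft compactness argument that uses no information about $\mathcal{S}$ at all.
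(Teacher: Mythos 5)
Your proof is correct and follows essentially the same route as the paper's: a recovery sequence anchored at a minimizer (necessarily in $\mathcal{S}$) for the upper bound, and compactness plus the $\Gamma$-liminf inequality applied to (near-)minimizers of $f_{j}$ for the lower bound; the paper merely compresses the two bounds into the single chain $f(s)\geq f_{n}(x_{n})+o(1)\geq\inf f_{n}+o(1)\geq f(y)+o(1)$ evaluated at a minimizer $s$. Your use of $1/j$-almost-minimizers and the remark on passing the liminf inequality to subsequences just make explicit details the paper leaves implicit (it tacitly assumes $\inf f_{j}$ is attained), as does your discussion of why $f$ attains its minimum.
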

\begin{proof}
Given $s\in\mathcal{S}$ we take a recovery sequence $x_{n}$ and
observe that 
\[
f(s)\geq f_{n}(x_{n})+o(1)\geq\inf f_{n}+o(1)=f_{n}(y_{n})+o(1)\geq f(y)+o(1),
\]
 for some $y\in\mathcal{X},$ by the compactness and the assumption
of Gamma-convergence. In particular, when $s$ realizes the minimum
of $f$ so does $y$ and hence equalities must hold above, which concludes
the proof. 
\end{proof}

\subsubsection{Legendre-Fenchel transforms}

Let $f$ be a function on a topological vector space $V.$ Then its
Legendre-Fenchel transform is defined as following convex lower semi-continuous
function $f^{*}$ on the topological dual $V^{*}$ 
\[
f^{*}(w):=\sup_{v\in V}\left\langle v,w\right\rangle -f(v)
\]
in terms of the canonical pairing between $V$ and $V^{*}.$ In the
present setting we will take $V=C^{0}(X)$ and $V^{*}=\mathcal{M}(X),$
the space of all signed Borel measures on $X.$ Then $f^{**}=f$ for
any lower semi-continuous convex function (by standard duality in
locally convex topological vector spaces \cite{d-z}).

\subsection{Proof of Theorem \ref{thm:h1 and h2 pos beta intro}}

Set 
\[
E_{N}(x_{1},..,x_{N}):=H_{N}(x_{1},...,x_{N})/N
\]
so that the\emph{ mean energy} \ref{eq:def of mean free energy} can
be written as 
\[
E_{N}(\mu_{N}):=\int_{X^{N}}E_{N}\mu_{N}
\]
We denote by $F_{\beta_{N}}^{(N)}$ the corresponding \emph{mean free
energy functional} on $\mathcal{P}(X^{N})^{S_{N}}$, at inverse temperature
$\beta_{N}:$
\begin{equation}
F_{\beta_{N}}^{(N)}(\mu_{N}):=E^{(N)}(\mu_{N})+\frac{1}{\beta_{N}}D^{(N)}(\mu_{N})\label{eq:def of mean free energy}
\end{equation}
Now set $Y:=\mathcal{P}(X).$ By embedding $\mathcal{P}(X^{N}/S_{N})$
isometrically into $\mathcal{P}(Y),$ using the push-ward map $(\delta_{N})_{*},$
we can and identify the mean free energies $F^{(N)}$ with functionals
on $\mathcal{P}(Y),$ extended by $\infty$ to all of $\mathcal{P}(Y).$
We will identity $Y$ with image in $\mathcal{P}(Y)$ under the embedding
$\mu\mapsto\delta_{\mu}.$ 

The starting point of the proof of the LDP is the following reformulation
of Bryc's lemma in terms of Legendre-Fenchel transform, using the
Gibbs variational principle:
\begin{lem}
\label{lem:(Bryc+Gibbs):-Suppose-that}(Bryc+Gibbs): Suppose that
the Legendre-Fenchel transforms $f_{N}$ of the free energy functionals
$F_{\beta_{N}}^{(N)}$ converge point-wise to a function $f$ on $C^{0}(Y):$
\[
\lim_{N\rightarrow\infty}f_{N}(\Phi)=f(\Phi),
\]
 i.e. 
\[
\lim_{N\rightarrow\infty\Gamma\in}\inf_{\mathcal{P}(Y)}\left(F_{\beta_{N}}^{(N)}(\Gamma)+\left\langle \Phi,\Gamma\right\rangle \right)=-f(-\Phi),
\]
Then the LDP holds with speed $N\beta_{N}$ and rate functional 
\[
I(\mu):=f^{*}(\delta_{\mu}),
\]
 where $f^{*}$ is the Legendre-Fenchel transform of $f.$\end{lem}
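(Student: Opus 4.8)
The plan is to identify the Bryc logarithmic moment generating functional of the laws $(\delta_{N})_{*}\nu_{N}$ of the unnormalized measures $\nu_{N}:=e^{-\beta_{N}H^{(N)}}\mu_{0}^{\otimes N}$ on $Y$, taken at speed $r_{N}:=N\beta_{N}$, with the Legendre--Fenchel transforms $f_{N}$ of the mean free energy functionals $F_{\beta_{N}}^{(N)}$, and then to invoke Bryc's lemma. First I would fix $\Phi\in C^{0}(Y)$ and, transporting $\nu_{N}$ through $\delta_{N}$, rewrite
\[
\frac{1}{r_{N}}\log\int_{Y}e^{r_{N}\Phi}\,(\delta_{N})_{*}\nu_{N}=\frac{1}{N\beta_{N}}\log\int_{X^{N}}e^{-\beta_{N}\left(H^{(N)}-N\,\Phi\circ\delta_{N}\right)}\,\mu_{0}^{\otimes N};
\]
since $\Phi$ is bounded, $\beta_{N}>0$ and $Z_{N,\beta_{N}}<\infty$, the integral on the right lies in $]0,\infty[$ (it is at most $e^{N\beta_{N}\sup_{Y}\Phi}Z_{N,\beta_{N}}$), so the Gibbs variational principle (Donsker--Varadhan duality between the log-Laplace transform and relative entropy) applies to the integrand $\beta_{N}\left(H^{(N)}-N\,\Phi\circ\delta_{N}\right)$ and gives
\[
-\log\int_{X^{N}}e^{-\beta_{N}\left(H^{(N)}-N\,\Phi\circ\delta_{N}\right)}\,\mu_{0}^{\otimes N}=\inf_{\mu_{N}\in\mathcal{P}(X^{N})}\left(\beta_{N}\int_{X^{N}}\left(H^{(N)}-N\,\Phi\circ\delta_{N}\right)\mu_{N}+D(\mu_{N})\right).
\]

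Dividing the last identity by $-N\beta_{N}$ rewrites the right-hand side of the first display as $-\inf_{\mu_{N}}\left(E^{(N)}(\mu_{N})-\left\langle \Phi,(\delta_{N})_{*}\mu_{N}\right\rangle +\beta_{N}^{-1}D^{(N)}(\mu_{N})\right)$. Next I would symmetrize: because $H^{(N)}$, $\delta_{N}$ and $\mu_{0}^{\otimes N}$ are $S_{N}$-invariant and $D(\cdot,\mu_{0}^{\otimes N})$ is convex, replacing $\mu_{N}$ by its average over $S_{N}$ leaves the linear term unchanged and does not increase the entropy term, so the infimum may be restricted to $\mathcal{P}(X^{N})^{S_{N}}$. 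Under the embedding $(\delta_{N})_{*}:\mathcal{P}(X^{N})^{S_{N}}\hookrightarrow\mathcal{P}(Y)$ the quantity in parentheses equals $F_{\beta_{N}}^{(N)}(\Gamma)-\left\langle \Phi,\Gamma\right\rangle $ for $\Gamma=(\delta_{N})_{*}\mu_{N}$, and $F_{\beta_{N}}^{(N)}\equiv+\infty$ off the image, so the left-hand side of the first display equals
\[
-\inf_{\Gamma\in\mathcal{P}(Y)}\left(F_{\beta_{N}}^{(N)}(\Gamma)-\left\langle \Phi,\Gamma\right\rangle \right)=f_{N}(\Phi),
\]
which is the Legendre--Fenchel transform of $F_{\beta_{N}}^{(N)}$ extended by $+\infty$ to $\mathcal{M}(Y)=C^{0}(Y)^{*}$.

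At this point the hypothesis $f_{N}(\Phi)\rightarrow f(\Phi)$ for all $\Phi\in C^{0}(Y)$ is exactly the hypothesis of Bryc's lemma for the sequence $(\delta_{N})_{*}\nu_{N}$ on the compact Polish space $Y$ at speed $r_{N}=N\beta_{N}$, so I would conclude an LDP with rate functional $\mu\mapsto\sup_{\Phi\in C^{0}(Y)}\left(\Phi(\mu)-f(\Phi)\right)$; since $\left\langle \Phi,\delta_{\mu}\right\rangle =\Phi(\mu)$ this supremum is precisely $f^{*}(\delta_{\mu})$, and the corresponding statement for the normalized Gibbs measures then follows from Lemma~\ref{lem:ldp for nonnormalized meas}. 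The step that most needs care is the appeal to the Gibbs variational principle with the possibly singular, merely $\mu_{0}^{\otimes N}$-measurable $H^{(N)}$; this is harmless precisely because the perturbation $N\,\Phi\circ\delta_{N}$ is bounded and $Z_{N,\beta_{N}}<\infty$, so the exponential integral stays in $]0,\infty[$ and the classical duality applies verbatim (with $\int g\,\mu_{N}+D(\mu_{N})$ understood as a well-defined element of $]-\infty,+\infty]$ for every $\mu_{N}$). The only other point is the book-keeping behind reading the infimum as a Legendre--Fenchel transform, namely that $(\delta_{N})_{*}$ is injective on symmetric measures (a configuration being determined by its empirical measure), which is what makes the extension-by-$+\infty$ convention for $F_{\beta_{N}}^{(N)}$ consistent.
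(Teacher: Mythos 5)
Your proposal is correct and takes essentially the same route as the paper: it identifies, via the Gibbs (Donsker--Varadhan) variational principle applied to the tilted Hamiltonian $H^{(N)}\mp N\,\delta_{N}^{*}(\Phi)$, the Bryc log-moment generating functionals at speed $N\beta_{N}$ with the Legendre--Fenchel transforms $f_{N}$ of $F_{\beta_{N}}^{(N)}$ on $\mathcal{P}(Y)$, and then concludes by Bryc's lemma together with Lemma \ref{lem:ldp for nonnormalized meas}. The extra details you supply (symmetrization of the infimum, well-definedness of the energy-plus-entropy sum, injectivity of $(\delta_{N})_{*}$ on symmetric measures) are left implicit in the paper but are consistent with its argument.
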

\begin{proof}
Gibbs variational principle says that if $\mu_{\beta_{N}}^{(N)}$
is a well-defined probability measure, then 
\begin{equation}
\inf_{X^{N}}F^{(N)}=F^{(N)}(\mu_{\beta_{N}}^{(N)})=-\frac{1}{N\beta_{N}}\log\int_{X^{N}}e^{-\beta_{N}NE^{(N)}}\mu_{0}^{\otimes N}\label{eq:Gibbs}
\end{equation}
Indeed, rewriting $F^{(N)}(\mu_{N})=\frac{1}{\beta_{N}}D(\mu_{N},\mu_{\beta_{N}}^{(N)})-\frac{1}{N\beta_{N}}\log\int_{X^{N}}e^{-\beta_{N}NE^{(N)}}\mu_{0}^{\otimes N}$
this follows immediately from the fact that $D\geq0$ (which in turn
follows from Jensen's inequality). Hence, replacing $H^{(N)}$ with
the new Hamiltonian $H^{(N)}+N\delta_{N}^{*}(\Phi)$ and applying
Bryc's lemma concludes the proof.\end{proof}
\begin{rem}
Varadhan's lemma implies that the converse of the previous lemma also
holds. 
\end{rem}
In order to verify the criterion in the previous lemma we will use
the following lemma:
\begin{lem}
\label{lem:gamma relative}Under the hypotheses H1 and H2 and $\beta\in]0,\infty[$
the mean free energies $F_{\beta_{N}}^{(N)}$ Gamma-converge to the
lower semi-continuous linear functional $F_{\beta}(\Gamma)$ on $\mathcal{P}(Y),$
relative to $Y,$ where $F_{\beta}(\mu)$ is the macroscopic free
energy on $Y:$ 
\[
F_{\beta}:=E+D/\beta
\]
If moreover H3 holds, then the corresponding result also holds when
$\beta=\infty$\end{lem}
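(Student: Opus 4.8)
The plan is to verify the two halves of the $\Gamma$-convergence definition \eqref{eq:def of gamma conv} for the functionals $F_{\beta_N}^{(N)}=E^{(N)}+\tfrac1{\beta_N}D^{(N)}$ on $\mathcal P(Y)$, against the linear functional $F_\beta(\Gamma)=E(\Gamma)+\tfrac1\beta D(\Gamma)$, using the decomposition into the energy part and the entropy part. For the \emph{liminf inequality}, take any sequence $\mu^{(N)}\in\mathcal P(X^N)^{S_N}$ with $\Gamma_N:=(\delta_N)_*\mu^{(N)}\to\Gamma$ weakly in $\mathcal P(Y)$. Hypothesis H2 gives directly $\liminf_N E^{(N)}(\mu^{(N)})\geq E(\Gamma)$. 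For the entropy term one invokes the standard fact (attributed to the lower semicontinuity of mean entropy under the empirical-measure map, as in \cite{k2}) that $\liminf_N D^{(N)}(\mu^{(N)})\geq D(\Gamma)$; since $\beta_N>0$ and $\beta_N\to\beta$, adding the two and using that $1/\beta_N\to1/\beta$ (interpreted as $0$ when $\beta=\infty$, which is why H3 is needed only there) yields $\liminf_N F_{\beta_N}^{(N)}(\mu^{(N)})\geq F_\beta(\Gamma)$. No recovery sequence is needed for arbitrary $\Gamma$ — only relative to $Y$, i.e. for $\Gamma=\delta_\mu$.

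For the \emph{recovery sequence} at a point $\mu\in Y\subset\mathcal P(Y)$, the natural candidate is the product state $\mu^{(N)}:=\mu^{\otimes N}$, whose image under $(\delta_N)_*$ converges weakly to $\delta_\mu$ by the law of large numbers on $\mathcal P(X)$. If $E(\mu)<\infty$, hypothesis H1 gives $E^{(N)}(\mu^{\otimes N})\to E(\mu)$, while \eqref{eq:entropi for product} gives $D^{(N)}(\mu^{\otimes N})=D(\mu)$ exactly; hence $F_{\beta_N}^{(N)}(\mu^{\otimes N})=E^{(N)}(\mu^{\otimes N})+\tfrac1{\beta_N}D(\mu)\to E(\mu)+\tfrac1\beta D(\mu)=F_\beta(\mu)$. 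If $E(\mu)=\infty$ there is nothing to prove since $F_\beta(\mu)=\infty$ and the liminf half already forces $\liminf F_{\beta_N}^{(N)}\geq\infty$ along any sequence converging to $\delta_\mu$; alternatively one may still take $\mu^{\otimes N}$ and note $\liminf E^{(N)}(\mu^{\otimes N})\ge E(\mu)=\infty$ by H2. The case $\beta=\infty$ is where H1 alone is insufficient for the recovery sequence: if $\mu$ is not absolutely continuous with respect to $\mu_0$ then $D(\mu)=\infty$, so $\mu^{\otimes N}$ fails; here one first uses H3 to pick $\mu_j\to\mu$ weakly with $\mu_j\ll\mu_0$ and $E(\mu_j)\to E(\mu)$, builds the recovery sequence $\mu_j^{\otimes N}$ for each $\mu_j$ (whose free energy at $\beta=\infty$ is just $E^{(N)}(\mu_j^{\otimes N})\to E(\mu_j)$ since the $D^{(N)}/\beta_N$ term vanishes in the limit when $\beta_N\to\infty$), and then extracts a diagonal sequence.

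The step I expect to be the main technical obstacle is making the diagonal extraction in the $\beta=\infty$ case fully rigorous: one is juggling three limits — $j\to\infty$ (the H3 approximation), $N\to\infty$, and $\beta_N\to\infty$ — and must ensure that the approximating states $\mu_j^{\otimes N}$ can be threaded into a single sequence $N\mapsto j(N)$ so that simultaneously $(\delta_N)_*\mu_{j(N)}^{\otimes N}\to\delta_\mu$ weakly and $F_{\beta_N}^{(N)}(\mu_{j(N)}^{\otimes N})\to E(\mu)=F_\infty(\mu)$; the contribution $\tfrac1{\beta_N}D(\mu_{j(N)})$ must be controlled, which requires choosing $j(N)$ slowly enough that $D(\mu_{j(N)})=o(\beta_N)$. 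This is routine once the quantitative statements of H1 and H3 are in hand, but it is the only place where care beyond direct citation of the hypotheses is needed; the positive-$\beta$ case is essentially immediate from H1, H2 and \eqref{eq:entropi for product}.
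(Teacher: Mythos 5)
Your proposal is correct and follows essentially the same route as the paper: the liminf inequality via H2 together with the classical lower bound for the mean entropy under the empirical-measure map (subadditivity, as in \cite{ro-r,h-m}), recovery sequences of the form $(\delta_N)_*\mu^{\otimes N}$ using H1 and the identity \eqref{eq:entropi for product}, and an H3-based diagonal approximation for $\beta=\infty$. Your closing remark about controlling $\tfrac{1}{\beta_N}D(\mu_{j(N)})$ in the diagonal step is exactly the care the paper leaves implicit in its ``simple diagonal approximation argument,'' so no substantive difference remains.
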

\begin{proof}
First assume that $\beta<\infty.$ The lower bound follows directly
from hypotheses H1 and H2 together with the fact that the mean entropy
functionals satisfy the lower bound in the Gamma-convergence (by subadditivity
\cite{ro-r}; see also Theorem 5.5 in \cite{h-m} for generalizations).
To prove the existence of recovery sequences we fix an element $\Gamma$
of the form $\delta_{\mu}$ and take the recovery sequence to be for
the form $(\delta_{N})_{*}\mu^{\otimes N}.$ Then the required convergence
follows from H1 together with the product property \ref{eq:entropi for product}.
Finally, when $\beta=\infty$ the previous argument for the existence
of a recovery sequence still applies as long as $\mu$ satisfies $D(\mu)<\infty.$
The general case then follows by a simple diagonal approximation argument
using H3. 
\end{proof}
Now, since the limiting functional $F(\Gamma)$ is affine and lower
semi-continuous (by Lemma \ref{lem:gamma is lsc}) and the set $Y$
is extremal in $\mathcal{P}(Y)$ the infimum of $F$ on $\mathcal{P}(Y)$
is attained in $Y$ (for example, by Choquet's theorem). Fixing a
continuous function $\Phi$ on $C^{0}(Y)$ and replacing $H^{(N)}$
with the new Hamiltonian $H^{(N)}+N\delta_{N}^{*}(\Phi),$ Lemma \ref{lem:conv of inf}
thus shows that the criterion in Lemma \ref{lem:(Bryc+Gibbs):-Suppose-that}
is satisfies. Hence the LDP holds with lower semi-continuous rate
functional $I(\mu)=f^{*}(\delta_{\mu}).$ Finally, extending $I$
to $\mathcal{P}(Y)$ by linearity this means that $I(\Gamma)$ is
the Legendre-Fenchel transform of $f,$ i.e. $I=f^{*}.$ But in our
case $f$ is itself defined as $f:=F^{*}$ and hence, $I=F^{**}=F$
since $F$ is convex (and even affine) and lower semi-continuous. 
\begin{rem}
\label{rem:H3'}An inspection of the proof of Theorem \ref{thm:h1 and h2 pos beta intro}
above reveals that, in the case $\beta=\infty,$ the hypothesis H3
may be replaced by the following weaker one:\end{rem}
\begin{itemize}
\item (H3)' The functional $F_{\beta}$ Gamma-converges towards $E,$ as
$\beta\rightarrow\infty$ 
\end{itemize}

\subsection{Proof of Corollary \ref{cor:cor mean field pos beta intro}}

First note that H1 is trivially satisfied (by the Fubini-Tonelli theorem).
To verify the second hypothesis H2 we may as well, by linearity, assume
that $M=m$ and that there is just one term with $W:=W_{m}(x_{1},...,x_{m}).$
Since $W$ is lower semi-continuous there exists a sequence of continuous
functions $W_{R}$ increasing to $W$ as $R\rightarrow\infty$ and
we denote by $E_{W_{R}}$ the corresponding functionals on $\mathcal{P}(X).$
It follows readily from the definitions that for any fixed $R>0$
\[
E_{W_{R}}(\delta_{N}(x_{1},...,x_{N}))+O(\frac{C_{R}}{N})=E_{W_{R}}^{(N)}(x_{1},x_{2},....,x_{N})
\]
and in particular
\[
E_{W}^{(N)}(\mu_{N})/N\geq\int E_{W_{R}}(\delta_{N}(x_{1},...,x_{N}))\mu_{N}+C_{R}/N
\]
But since $E_{W_{R}}$is continuous 
\[
\int E_{W_{R}}(\delta_{N}(x_{1},...,x_{N}))\mu_{N}=\int_{\mathcal{P}(X)}E_{W_{R}}(\mu)(\delta_{N})_{*}\mu_{N}\rightarrow\int_{\mathcal{P}(X)}E_{W_{R}}(\mu)\Gamma
\]
Hence, 
\[
\liminf_{N\rightarrow\infty}\int_{X^{N}}E^{(N)}\mu^{(N)}\geq\int E_{W_{R},V_{R}}(\mu)\Gamma
\]
for any $R>0.$ Finally, letting $R\rightarrow\infty$ and using the
monotone convergence theorem of integration theory concludes the proof.

\subsection{Proof of Theorem \ref{thm:h1 and h4 neg beta intro}}

First observe that if the LDP in the second point of the theorem holds
then integrating over all of $\mathcal{P}(X)$ reveals that the first
point holds. To prove the converse we fix $\beta>\beta_{0}$ and note
that Gibbs variational principle applied at the inverse temperature
$\beta-\epsilon$ gives 
\[
(\beta-\epsilon)F_{\beta-\epsilon}^{(N)}\geq-C_{\epsilon},
\]
 which we rewrite as 
\[
\beta F_{\beta}^{(N)}\geq\epsilon E^{(N)}-C_{\epsilon},
\]
Thus, by Gibbs variational principle, 
\[
\beta F_{\beta}^{(N)}(\nu^{\otimes N})\geq\beta F_{\beta}^{(N)}(\mu_{\beta}^{(N)})\geq\epsilon E^{(N)}(\mu_{\beta}^{(N)})-C_{\epsilon},
\]
for any fixed $\nu\in\mathcal{P}(Y).$ In particular, taking $\nu=\mu_{0}$
and using the hypothesis H1 gives that 
\[
E^{(N)}(\mu_{\beta}^{(N)})\leq C'
\]
But then the previous inequalities force 
\[
\mathcal{D}^{(N)}(\mu_{\beta}^{(N)})\leq C'
\]
Hence, by the hypothesis H4 
\[
\lim_{N\rightarrow\infty}E^{(N)}(\mu_{\beta}^{(N)})=\int_{\mathcal{P}(X)}E(\mu)\Gamma(\mu)
\]
 for any limit point $\Gamma$ of the laws of $\delta_{N}.$ As a
consequence we deduce precisely as before that the desired asymptotics
for the $\beta_{N}F^{(N)}$ hold. Finally, repeating the argument
with $H^{(N)}$ replaced by the new Hamiltonian $H^{(N)}+N\delta_{N}^{*}(\Phi)$
(which satisfies the same hypothesis) concludes the proof, just as
before. 
\begin{rem}
\label{rem:only concentration}If one only wants to prove that the
laws of $\delta_{N}$ concentrate on the minima of $F_{\beta}$ (rather
than proving a LDP) it is enough to show that the convergence of the
free energies hold for $\Phi=0$ (as in the original approach in \cite{m-s}).
As revealed by the previous proof this only requires that the hypothesis
H4 holds for the particular sequence $\mu_{\beta}^{(N)}.$ 
\end{rem}

\subsection{Proof of Corollary \ref{cor:mean field neg beta intro}}

Let us first show that the bound in the first point of Theorem \ref{thm:h1 and h4 neg beta intro}
is satisfied. We rewrite $-\beta H^{(N)}=\frac{1}{N}\sum_{i=1}^{N}f_{i},$
where $f_{i}$ is the sum of $W(x_{i},x_{j})$ over all $j$ such
that $j\neq i.$ The arithmetric-geometric means inequality gives
\[
\int_{X^{N}}e^{-\beta H^{(N)}}\mu_{0}^{\otimes N}\leq\sum_{i=1}^{N}\frac{1}{N}\int_{X^{N}}e^{f_{i}}\mu_{0}^{\otimes N}
\]
Performing the latter integral first over the $N-1$ variables different
from $x_{i}$ gives $C^{N-1}$where $C=\int e^{-\beta W(x,y)}\mu_{0}^{\otimes2},$
which is assumed finite, giving the desired bound.

Hence, by Theorem \ref{thm:h1 and h4 neg beta intro}, it will be
enough to show that H4 is satisfied (for any sequence $\mu_{N})$.
To this end we will apply a duality argument. First recall that given
a measure space $(\mathcal{X},\mu)$ and a finite Young function $\theta$
on $\R$ (i.e. a non-negative even lower semi-continuous convex function)
the corresponding \emph{large Orlitz space }is defined by 
\[
L_{\theta}(\mathcal{X},\mu):=\left\{ f:\,\,\exists\alpha>0:\,\int\theta(\alpha f)\mu<\infty\right\} 
\]
and the corresponding \emph{small Orlitz space }is defined by 
\[
M_{\theta}(\mathcal{X},\mu):=\left\{ f:\,\,\forall\alpha>0:\,\int\theta(\alpha f)\mu<\infty\right\} 
\]
(where all functions $f$ are assumed measurable). The space $L_{\theta}(\mathcal{X},\mu)$
may be equipped with a norm $\left\Vert \cdot\right\Vert _{\theta},$
called the Luxemburg norm, which turns $L_{\theta}(\mathcal{X},\mu)$
and its subspace $M_{\theta}(\mathcal{X},\mu)$ into Banach spaces:
\[
\left\Vert f\right\Vert _{\theta}:=\inf\left\{ b>0:\,\int\int\theta(b^{-1}f)\mu\leq1\right\} ,
\]
 i.e. the gauge of the set (unit-ball) 
\[
\left\{ f:\,\int\int\theta(f)\mu\leq1\right\} ,
\]
By the Hölder-Young inequality 
\[
\left|\int fg\mu\right|\leq2\left\Vert f\right\Vert _{\theta}\left\Vert g\right\Vert _{\theta^{*}},
\]
where $\theta^{*}$ is the Young function defined as the Legendre-Fenchel
transform of $\theta.$ In particular, for any $g\in L_{\theta^{*}}$
$f\mapsto\int fg\mu$ defines a continuous function on $L_{\theta}$
with bounded operator norm, i.e. $L_{\theta^{*}}\subset L_{\theta}^{*},$
where $L^{*}$ denotes the Banach space dual of a Banach space $L,$
endowed with the operator norm. To apply this in the present context
we note that 
\[
E^{(N)}(\mu_{\beta}^{(N)})=\int_{X^{2}}W\rho_{N}\mu_{0}^{\otimes2}
\]
 where $\rho_{N}$ is the density of the second marginal of $\mu_{\beta}^{(N)}.$
The assumption that $D^{(N)}(\mu_{\beta}^{(N)})\leq C$ implies that
\[
\int_{X^{m}}(\rho_{N}\log\rho_{N})\mu_{0}^{\otimes m}\leq C2,
\]
 according to \ref{eq:ineq for entropy}.

Now set $\theta(s):=e^{s}-s-1.$ Then $\theta^{*}(s)=(s+1)\log(1+s)-s.$
By the previous entropy inequality for $\rho_{N}$ the sequence $\{\rho_{N}\}$
stays in a fixed ball in $L_{\theta^{*}}$ and hence, by the Hölder-Young
inequality, $\{\rho_{N}\}$ stays in a fixed ball in the dual Banach
space $L_{\theta}^{*}.$ By weak compactness it then follows that
there exists $\Lambda\in L_{\theta}^{*}$ such that for any $g\in L_{\theta}$
\[
\int_{X^{m}}\rho_{N}g\mu_{0}^{\otimes m}\rightarrow\left\langle \Lambda,g\right\rangle 
\]
(after perhaps passing to a subsequence). Now, since $\rho_{N}g\mu$
is a probability measure we may also assume that there exists $\rho\in L_{\theta}$
such that 
\[
\int_{X^{m}}\rho_{N}u\mu_{0}^{\otimes m}\rightarrow\int_{X^{m}}\rho u\mu_{0}^{\otimes m}
\]
 for any continuous function $u.$ In our case $g=W$ and we just
need to check that $\left\langle \Lambda,g\right\rangle =\left\langle \Lambda,\rho\right\rangle .$
But, by assumption, $W\in M_{\theta}$ and by the general duality
theorems in \cite{r-r,le} the topological dual of $M_{\theta}$ identifies
with $L_{\theta^{*}},$ i.e. any continuous functional $\Lambda$
on $M_{\theta}$ is obtained by integrating against a (unique) $\rho\in L_{\theta^{*}},$
which concludes the proof.

\subsection{Proof of Theorem \ref{thm:negive beta uniform integr}}

Given a compact metric space $X$ we endow $Y(:=\mathcal{P}(X))$
with the Wasserstein $L^{1}-$metric $d,$ which is compatible with
the weak topology:
\[
d(\mu,\nu)=\sup_{f:\,L(f)\leq1}\int f(\mu-\nu),
\]
 where $f$ is Lipschitz continuous on $X$ with Lipschitz constant
$1.$ Since $\int(\mu-\nu)=0$ we may as well assume that $f(x_{0})=0$
for a fixed point $x_{0}$ and hence that $|f(x)|\leq C_{X}$ where
$C_{X}$ is independent of $f$ (since $X$ is compact and, in particular,
has bounded diameter). 

We fix, as before, a continuous function $\Phi$ on $Y:=\mathcal{P}(X).$
Without loss of generality we may as well assume that $W,\Phi\geq0.$ 

First observe that when $\beta>\beta_{0}$ we have 
\begin{equation}
Z_{N,\beta}[\Phi]:=\int_{X^{N}}e^{-\beta_{N}\left(H^{(N)}+N\delta_{N}^{*}(\Phi)\right)}\mu_{0}^{\otimes N}\leq C_{\beta}^{N},\label{eq:bound on part function in proof ldp vortex}
\end{equation}
 precisely as in the beginning of the proof of Cor \ref{cor:mean field neg beta intro}
(using that $\Phi$ is bounded). Using the convergence of the mean
energies and Gibbs variational principle, as before, we thus have
\begin{equation}
-C_{\beta}\leq\limsup_{N\rightarrow\infty}-\frac{1}{N}\log Z_{N,\beta}[\Phi]\leq\inf_{\mu\in\mathcal{P}(X)}\left(\beta E(\mu)+\Phi(\mu)+D(\mu)\right)\label{eq:lower bound on free}
\end{equation}

\begin{lem}
\label{lem:free energy is lsc for beta neg}For any $\beta>\beta_{0}$
the corresponding (scaled) free energy functional $\beta F_{\beta}$
on $\mathcal{P}(X)$ is lower semi-continuous.\end{lem}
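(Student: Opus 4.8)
The statement to prove is that $\beta F_\beta = \beta E + D_{\mu_0}$ is lower semi-continuous on $\mathcal{P}(X)$ for $\beta > \beta_0$, where $\beta$ is (possibly) negative, so this is \emph{not} a sum of two lsc functionals: the entropy $D_{\mu_0}$ is lsc, but $\beta E$ need not be, since $E$ itself is lsc (as $W$ is lsc) but $\beta$ may be negative, making $\beta E$ upper semi-continuous. So the naive argument fails and one must genuinely exploit the coupling between the energy term and the entropy term coming from the hypothesis $\sup_{x}\int_X e^{-\beta_0 W(x,\cdot)}\mu_0 < \infty$.

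The plan is to run a truncation/convexity-of-entropy argument. Fix a weakly convergent sequence $\mu_j \to \mu$ in $\mathcal{P}(X)$; we must show $\liminf_j (\beta E(\mu_j) + D_{\mu_0}(\mu_j)) \ge \beta E(\mu) + D_{\mu_0}(\mu)$. We may assume the $\liminf$ is finite, so (up to a subsequence) the left side converges and $D_{\mu_0}(\mu_j)$ is bounded; in particular each $\mu_j = \rho_j \mu_0$ with $\rho_j \log \rho_j \in L^1(\mu_0)$ uniformly. Pick $\epsilon>0$ small enough that $\beta - \epsilon > \beta_0$ (replacing $\epsilon$ by $|\epsilon|$ with the correct sign so that the integrability hypothesis applies at $\beta-\epsilon$; since $W$ is bounded below when we added $\Phi\ge 0$ we only need the one-sided bound, but the clean way is: choose $\epsilon$ so that the exponent $\beta-\epsilon$ still lies in the admissible range). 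The key inequality is then the Young/entropy duality: for the measure $\mu_j = \rho_j\mu_0$ and any measurable $g$,
\[
\int g\,\rho_j\,\mu_0 \;\le\; \frac{1}{t}\log\int e^{t g}\mu_0 \;+\; \frac{1}{t} D_{\mu_0}(\mu_j),
\]
the Gibbs variational inequality, valid for any $t>0$. Apply this with $g = -W$ (or $g=W$, tracking signs) and $t$ chosen so that $t$ times the leading sign of $\beta$ matches $\beta-\epsilon$; the hypothesis $\sup_x \int e^{-\beta_0 W(x,\cdot)}\mu_0 < \infty$ (together with $W\ge0$, $\Phi\ge0$, and compactness) makes $\log\int e^{tg}\mu_0$ finite and bounded. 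This lets us write
\[
\beta E(\mu_j) + D_{\mu_0}(\mu_j) \;\ge\; (\beta-\epsilon) E(\mu_j) + D_{\mu_0}(\mu_j)\cdot(1 - \text{(small)}) \;-\; C_\epsilon,
\]
schematically splitting off an $\epsilon E(\mu_j)$ term that is controlled by the entropy bound plus the uniform exponential-integrability constant. On the resulting expression, $(\beta-\epsilon)E$ combined with a positive multiple of $D_{\mu_0}$ is again of the same shape, so one iterates or — more efficiently — one shows directly that the whole combination $(\beta-\epsilon')E + D_{\mu_0}$ is bounded below and that the "bad part" $\epsilon E(\mu_j)$ tends to zero contribution as $\epsilon\to 0$ \emph{uniformly} in $j$, because $|\epsilon E(\mu_j)| \le \epsilon(\frac{1}{t}D_{\mu_0}(\mu_j) + C) \le \epsilon(C' + C)$ using the uniform entropy bound.

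With that uniform control in hand the proof finishes as follows. For each fixed $\epsilon$, decompose $W = W_R + (W - W_R)$ where $W_R := \min(W, R)$ is bounded and lsc (hence $\mu \mapsto \int W_R \mu^{\otimes 2}$ is lsc and in fact continuous along our weakly convergent sequence since $W_R$ is bounded — more precisely $E_{W_R}$ is lsc, which is what we need for $\beta>0$ parts and we handle the sign by monotone convergence $W_R \uparrow W$). The tail $\int (W - W_R)\rho_j^{(2)}\mu_0^{\otimes 2}$ (with $\rho_j^{(2)}$ the density of $\mu_j^{\otimes 2}$, or rather just $\rho_j\otimes\rho_j$) is nonnegative and, by the Young inequality above applied on $X^2$ with the product reference measure and the uniform entropy bound on the second marginal, is bounded by $\frac{1}{t}\log\int_{X^2} e^{t(W-W_R)}\mu_0^{\otimes 2} + \frac{1}{t}(2D_{\mu_0}(\mu_j))$; as $R\to\infty$ the exponential integral tends to $1$ by dominated convergence (using the hypothesis), so the tail is small uniformly in $j$. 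Combining: $\liminf_j(\beta E(\mu_j)+D_{\mu_0}(\mu_j)) \ge \beta\,\liminf_j \int W_R\,\rho_j\otimes\rho_j\,\mu_0^{\otimes 2} + \liminf_j D_{\mu_0}(\mu_j) - o_R(1) \ge \beta\int W_R\,\mu^{\otimes 2} + D_{\mu_0}(\mu) - o_R(1)$, where the last step uses continuity of $E_{W_R}$ (bounded, lsc from both sides after truncating below too, or just lsc plus a matching upper bound from $-W_R$ bounded) and lsc of $D_{\mu_0}$; then let $R\to\infty$ and apply monotone convergence to recover $\beta E(\mu)$.

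\textbf{Main obstacle.} The delicate point is handling the \emph{sign} of $\beta$ correctly: $\int W_R\,\mu^{\otimes2}$ is lsc in $\mu$, but $\beta$ negative turns this into a limsup statement, so one cannot simply pass to the liminf on that term. The fix — and the crux of the argument — is precisely that for \emph{bounded} $W_R$ the functional $\mu\mapsto\int W_R\mu^{\otimes 2}$ is weakly \emph{continuous}, not merely lsc, so passing to the limit is legitimate regardless of the sign of $\beta$; one must therefore be careful to perform the sign-sensitive estimate only after truncating, and to absorb the (sign-definite, nonnegative) tail using the exponential-integrability hypothesis uniformly in $j$ via the entropy bound. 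Getting these two uniformities — in $j$ as $R\to\infty$, and the harmlessness of the $\epsilon$-perturbation — to line up is where the real work sits; everything else is the standard Gibbs/Young bookkeeping already used in the proof of Corollary \ref{cor:mean field neg beta intro}.
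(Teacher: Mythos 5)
Your overall frame (pass to a subsequence along which the entropy is bounded, then show the energy converges along that subsequence) is the right one, but the step you yourself single out as the crux fails under the hypothesis of Theorem \ref{thm:negive beta uniform integr}. In your tail estimate you bound $\int(W-W_R)\,d\mu_j^{\otimes2}$ by $\tfrac{1}{t}\log\int_{X^2}e^{t(W-W_R)}\mu_0^{\otimes2}+\tfrac{2}{t}D_{\mu_0}(\mu_j)$. The first term does tend to $\log 1=0$ as $R\to\infty$, but only for $t\leq-\beta_0$, since $e^{tW}$ is integrable only in that range; this is precisely the difference between Corollary \ref{cor:mean field neg beta intro} and Theorem \ref{thm:negive beta uniform integr} (here $W\in L_{\theta}$ but in general $W\notin M_{\theta}$, so $\left\Vert W-W_{R}\right\Vert _{\theta}\not\to0$). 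Since $t$ cannot be sent to infinity, the entropy term $\tfrac{2}{t}D_{\mu_0}(\mu_j)$ does \emph{not} vanish as $R\to\infty$: your argument shows the tail is bounded, not uniformly small, and the final chain only yields $\liminf_j\beta F_{\beta}(\mu_j)\geq\beta E(\mu)+D_{\mu_0}(\mu)-2|\beta|C/|\beta_0|$, which is not lower semi-continuity. A second inaccuracy sits at the truncated term: for a bounded lsc function $W_R=\min(W,R)$ the functional $\mu\mapsto\int W_R\mu^{\otimes2}$ is \emph{not} weakly continuous (take $W_R$ the indicator of an open set); what would rescue that step is the entropy bound itself, via uniform integrability (Dunford--Pettis) of the densities $\rho_j\otimes\rho_j$, which you never invoke. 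There is also a factor-of-two problem in your route to the entropy bound: $E(\mu_j)\leq\tfrac{1}{t}(\log A+2D_{\mu_0}(\mu_j))$ with $t\leq|\beta_0|$ closes the bootstrap only when $2|\beta|<|\beta_0|$, whereas the paper gets the needed bound $\beta F_{\beta}\geq\epsilon E-C_{\epsilon}$ from the partition function estimate \ref{eq:lower bound on free} (Gibbs variational principle plus H1), valid for all $\beta>\beta_0$.

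The paper's mechanism is genuinely different, and it is exactly what the uniform-in-$x$ hypothesis is for: it proves the ``macroscopic H4'' statement that $D_{\mu_0}(\mu_j)\leq C$ and $\mu_j\to\mu$ imply $E(\mu_j)\to E(\mu)$ by working with the potentials $u_j(x):=\int W(x,y)\mu_j(y)$. The hypothesis $\sup_x\int e^{-\beta_0W(x,y)}\mu_0(y)<\infty$ combined with the entropy bound makes the $u_j$ \emph{uniformly bounded} (H\"older--Young in the $y$-variable alone), so the unbounded tail of $W$ never has to be integrated against the unknown densities; one then reduces to $\left\Vert u_j-u\right\Vert _{L^1(\mu_0)}\to0$ (using lsc of $W$ and the identity $\int u_j\mu_0=\int v\,\mu_j$ with $v$ bounded, hence in $M_{\theta}$) and concludes with the $M_{\theta}$--$L_{\theta^*}$ duality already used for Corollary \ref{cor:mean field neg beta intro}. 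If you want to keep a truncation argument, it must be applied after passing to these potentials; truncating $W$ itself cannot work when $e^{tW}$ is integrable only up to $t=-\beta_0$.
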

\begin{proof}
First observe that by the inequality \ref{eq:lower bound on free}
(applied to $\Phi=0):$ 
\[
\beta F_{\beta}\geq-C_{\beta}.
\]
 Now, applying the previous bound to $\beta-\epsilon>\beta_{0}$ reveals
that
\[
\beta F_{\beta}\geq\epsilon E-C'.
\]
Given $\mu$ in $\mathcal{P}(X)$ with $E(\mu)<\infty$ we set $u_{\mu}(x):=\int W(x,y)\mu.$
Then 
\[
E(\mu)=\int_{X}u_{\mu}\mu
\]
Using the previous estimate it will, to prove the lemma, be enough
to verify the following ``macroscopic'' version of H4 for any sequence
$\mu_{j}$ converging weakly towards $\mu:$ 
\[
D(\mu_{j})\leq C\implies E(\mu_{j})\rightarrow E(\mu)
\]
(compare \cite[Theorem 2.17]{bbgez}). To this end we set $u_{j}:=u_{\mu_{j}}$
and observe that the Hölder-Young inequality (with $\theta$ of exponential
type, as in the proof of Cor \ref{cor:mean field neg beta intro})
implies that $|u_{j}|\leq C$ (using the assumption on $W).$ Hence,
using the Hölder-Young inequality again it will be enough to show
that, $\left\Vert u_{j}-u\right\Vert _{\theta}\rightarrow0,$ or equivalently
that, for any given $a>0$ 
\[
\int_{X}\theta(a(u_{j}-u))\mu_{0}\rightarrow0.
\]
Since $\theta(t)\leq te^{t}$ and $|u_{j}|\leq C$ it will thus be
enough to show that 
\[
\left\Vert u_{j}-u\right\Vert _{L^{1}(X,\mu_{0})}\rightarrow0.
\]
 By the lower semi-continuity of $W,$ 
\[
\liminf_{j\rightarrow\infty}u_{j}\geq u
\]
 the desired convergence will follow from general measure theory if
$\int u_{j}\mu_{0}\rightarrow\int u\mu_{0}$ (using that $u_{\mu}\geq0$
if $W$ is normalized so that $W\geq0).$ But 
\[
\int_{X}u_{j}\mu_{0}=\int_{X}v\mu_{j}(y),\,\,\,v:=\int_{X}W(x,y)\mu_{0},
\]
 where $v$ is bounded, by the previous argument (since $\mu_{0}$
trivially has finite entropy). In particular, $v$ is in the little
Orlitz space $M_{\theta}(X,\mu_{0})$ and since $D(\mu_{j})\leq C$
the desired convergence then follows from the duality argument towards
the end of the proof of Cor \ref{cor:mean field neg beta intro}. 
\end{proof}
Now, to prove the LDP we need, in view of Lemma \ref{lem:(Bryc+Gibbs):-Suppose-that},
to complement the upper bound on $-\frac{1}{N}\log Z_{N,\beta}[\Phi]$
in formula \ref{eq:lower bound on free} with a corresponding lower
bound. To this end it seems natural to try to extend the Orlitz space
duality argument in the proof of Cor \ref{cor:mean field neg beta intro}
to the present setting, exploiting the uniform bound on the entropy
of the marginals. But here we will instead take another road (inspired
by \cite{clmp,k2}), exploiting the stronger $L^{p}-$bounds provided
by the following lemma.
\begin{lem}
\label{lem:bound on marginals}Let $\Phi$ be a given Lipschitz continuous
function on $Y:=\mathcal{P}(X)$ and fix $\beta>\beta_{0.}$ Then
the following estimate holds for the densities $\rho_{j}^{(N)}$ of
the $j$ th marginal of the Gibbs measures corresponding to the Hamiltonian
$H^{(N)}+N\delta_{N}^{*}(\Phi):$ 
\[
\rho_{j}^{(N)}(x_{1},..x_{j})\leq C_{j}e^{-\frac{1}{N}\sum\sum_{k\neq l\leq j}W(x_{k},x_{l})}
\]
as $N\rightarrow\infty.$ In particular, for any $p>1$ $\rho_{j}^{(N)}(x_{1},..x_{j})$
is uniformly bounded in $L^{p}$ as $N\rightarrow\infty.$\end{lem}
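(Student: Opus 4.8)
The plan is to adapt, as advertised, the approach of \cite{clmp,k2}: peel off the first $j$ particles, integrate out the remaining $N-j$, and compare what is left with the full partition function. Throughout we may take $W,\Phi\ge 0$, as arranged at the start of the proof of Theorem~\ref{thm:negive beta uniform integr}, and write $Z_{M,\gamma}[\Psi]$ for the partition function at inverse temperature $\gamma$ of the $M$-particle mean field Hamiltonian $H^{(M)}$ of $W$ perturbed by $M\delta_M^{*}\Psi$; set $Z_N[\Phi]:=Z_{N,\beta}[\Phi]$. With $y:=(x_1,\dots ,x_j)$, $A_j:=\sum_{k\ne l\le j}W(x_k,x_l)$ and $u_y:=\sum_{k\le j}W(x_k,\cdot)$, the identity
\[
H^{(N)}(x_1,\dots ,x_N)=\tfrac1N A_j+\tfrac{N-j}{N}\,H^{(N-j)}(x_{j+1},\dots ,x_N)+\tfrac2N\sum_{l>j}u_y(x_l)
\]
shows, upon integrating out $x_{j+1},\dots ,x_N$, that
\[
\rho_j^{(N)}(x_1,\dots ,x_j)=\frac{e^{-\frac{\beta}{N}A_j}}{Z_N[\Phi]}\,\widetilde G_N(y),
\]
where
\[
\widetilde G_N(y):=\int_{X^{N-j}}e^{-\frac{N-j}{N}\beta H^{(N-j)}-\frac{2\beta}{N}\sum_{l>j}u_y(x_l)-\beta N\Phi(\delta_N)}\,\mu_0^{\otimes(N-j)} .
\]
Hence everything reduces to the estimate $\sup_y\widetilde G_N(y)\le C_j\,Z_N[\Phi]$ for all large $N$. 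Granting the resulting pointwise bound for $\rho_j^{(N)}$, the $L^p$ assertion is immediate: the arithmetic--geometric mean inequality gives $\int_{X^j}e^{-\frac{p\beta}{N}A_j}\mu_0^{\otimes j}\le\int_{X^2}e^{\frac{p|\beta|j(j-1)}{N}W}\mu_0^{\otimes 2}$, which is $\le\int_{X^2}e^{-\beta_0 W}\mu_0^{\otimes2}<\infty$ once $N\ge N_0(j,p)$ (and is $\le1$ when $\beta\ge0$).

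Two auxiliary facts enter, both consequences of Gibbs' variational principle together with the elementary bound $Z_{M,\gamma}[\Psi]\le C_\gamma^{\,M}$ of \eqref{eq:bound on part function in proof ldp vortex} (valid for $\gamma>\beta_0$ and $\Psi$ bounded). \emph{(a)} A uniform mean-energy bound $E^{(M)}(\mu_\beta^{(M)})\le C$, obtained exactly as in the proof of Theorem~\ref{thm:h1 and h4 neg beta intro} by playing off Gibbs' principle at $\beta$ against Gibbs' principle at some $\beta-\epsilon>\beta_0$. \emph{(b)} A one-step comparison $Z_{M+1,\beta}[\Phi]\ge c\,Z_{M,\beta}[\Phi]$ with $c=c(\beta,\Phi,\beta_0)>0$ independent of $M$, hence $Z_N[\Phi]\ge c^{\,j}\,Z_{N-j,\beta}[\Phi]$. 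For \emph{(b)} one inserts one particle via $H^{(M+1)}=\tfrac{M}{M+1}H^{(M)}+\tfrac{2}{M+1}\sum_{l\le M}W(x_{M+1},x_l)$, carries out the $x_{M+1}$-integration and estimates it below by Jensen's inequality (here $v(x):=\int_X W(x,y)\mu_0(y)$ is $\ge0$ and, by the H\"older--Young argument in the proof of Lemma~\ref{lem:free energy is lsc for beta neg}, bounded); what remains is $Z_{M,\beta}[\Phi]\,\E_{\mu_\beta^{(M)}}\!\bigl[e^{-\frac{|\beta|}{M+1}H^{(M)}}\bigr]$, and one further application of Jensen together with \emph{(a)} bounds the expectation below by $e^{-|\beta|C}$. (When $\beta\ge0$ this step is even more direct and \emph{(a)} is not needed.)

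It remains to bound $\widetilde G_N(y)$, the crux of the argument. Absorb the ``conditional external field'' $\tfrac2N\sum_{l>j}u_y(x_l)$ into the reference measure: $e^{-\frac{2\beta}{N}\sum_{l>j}u_y(x_l)}\prod_{l>j}\mu_0(dx_l)=m_N(y)^{N-j}\prod_{l>j}\widehat\mu_0^{(y)}(dx_l)$, where $m_N(y):=\int_X e^{-\frac{2\beta}{N}u_y}\mu_0$ and $\widehat\mu_0^{(y)}:=m_N(y)^{-1}e^{-\frac{2\beta}{N}u_y}\mu_0$ is a probability measure. By the hypothesis $\sup_x\int_X e^{-\beta_0 W(x,y)}\mu_0(y)<\infty$ and the arithmetic--geometric mean inequality, $\sup_y\int_X e^{-\alpha u_y}\mu_0\le C_0$ for every $\alpha\le|\beta_0|/j$; expanding $m_N(y)$ and using $\sup_x v(x)<\infty$ then gives $\sup_y m_N(y)=1+O(1/N)$, so $\sup_y m_N(y)^{N-j}\le C_j$. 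Now replace $N\Phi(\delta_N(x_1,\dots ,x_N))$ by $(N-j)\Phi(\delta_{N-j}(x_{j+1},\dots ,x_N))$ at the cost of a bounded factor (the two empirical measures are $O(1/N)$ apart in the Wasserstein metric and $\Phi$ is Lipschitz), and apply H\"older's inequality on $X^{N-j}$ with exponents $(p,p')$, $p=1+O(1/N)$ and $p'$ of order $N$, the implied constants chosen so that $\bigl(\int_X(d\widehat\mu_0^{(y)}/d\mu_0)^{p'}\mu_0\bigr)^{(N-j)/p'}$ stays bounded in $y$ and $N$ --- once more by the hypothesis. The surviving factor is $Z_{N-j,\,p\gamma}\!\bigl[\tfrac{N}{N-j}\Phi\bigr]^{1/p}$ with $\gamma:=\tfrac{N-j}{N}\beta$; since $p=1+O(1/N)$ and every partition function here is trapped between two exponentials in $N$, one has $Z^{1/p}\le C_j\,Z_{N-j,p\gamma}[\,\cdot\,]$, while convexity of $\gamma'\mapsto\log Z_{N-j,\gamma'}[\,\cdot\,]$, together with $\beta_0<p\gamma<\beta$, $|\beta-p\gamma|=O(1/N)$ and $Z_{N-j,\beta_0}[\,\cdot\,]\le\bar C^{\,N-j}$, gives $Z_{N-j,p\gamma}[\,\cdot\,]\le C_j\,Z_{N-j,\beta}[\Phi]$. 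Combining these estimates with \emph{(b)} yields $\sup_y\widetilde G_N(y)\le C_j\,Z_N[\Phi]$.

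The main obstacle is precisely this last paragraph, and it is the only point at which the \emph{strengthened} integrability hypothesis --- a uniform bound $\sup_x\int e^{-\beta_0 W(x,\cdot)}\mu_0<\infty$, rather than the weaker $\int e^{-\beta_0 W}\mu_0^{\otimes2}<\infty$ of Corollary~\ref{cor:mean field neg beta intro} --- is genuinely used: one needs the tilt $e^{-\frac{2q\beta}{N}u_y}$ to cost only a factor $1+O(1/N)$ \emph{uniformly in $y$ and even for $q$ of order $N$}, and at the same time the induced shift $\beta\mapsto p\gamma$ of the inverse temperature to be $O(1/N)$; the rest is bookkeeping with H\"older's and Jensen's inequalities. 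Consistently with this, the threshold $N_0(j,p)$ degenerates as $\beta\downarrow\beta_0$ or $p\to\infty$, reflecting the fact that near the critical temperature the marginals need not be bounded in $L^\infty$.
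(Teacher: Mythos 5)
Your proposal is correct in substance and follows the same overall CLMP-style strategy as the paper: the identical splitting $H^{(N)}=\tfrac1N A_j+\tfrac{N-j}{N}H^{(N-j)}+\tfrac2N\sum_{l>j}u_y(x_l)$, the same reduction of the lemma to $\sup_y\widetilde G_N(y)\leq C_j\,Z_N[\Phi]$, the same Wasserstein/Lipschitz $O(1/N)$ trick to replace $N\Phi(\delta_N)$ by the empirical measure of the last $N-j$ particles, and a H\"older inequality with an exponent of order $\epsilon N$ used exactly where the uniform hypothesis $\sup_x\int e^{-\beta_0 W(x,\cdot)}\mu_0<\infty$ enters. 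Where you genuinely diverge is in how the conditional integral is compared with the full partition function: the paper applies H\"older twice, first with exponents $(N,1+\tfrac1{N-1})$ to peel off the cross term $W(X,Y)$ at full strength $\beta$, then with $(u,w)$, $w=\epsilon(N-1)$, and compares directly with $Z_N$ by padding the missing $j$ variables back in using $W\geq0$ (for $\beta<0$) together with the bound $Z_{N,(1+\epsilon)\beta}\leq B^N$ from \ref{eq:bound on part function in proof ldp vortex}; you instead absorb the cross term into a tilted reference measure with an $O(1/N)$ tilt (normalization $1+O(1/N)$ uniformly in $y$), compare $Z_{N-j,p\gamma}$ with $Z_{N-j,\beta}[\Phi]$ by convexity of $\log Z$ in the inverse temperature -- using the partition function bound at $\beta_0$ itself, which is indeed available precisely because of the strengthened uniform hypothesis -- and then pass from $Z_{N-j,\beta}[\Phi]$ to $Z_{N,\beta}[\Phi]$ by a particle-insertion/Jensen argument that needs your auxiliary mean-energy bound \emph{(a)}. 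Your route is longer, since it requires \emph{(a)} and \emph{(b)}, while the paper's positivity-padding shortcut makes both unnecessary; on the other hand your treatment is more modular, makes the role of the uniform hypothesis and the uniformity in $y$ completely explicit, and the small-tilt handling of the cross term is more transparent about the $j$-dependence than a H\"older step at strength $\beta$. Two cosmetic points to tidy: the asserted ordering $\beta_0<p\gamma<\beta$ can fail for $\beta<0$ (one may have $\beta<p\gamma<0$), but the interpolation works just as well against the endpoint $0$, where $Z=1$, or by monotonicity when $\beta\geq0$; and for $\sup_y m_N(y)=1+O(1/N)$ boundedness of $v$ alone is not quite the cleanest input -- use Jensen, $m_N(y)\leq\bigl(\int e^{\alpha u_y}\mu_0\bigr)^{2|\beta|/(N\alpha)}\leq C_0^{\,O(1/N)}$ with $\alpha\leq|\beta_0|/j$, which gives the claim uniformly in $y$ at once.
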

\begin{proof}
To fix ideas we start with the case $\Phi=0,$ following closely the
proof of Theorem 3.1 in \cite{clmp}. Set 
\[
W(X,Y):=\sum_{x\in X,y\in Y}W(x_{i},y_{j}),\,\,\,\,d\mu(Y):=\mu_{0}^{\otimes N-1}
\]
where $X:=\{x_{1},..,x_{j}\}$ and $Y:=\{x_{j+1},...,x_{N}\}.$Then
we can decompose $E^{(N)}(x_{1},...,x_{N})=\frac{1}{N}W(X,X)+\frac{1}{N}W(X,Y)+\frac{1}{N}W(Y,Y).$
Accordingly, 
\[
\rho_{j}^{(N)}(X)=e^{-\frac{\beta}{N}W(X,X)}\frac{1}{Z_{N}}\int e^{-\frac{\beta}{N}W(X,Y)}e^{-\frac{\beta}{N}W(Y,Y)}d\mu(Y)
\]
Applying Hölder's inequality with $p=N$ (and thus $q=1+1/N-1))$
the integral in the right hand side is bounded from above by 
\[
=\left(\int e^{-\beta W(X,Y)}d\mu(Y)\right)^{1/N}\left(\int e^{-\frac{\beta}{N}qW(Y,Y)}d\mu(Y)\right)^{1/q}
\]
By assumption the integral appearing in the first factor above is
bounded from above by a $A^{N}.$ It will thus be enough to show that
the second integral is controlled by $Z_{N}$ in the sense that it
is bounded from above by a uniform constant times $Z_{N}.$ To this
end we will apply Hölder's inequality again, now with conjugate exponents
$u$ and $w$ with $u$ sufficiently close to $1$ (to be quantified
below). We thus rewrite
\[
q=\frac{1}{u}+\left(q-\frac{1}{u}\right)
\]
and apply Hölder's inequality. Since $w(q-1/u)=1+w(q-1)=1+\frac{w}{(N-1)}$
this gives 
\begin{equation}
\int e^{-\frac{\beta}{N}qW(Y,Y)}d\mu(Y)\leq\left(\int e^{-\frac{\beta}{N}W(Y,Y)}d\mu(Y)\right)^{1/u}\left(\int e^{-\frac{\beta}{N}(1+\frac{w}{(N-1)})W(Y,Y)}d\mu(Y)\right)^{1/w}\label{eq:second h=0000F6lder}
\end{equation}
Hence, taking $w=\epsilon(N-1)$ for a sufficiently small positive
number $\epsilon$ the first factor is controlled by $Z_{N}$ (since
$W\geq0)$ and the integral in second factor is controlled by $Z_{N.(1+\epsilon)\beta}\leq B^{N}$
(by \ref{eq:bound on part function in proof ldp vortex}). Since $w$
is of the order $N$ this concludes the proof when $\Phi=0.$ To treat
the general case we will use the following
\[
\mbox{Claim:\,}\left|\Phi(\frac{1}{N}\sum_{i=1}^{N}\delta_{x_{i}})-\Phi(\frac{1}{N-j}\sum_{i=j+1}^{N}\delta_{x_{i}})\right|\leq C\frac{1}{N},
\]
Accepting the claim for the moment and introducing the notation $\Phi(\frac{1}{N-j}\sum_{i=j+1}^{N}\delta_{x_{i}})=\phi(Y)$
we have 
\[
\rho_{j}^{(N)}(X)\leq\frac{e^{C}}{Z_{N}}e^{-\frac{\beta}{N}W(X,X)}\int e^{-\frac{\beta}{N}W(X,Y)}e^{-\frac{\beta}{N}(W(Y,Y)+N\phi(Y))}d\mu(Y)
\]
We then use first Hölder's inequality with $p$ and $q$ and then
with $u$ and $v$ exactly as above to get the same factors as above
apart from the last factor in formula \ref{eq:second h=0000F6lder}
which now becomes 
\[
\int e^{-\frac{\beta}{N}(1+\gamma)(W(Y,Y)+N\Phi(y))}d\mu(Y)
\]
 which is bounded from above by $C'^{N},$ according to the estimate
\ref{eq:bound on part function in proof ldp vortex} (when $\gamma$
is sufficiently small). This proves the lemma once we have verified
the claim above. To this end we assume to simplify the notation that
$j=1$ (the general case is similar) and observe that setting $\mu:=\frac{1}{N}\sum_{i=1}^{N}\delta_{x_{i}}$
and $\nu:=\frac{1}{N-1}\sum_{i=2}^{N}\delta_{x_{i}}$ gives 
\[
N(\mu-\nu)=\delta_{x_{1}}-\frac{1}{N-1}\sum_{i=2}^{N}\delta_{x_{i}}
\]
Hence, for any $f$ such that $|f|\leq C_{X}$ we have 
\[
|\int f(\mu-\nu)|\leq\frac{1}{N}\left(C_{X}+\frac{1}{N-1}\sum_{i=2}^{N}C_{X}\right)\leq\frac{1}{N}2C_{X},
\]
and hence $d(\mu,\nu)\leq2C_{X}/N.$ But then the claim follows directly
from the Lipschitz continuity of $\Phi.$ 
\end{proof}
Now, to verify the missing lower bound on 
\[
-\frac{1}{N}\log Z_{N,\beta}[\Phi]\left(=F_{\beta}^{(N)}(\mu_{\beta}^{(N)})\right)
\]
 we first claim that it will be enough to verify the case when $\Phi$
Lipschitz continuous. Indeed, any continuous function $\Phi$ on a
compact metric space $Y$ can be written as a uniform limit $\Phi^{(R)}$
of Lipschitz continuous function (for example, $\Phi^{(R)}(x):=\inf_{Y}\left(\Phi(y)+Rd(x,y)\right)$
increases to $\Phi,$ as $R\rightarrow\infty,$ and has Lipschitz
constant $R).$ Moreover we may, after relabeling the sequence, assume
that $|\Phi_{\epsilon}-\Phi|\leq\epsilon.$ But since $\Phi\mapsto\frac{1}{N\beta}\log Z_{N,\beta}[\Phi]$
is increasing and $\frac{1}{N\beta}\log Z_{N,\beta}[\Phi+c]=\frac{1}{N\beta}\log Z_{N,\beta}[\Phi]+c$
for any $c\in\R$ we get 
\[
\left|\frac{1}{N\beta}\log Z_{N,\beta}[\Phi]-\frac{1}{N\beta}Z_{N,\beta}[\Phi_{\epsilon}]\right|\leq\epsilon,
\]
which proves the claim. Next, we consider the sequence $\mu_{N}=\mu_{\beta_{N}}^{(N)}$
of Gibbs measures corresponding to the Hamiltonian $H^{(N)}+N\delta_{N}^{*}(\Phi),$
for $\Phi$ Lipschitz continuous and decompose 
\[
E^{(N)}(\mu_{\beta}^{(N)})=\int_{X^{2}}W\rho_{2}^{(N)}\mu_{0}^{\otimes2}+\left\langle \Phi,\Gamma_{N}\right\rangle 
\]
 By continuity the second term above converges towards $\left\langle \Phi,\Gamma\right\rangle .$
To prove the desired lower bound on $F_{\beta}^{(N)}(\mu_{\beta}^{(N)})$
it will thus, just as in the proof of Cor \ref{cor:mean field neg beta intro},
be enough to show that 
\begin{equation}
\lim_{N\rightarrow\infty}E^{(N)}(\mu_{\beta}^{(N)})=\int W\rho_{2}\mu_{0}^{\otimes2},\label{eq:limit of mean energy in proof ldp negative beta noll}
\end{equation}
 for any weak limit point $\rho_{2}\mu_{0}^{\otimes2}$ of $\rho_{2}^{(N)}\mu_{0}^{\otimes2}.$
To this end we recall that, by the previous lemma, $\rho_{2}^{(N)}$
is uniformly bounded in $L^{p},$ as $N\rightarrow\infty,$ for any
fixed $p>1.$ Hence, by standard $L^{p}-$duality the limit \ref{eq:limit of mean energy in proof ldp negative beta noll},
follows from the fact that $W\in L^{q}$ for some (any) $q>1,$ since
by assumption, $e^{\epsilon W}\in L^{1}$ for any sufficiently small
positive number $\epsilon.$

\subsection{\label{sub:Relations-to-Gamma in text}Relations to Gamma-convergence
of $E^{(N)}$ on $\mathcal{P}(X):$ Proof of Cor \ref{cor:h1 h2 h3 gives gamma intro}}

First observe that the required lower bound on $E^{(N)}(x^{(N)})$
is obtained by taking $\mu_{N}$ to be  the normalized $S_{N}-$orbit
in $X^{N}$of the Dirac measure supported at $x^{(N)}.$ Then 
\[
E^{(N)}(x^{(N)})=E^{(N)}(\mu_{N})
\]
and since $\mu_{N}$ converges towards $\Gamma:=\delta_{\mu}$ this
proves the desired lower bound. 

Next, to construct recovery sequences we take a sequence $\beta_{N}\rightarrow\infty.$
By Theorem \ref{thm:h1 and h2 pos beta intro} an LDP holds with rate
functional $E(\mu).$ In particular, the lower bound in the LDP gives
that for any $\delta>0$ we have for $\epsilon\leq\epsilon_{\delta}$
\[
-E(\mu)-\delta\leq\limsup_{N\rightarrow\infty}\frac{1}{N\beta_{N}}\log\int_{B_{\epsilon}(\mu)}e^{-\beta_{N}H^{(N)}}\mu_{0}^{\otimes N}\leq
\]
\[
\leq\limsup_{N\rightarrow\infty}\left(-\inf_{B_{\epsilon}(\mu)}E^{(N)})+\frac{1}{\beta_{N}N}\log\int_{B_{\epsilon}(\mu)}\mu_{0}^{\otimes N}\right)
\]
Hence, if $D(\mu)<\infty$ then Sanov's theorem (i.e. the LDP for
$H^{(N)}=0)$ shows that there exists a sequence $x_{\epsilon}^{(N)}\in B_{\epsilon}(\mu)$
such that 
\[
-E(\mu)-\delta\leq\limsup_{N\rightarrow\infty}\left(-E^{(N)}(x_{\epsilon}^{(N)})\right)
\]
Now a diagonal argument shows that any such $\mu$ admits a recovery
sequence. Finally, the regularity assumption allows us to deduce the
existence of a recovery sequence for any $\mu.$

\section{Concl\label{sec:Concluding-remarks}uding remarks}

\subsection{A weaker form of the hypothesis H2}

Let us come back to the setting of Theorem \ref{thm:h1 and h2 pos beta intro}
and observe that the hypothesis H2 may be replaced by the following
one, which is a priori weaker (see the beginning of Section \ref{sub:Relations-to-Gamma in text}): 
\begin{itemize}
\item (H2') For any sequence of $x^{(N)}\in X^{N}$ such that $\delta_{N}(x^{(N)})\rightarrow\mu$
weakly in $\mathcal{P}(X)$ we have 
\[
\liminf_{N\rightarrow\infty}\frac{1}{N}H^{(N)}(x^{(N)})\geq E(\mu),
\]
 
\end{itemize}
In other words, (H2') says that $E^{(N)}:=H^{(N)}/N,$ when viewed
as a functional on $\mathcal{P}(X),$ satisfies the lower bound property
which is one of the two requirements for the Gamma-convergence of
$E^{(N)}$ towards $E(\mu),$ where $E(\mu)$ denotes, as before,
the macroscopic mean energy whose existence is postulated in hypotheses
H1. 
\begin{thm}
\label{thm:H2'}The conclusion of Theorem \ref{thm:h1 and h2 pos beta intro}
remains valid if H2 is replaced by H2'\end{thm}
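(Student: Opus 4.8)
The plan is to revisit the proof of Theorem \ref{thm:h1 and h2 pos beta intro} and isolate precisely the point where H2 was used, namely in verifying the liminf (lower bound) half of the Gamma-convergence of the mean free energies $F_{\beta_N}^{(N)}$ to $F_\beta$ relative to $Y$ (Lemma \ref{lem:gamma relative}). Everything else in the proof — the reformulation via Bryc+Gibbs (Lemma \ref{lem:(Bryc+Gibbs):-Suppose-that}), the construction of recovery sequences from product measures $\mu^{\otimes N}$ using H1 and the product property of entropy, the passage to arbitrary continuous $\Phi$ by replacing $H^{(N)}$ with $H^{(N)}+N\delta_N^*(\Phi)$, and the final Legendre-Fenchel bookkeeping $I=F^{**}=F$ — is untouched. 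So it suffices to prove: under H1 and H2', if $\Gamma_N:=(\delta_N)_*\mu^{(N)}\to\Gamma$ weakly in $\mathcal{P}(Y)$ then $\liminf_N F_{\beta_N}^{(N)}(\mu^{(N)})\ge F_\beta(\Gamma)$.

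The key step is to deduce H2 from H2'. The point is that H2' is exactly the statement that $E^{(N)}:=H^{(N)}/N$, viewed as a function on $X^N$ and pushed to $\mathcal{P}(X)$, has the Gamma-liminf property with limit $E(\mu)$ on $Y$. Now recall that for a lower semi-continuous function on a Polish space, the Gamma-liminf inequality for a sequence $f_N$ on $X^N$ passed through the empirical-measure map is equivalent to a lower bound on $\int_{X^N} f_N\,\mu^{(N)}$ in terms of the limiting barycenter — this is precisely the mechanism already exploited in the proof of Corollary \ref{cor:cor mean field pos beta intro}. Concretely: by H2', for every $\varepsilon>0$ and every $\mu\in\mathcal{P}(X)$ there is a weak neighborhood $U_{\mu,\varepsilon}$ of $\mu$ in $\mathcal{P}(X)$ and an $N_{\mu,\varepsilon}$ such that $E^{(N)}(x^{(N)})\ge E(\mu)-\varepsilon$ whenever $\delta_N(x^{(N)})\in U_{\mu,\varepsilon}$ and $N\ge N_{\mu,\varepsilon}$; one gets this from H2' together with the lower semicontinuity of $E$ (which holds since $E$ is the Gamma-limit, by Lemma \ref{lem:gamma is lsc}), by a standard contradiction argument (if no such neighborhood existed one could extract a sequence $x^{(N)}$ with $\delta_N(x^{(N)})\to\mu$ and $\liminf E^{(N)}(x^{(N)})<E(\mu)$, contradicting H2'). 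Integrating this pointwise bound against $\mu^{(N)}$ and letting the partition of $\mathcal{P}(X)$ into such neighborhoods become finer, one obtains $\liminf_N E^{(N)}(\mu^{(N)})\ge\int_Y E(\mu)\,\Gamma(\mu)$; that is, H2 holds. A clean way to organize the limiting step is to use a lower semi-continuous majorant: write $E(\mu)=\sup_k E_k(\mu)$ with $E_k$ continuous and increasing (possible since $E$ is lsc and bounded below — and bounded below because, as in the proof of Theorem \ref{thm:h1 and h4 neg beta intro}, $\beta F_\beta\ge$ const for $\beta>0$, or simply by adding a constant to $H^{(N)}$; at positive temperature lower-boundedness of $H^{(N)}$ is part of the standing setup), deduce $\liminf_N E^{(N)}(\mu^{(N)})\ge\int_Y E_k\,\Gamma$ from the above neighborhood argument applied to the continuous function $E_k$, and then let $k\to\infty$ by monotone convergence — exactly the template of the proof of Corollary \ref{cor:cor mean field pos beta intro}.

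Once H2 is in hand, the remainder of the proof of Theorem \ref{thm:h1 and h2 pos beta intro} applies verbatim, including the case $\beta=\infty$ under H3. The main obstacle I anticipate is the rigorous passage from the pointwise-on-$X^N$ lower bound to the integrated-against-$\mu^{(N)}$ lower bound, i.e. making sure the neighborhood decomposition of $\mathcal{P}(X)$ can be chosen uniformly enough that the error terms vanish in the limit; this is where lower semicontinuity of $E$ and the continuous-majorant trick do the real work, and it is the only place where some care beyond bookkeeping is required.
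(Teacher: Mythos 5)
Your proposal is correct in substance, but it takes a genuinely different route from the paper. You reduce the whole matter to showing that H2' implies H2 (uniform local lower bounds for $E^{(N)}$ obtained from H2' by compactness and contradiction, integration against $\mu^{(N)}$, and a monotone passage to the limit along continuous minorants), after which Theorem \ref{thm:h1 and h2 pos beta intro} applies verbatim. The paper never proves H2: it keeps the Gibbs--Bryc reduction and only redoes the missing bound $\limsup_{N}\frac{1}{N\beta_{N}}\log Z_{N,\beta_{N}}[\Phi]\leq-\inf F_{\beta}$ directly, by covering the compact space $\mathcal{P}(X)$ by finitely many balls, keeping the ball with the largest integral, and combining Sanov's theorem with H2' on a small ball around a limit point of the centers. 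The paper's argument is shorter and works only with configurations and the partition function; yours is stronger in that it upgrades H2' to the integrated statement H2 itself, at the price of the extra integration step where all the care is needed (as you correctly anticipate).

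One justification does not stand as written: you invoke lower semicontinuity of $E$ ``since $E$ is the Gamma-limit, by Lemma \ref{lem:gamma is lsc}'', but under H1 and H2' alone the Gamma-convergence of $H^{(N)}/N$ is not available -- it is a consequence of the theorem being proved together with H3, via Corollary \ref{cor:h1 h2 h3 gives gamma intro} -- so this is circular. (Note also that lsc is not needed for your neighborhood claim, which follows from H2' and compactness alone; it only enters in the continuous-minorant step.) The gap is repairable inside your scheme: replace $E$ by the Gamma-lower-limit $\underline{E}(\mu):=\sup_{U\ni\mu}\liminf_{N}\inf\{E^{(N)}(x):\delta_{N}(x)\in U\}$, which is automatically lower semicontinuous, satisfies $\underline{E}\geq E$ by H2', is nowhere $-\infty$ and hence bounded below on the compact space $\mathcal{P}(X)$, and obeys $\liminf_{N}E^{(N)}(x^{(N)})\geq\underline{E}(\nu)$ whenever $\delta_{N}(x^{(N)})\rightarrow\nu$; running your minorant argument with $\underline{E}$ in place of $E$ yields $\liminf_{N}E^{(N)}(\mu^{(N)})\geq\int\underline{E}\,\Gamma\geq\int E\,\Gamma$, i.e. H2. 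With this repair (and the same implicit normalization $H^{(N)}\geq-CN$ that the paper also uses in practice) your proof is complete.
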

\begin{proof}
Just as in the proof of Theorem\ref{thm:h1 and h2 pos beta intro},
in order to verify the convergence in Bryc's lemma, we may without
loss of generality assume that $\Phi=0.$ Moreover, exactly as before
H1 combined with the Gibbs variational principle yields the upper
bound on $-\log Z_{N,\beta_{N}}.$ To prove the lower bound first
note that for any fixed $\mu\in\mathcal{P}(X)$ and $\epsilon>0$
Sanov's theorem gives, just as in the proof of Cor \ref{cor:h1 h2 h3 gives gamma intro}
in Section \ref{sub:Relations-to-Gamma in text}, that \emph{
\begin{equation}
\limsup_{\epsilon\rightarrow\infty}\limsup_{N\rightarrow\infty}\frac{1}{N\beta_{N}}\log\int_{B_{\epsilon}(\mu)}e^{-\beta_{N}H^{(N)}}\mu_{0}^{\otimes N}\leq-\liminf_{\epsilon\rightarrow\infty}\liminf_{N\rightarrow\infty}\inf_{B_{\epsilon}(\mu)}E^{(N)})-\frac{1}{\beta}D(\mu),\label{eq:proof of theorem mean plus gamma}
\end{equation}
} where, by hypothesis H2', the right hand side is bounded from above
by $-E(\mu)-\frac{1}{\beta}D(\mu):=-F_{\beta}(\mu).$ Now, for any
fixed $\delta>0$ we cover the compact space $\mathcal{P}(X)$ by
a finite number $M_{\delta}$ of balls $B_{\delta}(\mu_{\delta,i})$
with $i=1,..,M_{\delta}.$ Then \emph{
\[
\limsup_{N\rightarrow\infty}\frac{1}{N\beta_{N}}\log\int_{X^{N}}e^{-\beta_{N}H^{(N)}}\mu_{0}^{\otimes N}\leq0+\limsup_{N\rightarrow\infty}\frac{1}{N\beta_{N}}\log\int_{B_{\delta}(\mu_{\delta})}e^{-\beta_{N}H^{(N)}}\mu_{0}^{\otimes N},
\]
} where $\mu_{\delta}$ is the center of the ball with the largest
integral. Next, fix $\epsilon>0$ and denote by $\mu$ a limit point
in $\mathcal{P}(X)$ of the family $\mu_{\delta}.$ For any sufficiently
small $\delta_{j}$ we have $B_{\delta_{j}}(\mu_{\delta_{j}})\subset B_{\epsilon}(\mu).$
Hence, estimating the right hand side in the previous formula with
the integral over $B_{\epsilon}(\mu)$ and using the inequality \ref{eq:proof of theorem mean plus gamma}
gives \emph{
\[
\limsup_{\epsilon\rightarrow\infty}\limsup_{N\rightarrow\infty}\frac{1}{N\beta_{N}}\log\int_{X^{N}}e^{-\beta_{N}H^{(N)}}\mu_{0}^{\otimes N}\leq-F_{\beta}(\mu)\leq-\inf_{\mathcal{P}(X)}F_{\beta},
\]
} which shows that Bryc's lemma can be applied, just as before, to
deduce the LDP in question. 
\end{proof}
A result essentially equivalent to the previous theorem appears in
\cite{bo}. Combining Theorem \ref{thm:H2'} and Cor \ref{cor:h1 h2 h3 gives gamma intro}
thus reveals that the hypotheses H1 and H2' actually implies the Gamma-convergence
of $\frac{1}{N}H^{(N)}$ towards $E$ on $\mathcal{P}(X),$ if the
approximation hypothesis H3 holds. 

Finally, let us point out that it seems unlikely that, in general,
the assumption that $\frac{1}{N}H^{(N)}$ Gamma-converges towards
a functional $E$ on $\mathcal{P}(X)$ is not enough to deduce a LDP
(even if one also assumes H3). On the other hand, as shown in \cite{berm8},
one does get an LDP for any $\beta\in]0,\infty]$ under an assumption
of quasi-superharmonicity:
\begin{thm}
\cite{berm8} \label{thm:g-conv of sh energies}Let $H^{(N)}$ be
a sequence of lower semi-continuous symmetric functions on $X^{N},$
where $X$ is a compact Riemannian manifold.\textup{ Assume that }
\begin{itemize}
\item \textup{The sequence }$\frac{1}{N}H^{(N)}$ on $X^{N}$ (identified
with a sequence of functions on $\mathcal{P}(X)$ Gamma-converges
towards a functional $E$ on $\mathcal{P}(X)$ 
\item $H^{(N)}$ is uniformly quasi-superharmonic, i.e. \emph{$\Delta_{x_{1}}H^{(N)}(x_{1},x_{2},...x_{N})\leq C$
on $X^{N}$}
\end{itemize}
Then, for any sequence of positive numbers $\beta_{N}\rightarrow\beta\in]0,\infty]$
the measures $\Gamma_{N}:=(\delta_{N})_{*}e^{-\beta_{N}H^{(N)}}$
on $\mathcal{M}_{1}(X)$ satisfy, as $N\rightarrow\infty,$ a LDP
with \emph{speed} $\beta_{N}N$ and good \emph{rate functional} 
\begin{equation}
F_{\beta}(\mu)=E(\mu)+\frac{1}{\beta}D_{dV}(\mu)\label{eq:free energy func theorem gibbs intro-1}
\end{equation}

\end{thm}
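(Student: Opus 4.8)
The plan is to follow the same Bryc-plus-Gibbs scheme used in the proof of Theorem \ref{thm:h1 and h2 pos beta intro}, but replacing the role of hypothesis H2 (which controls the liminf of the mean energy for measures on $X^N$) by a combination of the Gamma-convergence lower bound (the first alternative in the definition of Gamma-convergence, which is the content of the hypothesis in the theorem) and the quasi-superharmonicity. The key point of the argument is that quasi-superharmonicity is exactly the structural input that lets one upgrade a liminf bound for \emph{point configurations} (the H2' type bound, which follows here from the Gamma-convergence lower bound applied to normalized $S_N$-orbits of Dirac measures, exactly as at the start of Section \ref{sub:Relations-to-Gamma in text}) into the full LDP without H1 — i.e. without knowing that $E^{(N)}(\mu^{\otimes N})\to E(\mu)$.

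Concretely, I would first note, using Lemma \ref{lem:(Bryc+Gibbs):-Suppose-that}, that it suffices to show pointwise convergence of the Legendre--Fenchel transforms $f_N$ of the mean free energies $F^{(N)}_{\beta_N}$, and by the argument in Theorem \ref{thm:H2'} (replacing $H^{(N)}$ by $H^{(N)}+N\delta_N^*(\Phi)$, which remains uniformly quasi-superharmonic since $\Delta_{x_1}(\delta_N^*\Phi) $ is controlled and in any case $\Phi$ contributes no $x_1$-Laplacian beyond a bounded term after rescaling — or simply because the added term is a sum over $i$ of $\Phi$ evaluated at the empirical measure, whose $x_1$-Laplacian is $O(1/N)\cdot O(1)$) we may as well take $\Phi=0$. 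The upper bound on $-\tfrac{1}{N\beta_N}\log Z_{N,\beta_N}$ is the easy direction: plugging the product measure $\mu^{\otimes N}$ into the Gibbs variational principle \eqref{eq:Gibbs} and using the Gamma-convergence \emph{recovery} sequence together with the product property \eqref{eq:entropi for product} of the entropy gives $\limsup -\tfrac{1}{N\beta_N}\log Z \le F_\beta(\mu)$ for every $\mu$ with $D_{dV}(\mu)<\infty$, hence $\le \inf F_\beta$ (using a diagonal/density argument when $\beta=\infty$, exactly as in Theorem \ref{thm:h1 and h2 pos beta intro}).

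For the lower bound — which is where quasi-superharmonicity enters and is the main obstacle — I would argue as in \cite{berm8}: fix $\mu\in\mathcal{P}(X)$ and a small ball $B_\epsilon(\mu)$, and try to bound $\int_{B_\epsilon(\mu)} e^{-\beta_N H^{(N)}}\,dV^{\otimes N}$ from below. The naive estimate $-\inf_{B_\epsilon(\mu)} E^{(N)} + \tfrac{1}{\beta_N N}\log (dV^{\otimes N})(B_\epsilon(\mu))$ would give what we want by H2' (i.e. by the Gamma-lower bound) combined with Sanov's theorem — provided $\inf_{B_\epsilon(\mu)} E^{(N)}$ is \emph{not much smaller} than the value at a typical configuration, which is precisely what fails without further hypotheses. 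Quasi-superharmonicity fixes this: since each $\Delta_{x_1} H^{(N)}\le C$, a sub-mean-value inequality on small balls shows that $H^{(N)}(x^{(N)})$ cannot drop below its average over a small perturbation of $x^{(N)}$ by more than a controlled amount, so that regularizing a near-minimal configuration (mollifying each coordinate at scale $\sim 1/N^{1/D}$, or integrating against a product of small balls and using the sub-mean-value property coordinate by coordinate) produces a set of configurations of positive $dV^{\otimes N}$-measure on which $\tfrac{1}{N}H^{(N)}$ is essentially as small as the infimum. Combining this with Sanov's theorem for the reference measure then yields $\liminf -\tfrac{1}{N\beta_N}\log Z \ge \inf_{\mathcal P(X)} F_\beta$, and then Bryc's lemma (via Lemma \ref{lem:(Bryc+Gibbs):-Suppose-that}) together with the identification $F_\beta^{**}=F_\beta$ (lower semicontinuity of $F_\beta$ follows from Lemma \ref{lem:gamma is lsc} applied to the Gamma-convergence of $\tfrac1N H^{(N)}$) gives the LDP with rate functional $F_\beta$ as in \eqref{eq:free energy func theorem gibbs intro-1}. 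The case $\beta=\infty$ is handled by the same diagonal argument as before, now with $D_{dV}(\mu)<\infty$ ensuring nontrivial reference mass near $\mu$. The genuinely delicate step — and the reason this is quoted as a separate theorem from \cite{berm8} rather than reproved here — is making the mollification/sub-mean-value argument precise while keeping the entropy cost $\tfrac{1}{\beta_N N}\log(dV^{\otimes N})(\cdot)$ under control as $\beta_N\to\infty$, which is exactly the content of the superharmonicity exploitation in \cite{berm8}.
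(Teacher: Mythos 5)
The paper does not actually prove this theorem: it is imported from \cite{berm8}, and the only indication of the argument given here is the remark that the case $\beta=\infty$ is ``not hard to see'', while for $\beta<\infty$ the proof ``hinges on a submean inequality for quasi-subharmonic functions with a distortion factor which is subexponential in the dimension''. Measured against that, your sketch has two genuine gaps. First, your ``easy direction'' (the upper bound on $-\frac{1}{N\beta_N}\log Z_{N}[\Phi]$) is obtained by plugging $\mu^{\otimes N}$ into the Gibbs variational principle and invoking ``the Gamma-convergence recovery sequence together with the product property of the entropy''. But knowing that $E^{(N)}(\mu^{\otimes N})\rightarrow E(\mu)$ is exactly hypothesis H1, which is \emph{not} assumed here; avoiding H1 is the whole point of this theorem (the paper stresses that H1 is open in the complex-geometric setting motivating it). The recovery sequences furnished by the Gamma-convergence hypothesis are point configurations $x^{(N)}\in X^{N}$, whose mean entropy $D^{(N)}$ is infinite, so they cannot be inserted into the Gibbs variational principle to produce the $\frac{1}{\beta}D_{dV}(\mu)$ term; conflating them with the product measures $\mu^{\otimes N}$ is a gap, not a shortcut.

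Second, the step where you do use superharmonicity --- fattening a near-optimal configuration by averaging over coordinate balls of radius $r$ (or $\sim N^{-1/D}$) --- does not recover the stated rate functional at finite $\beta$: the entropy cost of such a product of balls is of order $\log(1/\mathrm{vol}(B_{r}))$ per particle, which has nothing to do with $D_{dV}(\mu)$ and does not disappear when divided by $\beta_{N}\rightarrow\beta<\infty$; it is harmless only when $\beta_{N}\rightarrow\infty$. So the scheme you describe proves at best the $\beta=\infty$ case, which the paper calls the easy one, while the genuinely hard finite-$\beta$ case is precisely where your construction loses the entropy term; your closing remark that the delicate issue is controlling the entropy cost ``as $\beta_{N}\rightarrow\infty$'' has the two cases inverted. (Also, your ``naive estimate'' for a lower bound on $\int_{B_{\epsilon}(\mu)}e^{-\beta_{N}H^{(N)}}dV^{\otimes N}$ uses $-\inf_{B_{\epsilon}(\mu)}E^{(N)}$, which is the estimate relevant for an upper bound.) Finally, the ingredient the paper singles out as the crux --- a submean inequality on $X^{N}$ whose distortion constant is subexponential in the dimension $ND$, needed because per-variable submean constants iterated over $N$ coordinates can pile up exponentially --- does not appear in your proposal; without it, even the energy side of your ball-averaging step is not justified uniformly in $N$. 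In short, you correctly identify the Bryc--Gibbs framework and that quasi-superharmonicity must substitute for H1, but the argument as written does not yield the theorem, and the missing pieces are exactly those for which the paper defers to \cite{berm8}.
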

This is not hard to see when $\beta=\infty,$ but for $\beta<\infty$
the proof hinges on a submean inequality for quasi-subharmonic functions
with a distortion factor which is subexponential in the dimension,
proved in \cite{berm8}.

\end{document}